\documentclass{article}
\usepackage{geometry}
\usepackage[nospace,noadjust]{cite}
\usepackage{amsmath,amssymb,amsfonts,times}
\usepackage{graphicx}
\usepackage{textcomp}
\usepackage{xcolor}
\usepackage{subfigure}
\usepackage{epsfig}
\usepackage{multirow}
\usepackage{algorithm}
\usepackage{algorithmicx}
\usepackage{algpseudocode}
\usepackage{array}
\usepackage{epstopdf}
\usepackage{color}
\usepackage{diagbox}
\usepackage{bm}
\usepackage{bbm}
\usepackage{mathrsfs}
\usepackage{tcolorbox}
\usepackage{mathtools}
\usepackage{mdframed}
\usepackage{pdfsync}

\usepackage{url}
\usepackage[hidelinks]{hyperref}
\usepackage{amsmath,amsthm,amssymb,amsbsy}
\usepackage{paralist}
\usepackage{xcolor}
\usepackage{color}
\usepackage{comment}
\usepackage{multirow}
\usepackage[toc, page]{appendix}

\usepackage{fancyhdr}
\usepackage{cite}
\usepackage{cleveref}

\newtheorem{lemma}{Lemma}

\newtheorem{theorem}{Theorem}

\theoremstyle{remark}

\theoremstyle{problem}

\newcommand{\R}{\mathbb{R}}
\def \real    { \mathbb{R} }

\newcommand{\C}{\mathbb{C}}

\newcommand{\e}{\begin{equation}}
\newcommand{\ee}{\end{equation}}
\newcommand{\en}{\begin{equation*}}
\newcommand{\een}{\end{equation*}}
\newcommand{\eqn}{\begin{eqnarray}}
\newcommand{\eeqn}{\end{eqnarray}}
\newcommand{\bmat}{\begin{bmatrix}}
\newcommand{\emat}{\end{bmatrix}}
\renewcommand{\Re}[1]{\operatorname{Re}\left\{#1\right\}}
\renewcommand{\Im}[1]{\operatorname{Im}\left\{#1\right\}}

\DeclareMathAlphabet\mathbfcal{OMS}{cmsy}{b}{n}

\newcommand{\E}{\operatorname{\mathbb{E}}}

\newcommand{\vct}[1]{\boldsymbol{#1}}
\newcommand{\mtx}[1]{\boldsymbol{#1}}

\newcommand{\trace}{\operatorname{trace}}

\newcommand{\sign}{\operatorname{sign}}

\def \vec       {\operatorname*{vec}}
\newcommand{\wh}{\widehat}

\newcommand{\wt}{\widetilde}
\newcommand{\ol}{\overline}

\newcommand{\calB}{\mathcal{B}}

\newcommand{\calG}{\mathcal{G}}

\newcommand{\calN}{\mathcal{N}}

\newcommand{\calP}{\mathcal{P}}

\newcommand{\va}{\vct{a}}
\newcommand{\vb}{\vct{b}}

\newcommand{\vd}{\vct{d}}
\newcommand{\ve}{\vct{e}}
\newcommand{\vf}{\vct{f}}

\newcommand{\vw}{\vct{w}}
\newcommand{\vx}{\vct{x}}
\newcommand{\vy}{\vct{y}}
\newcommand{\vz}{\vct{z}}

\newcommand{\vtheta}{\vct{\theta}}

\newcommand{\vphi}{\vct{\phi}}

\newcommand{\vrho}{\vct{\rho}}

\newcommand{\mA}{\mtx{A}}
\newcommand{\mB}{\mtx{B}}

\newcommand{\mH}{\mtx{H}}

\newcommand{\mK}{\mtx{K}}

\newcommand{\mO}{\mtx{O}}
\newcommand{\mP}{\mtx{P}}
\newcommand{\mQ}{\mtx{Q}}

\newcommand{\mU}{\mtx{U}}
\newcommand{\mV}{\mtx{V}}
\newcommand{\mW}{\mtx{W}}
\newcommand{\mX}{\mtx{X}}
\newcommand{\mY}{\mtx{Y}}
\newcommand{\mZ}{\mtx{Z}}

\newcommand{\mTheta}{\mtx{\Theta}}

\newcommand{\mId}{{\bf I}}

\graphicspath{{./figs/}}

\newlength{\imgwidth}
\newboolean{twoColVersion}
\setboolean{twoColVersion}{false}
\newcommand{\twoCol}[2]{\ifthenelse{\boolean{twoColVersion}} {#1} {#2} }

\usepackage{ulem}

\usepackage{mathtools}

\title{\LARGE \bf Generalization of Transformer-Based Neural Quantum States via In-Context Learning}

\author{Zhen Qin, Qing Qu  and Alfred O. Hero III \thanks{ZQ (e-mail: zhenqin@umich.edu) is with the Michigan Institute for Computational Discovery and Engineering, Department of Electrical Engineering and Computer Science and Department of Statistics, University of Michigan, Ann Arbor, MI 48109 USA; QQ (email: qingqu@umich.edu) is with the Department of Electrical Engineering and Computer Science, University of Michigan, Ann Arbor, MI 48109 USA; AOH (email: hero@eecs.umich.edu) is with the Department of Electrical Engineering and Computer Science, University of Michigan, Ann Arbor, MI 48109 USA.}}

\begin{document}

\maketitle


\begin{abstract}

Neural quantum states based on modern deep learning architectures have emerged as powerful representations for quantum many-body systems. In particular, Transformer-based neural quantum states provide expressive models capable of capturing long-range correlations, and their empirical generalization performance has recently been demonstrated. However, a theoretical understanding of their generalization behavior remains largely unexplored.  In this paper, we develop a theoretical framework to analyze the generalization properties of Transformer-based neural quantum states under in-context learning. We establish a rigorous inference-time generalization error bound in terms of mean squared error (MSE), showing that the pointwise prediction error decreases inversely with both the number of in-context examples and the depth of the Transformer. We further show that the Transformer depth required to achieve this guarantee scales only linearly with the system size--namely, the number of particles in continuous systems or the number of qudits in discrete systems. Building on this result, we extend our analysis to full quantum states formulated as rank-one density operators, and derive MSE-based generalization bounds over both continuous and discrete domains under physical constraints. Finally, numerical simulations corroborate our theoretical analysis.
\end{abstract}

\section{Introduction}
\label{intro}

Quantum computing holds the promise of addressing computational tasks that are widely believed to be intractable for classical computers. Prominent examples include the simulation of quantum chemistry \cite{levine2009quantum}, the optimization of complex systems \cite{abel2022quantum}, and the preparation of ground states of many-body quantum systems \cite{poulin2009preparing}. A central obstacle in classical simulation arises from the exponential growth of the Hilbert space dimension with system size, which rapidly renders exact representations of quantum states computationally prohibitive. To mitigate this challenge, a variety of variational approaches have been developed, in which the quantum state is approximated within a restricted ansatz family characterized by a set of variational parameters. These parameters are optimized to provide the best possible approximation to the target state. Among the most widely used frameworks are tensor-network-based methods, including matrix product state (MPS) and matrix product operator (MPO) \cite{eisert2010colloquium,noh2020efficient,qin2024quantum,qin2024sample}, as well as projected entangled pair state (PEPS) and projected entangled pair operator (PEPO) \cite{alhambra2021locally,cirac2021matrix,anshu2024survey,qin2026quantum}. These representations exploit the entanglement structure of quantum states and have proven highly effective for low-dimensional systems or states with limited entanglement. However, tensor-network approaches typically encounter significant challenges when applied to systems with strong long-range correlations or highly entangled states, where the required bond dimensions grow rapidly. An alternative paradigm for simulating quantum many-body systems is provided by quantum Monte Carlo (QMC) methods \cite{ceperley1986quantum,foulkes2001quantum,becca2017quantum}. Despite their success in many contexts, QMC algorithms often suffer from the sign problem \cite{troyer2005computational,pan2022sign}, which leads to an exponential scaling of the computational cost. Moreover, their convergence can become extremely slow for large systems near criticality or in other challenging regimes of statistical physics \cite{carrasquilla2020machine}.

Neural networks have emerged as powerful variational representations for quantum many-body systems, giving rise to the framework of neural quantum states (NQS). These models provide compact and expressive parameterizations capable of capturing the intricate correlations present in physically relevant quantum states. A broad range of neural architectures has been explored for this purpose, including restricted Boltzmann machines \cite{torlai2018neural,torlai2019integrating}, feedforward neural networks \cite{cai2018approximating}, recurrent neural networks \cite{morawetz2021u,iouchtchenko2023neural}, convolutional neural networks \cite{fu2024lattice}, variational autoencoders \cite{rocchetto2018learning}, and Transformer-based models \cite{cha2021attention,ma2023tomography}. These developments demonstrate that neural networks can effectively exploit structural properties of quantum states, motivating the exploration of increasingly expressive architectures.

Among modern neural architectures, Transformer models \cite{vaswani2017attention} provide a particularly powerful framework for modeling complex high-dimensional dependencies. By relying on self-attention mechanisms, Transformers explicitly capture interactions between all elements of an input sequence, enabling flexible representations of long-range and nonlocal correlations. Owing to these capabilities, Transformer architectures have achieved remarkable success across a wide range of applications, including natural language processing \cite{radford2019language,brown2020language}, recommendation systems \cite{zhou2018deep,chen2019behavior}, reinforcement learning \cite{chen2021decision,janner2021offline}, computer vision \cite{dosovitskiy2020image}, and multimodal learning \cite{tsai2019multimodal}. These properties also make Transformers particularly attractive for representing quantum many-body states, where long-range correlations play a central role. Consequently, Transformer-based approaches have recently been developed for representing quantum states in both discrete configuration spaces \cite{luo2022autoregressive,zhang2023transformer,wu2023nnqs,shang2023solving,viteritti2023transformer,roca2024transformer,rende2025queries,chen2025convolutional,shang2025solving} and continuous coordinate spaces \cite{ruggeri2018nonlinear,han2019solving,pfau2020ab,hermann2020deep,adams2021variational,von2022self,li2022fermionic,pescia2022neural,hermann2023ab,medvidovic2024neural}. Beyond wavefunction representations, Transformer models have also been applied to related quantum information tasks such as quantum chemistry simulations \cite{von2022self,shang2023solving,wu2023nnqs} and quantum state tomography \cite{cha2021attention,zhong2022quantum,ma2025tomography}.

A central capability of neural networks is their ability to generalize to unseen data that share structural similarities with the training distribution. In the context of neural quantum states, the generalization performance of Transformer-based architectures has been empirically demonstrated in \cite{zhang2023transformer}. Despite these promising results, a theoretical understanding of the generalization properties of Transformer-based neural quantum states remains largely absent. In particular, it is unclear how to quantitatively characterize the associated generalization error. A distinctive property of Transformer models is their ability to perform in-context learning (ICL) \cite{min2021metaicl}. Through ICL, a model can adapt to previously unseen tasks at inference time by conditioning on a sequence of input-output examples, without updating its parameters. Compared with conventional inference procedures, this mechanism allows the model to exploit a small amount of task-specific data during testing, potentially improving generalization performance. Although recent theoretical works have begun to analyze the mechanisms underlying ICL \cite{garg2022can,akyurek2022learning,von2023transformers,zhang2024trained,huang2023context,chen2024training,yang2024context,zhang2025training,mahankali2023one,ahn2023transformers,li2024fine,li2025gating,fu2024transformers,li2025transformers,qin2026learning}, these results do not directly yield a theoretical characterization of the generalization behavior of Transformer-based neural quantum states. Addressing this gap naturally leads to the following fundamental question:

\vspace{0.2cm}

\smallskip
\begin{mdframed}[linewidth=0.3pt, leftmargin=0mm, rightmargin=0mm,
  innerleftmargin=1mm, innerrightmargin=1mm, innertopmargin=1mm, innerbottommargin=1mm]
\centering {\bf Question}: How can we theoretically characterize the generalization error of Transformer-based neural quantum states under in-context learning?
\end{mdframed}

\vspace{0.2cm}

In this paper, we provide a theoretical characterization of the generalization behavior of Transformer-based neural quantum states under ICL by establishing a rigorous inference-time generalization error bound in terms of mean squared error (MSE). We begin by analyzing the pointwise generalization error of Transformer-based neural quantum states without physical constraints, focusing on its dependence on both the number of in-context examples and the Transformer depth. Specifically, we establish the existence of a Transformer architecture whose pointwise prediction error scales inversely with both the number of in-context examples and the network depth, thereby revealing a trade-off between architectural depth and the amount of contextual information available at inference time. Notably, we show that the required Transformer depth scales linearly with the input domain dimension $nD$ (for the continuous setting with $n$ particles in $\R^D$) or $nd$ (for the discrete setting with $n$ qudits of local dimension $d$). Building on this pointwise result, we extend our analysis to the full quantum state setting, formulated in terms of rank-one density operators, and establish MSE-based generalization bounds over both continuous and discrete domains for Transformer-based neural quantum states under physical constraints. These bounds preserve the structure of the pointwise analysis, with the additional scaling factors absorbed into the normalization of the MSE over the continuous and discrete quantum state spaces. Finally, numerical simulations corroborate our theoretical predictions.

We next briefly comment on the proof strategy and its relation to prior work. Our analysis builds on the ICL framework of \cite{li2025transformers}, but is developed for the setting of Transformer-based neural quantum states, where the goal is to characterize inference-time generalization of quantum wavefunctions/states and their induced density operators. Compared with \cite{li2025transformers}, our setting introduces several structural changes, including relaxed moment  assumptions, general bounded-energy continuous domains, and discrete many-body quantum configurations, as well as coupled predictions of normalized quantum wavefunctions/states and their induced density operators. These extensions require non-trivial modifications of the original arguments, while retaining part of the underlying analytical framework. For clarity, the detailed technical developments are deferred to \Cref{subsection of physical representation of NQS} and Appendix~\ref{sec: proof of main theorem}.

{\bf Notation}: We use bold capital letters (e.g., $\mY$) to denote matrices,  bold lowercase letters (e.g., $\vy$) to denote vectors, and italic letters (e.g., $y$) to denote scalar quantities.  Elements of matrices are denoted in parentheses, as in Matlab notation. For example, $\mY(s_1, s_2)$ denotes the element in position $(s_1, s_2)$ of the matrix $\mY$. $\|\vx\|_2$ and $\|\vx\|_1$ denote the $\ell_2$- and $\ell_1$- norm of $\vx$. For a positive integer $K$, $[K]$ denotes the set $\{1,\dots, K \}$.  For two positive quantities $a,b\in \real$,  $b = O(a)$ and $b\lesssim a$ mean $b\leq c a$ for some universal constant $c$; likewise, $b = \Omega(a)$ and $b \gtrsim a$  represent $b\ge ca$ for some universal constant $c$.

\section{Basic Concepts for Transformer-Based Neural Quantum States}

\subsection{Neural Quantum States}

In this section, we introduce the basic formulation of neural quantum states (NQS). We begin with quantum systems possessing continuous degrees of freedom. A continuous $n$-particle quantum wavefunction is defined as
\begin{eqnarray}
\label{eq:continuous_wavefunction}
\psi(\vx_1,\vx_2,\dots,\vx_n)  : (\R^{D})^n\to \C,
\end{eqnarray}
where $\vx_i \in \R^D$ denotes the position (or another continuous degree of freedom) of the $i$-th particle. Consequently, the total input dimension of the wavefunction is $nD$. This formulation encompasses a wide range of physical systems, including atoms in three-dimensional space ($D=3$), electrons in two-dimensional materials such as graphene ($D=2$), and particles confined to one-dimensional chains or quantum wires ($D=1$). The wavefunction outputs a complex-valued scalar that represents the probability amplitude associated with a given configuration $(\vx_1,\dots,\vx_n)$. To ensure a valid probabilistic interpretation, the wavefunction is normalized according to
\begin{eqnarray}
\label{eq:continuous_wavefunction normalization}
\int_{\R^D}\!\cdots\!\int_{\R^D}
|\psi(\vx_1,\dots,\vx_n)|^2
\, d\vx_1 \cdots d\vx_n = 1.
\end{eqnarray}

We next consider quantum systems with discrete degrees of freedom. In this case, a quantum state is described by a wavefunction
\begin{eqnarray}
\label{eq:discrete_wavefunction}
\psi(x_1,x_2,\dots,x_n)  : (\{0,\dots, d-1\})^n\to \C,
\end{eqnarray}
where $x_i$ labels the local basis state of the $i$-th site, and $d$ denotes the local Hilbert space dimension. For example, $d=2$ corresponds to spin systems or qubits, while larger values of $d$ arise in higher-spin or multi-level systems. The choice of the computational basis $\{0,\dots,d-1\}$ is made for notational convenience; alternative local bases can be adopted without loss of generality. The normalization condition in the discrete setting is given by
\begin{eqnarray}
\label{eq:discrete_wavefunction normalization}
\sum_{x_1}\cdots\sum_{x_n}
|\psi(x_1,\dots,x_n)|^2 = 1.
\end{eqnarray}
To streamline notation, we will use the unified expression $\psi(\vx_1,\dots,\vx_n)$ throughout the remainder of the paper. Whenever it is necessary to distinguish between continuous and discrete degrees of freedom, we will explicitly specify the corresponding configuration space and measure.

The central idea of NQSs is to employ neural networks as variational ans\"atze for quantum wavefunctions. Specifically, a neural network parameterized by $\mTheta$ is used to represent a complex-valued wavefunction $\psi_{\mTheta}(\vx_1,\dots,\vx_n)$ which maps a configuration $(\vx_1,\dots,\vx_n)$ to the corresponding probability amplitude. In this framework, the network input consists of configurations $(\vx_1,\dots,\vx_n)$, while the output is taken to be the real and imaginary components of the wavefunction, $[ \Re{\psi_{\mTheta}(\vx_1,\dots,\vx_n)} \ \Im{\psi_{\mTheta}(\vx_1,\dots,\vx_n)}  ]^\top$. In this work, we adopt the real/imaginary decomposition for analytical simplicity. We note that the choice of representation is not merely a matter of convention, as different parameterizations may induce distinct inductive biases depending on the symmetry structure of the target wavefunction. In particular, certain symmetry classes may be more naturally aligned with either the Cartesian (real/imaginary) or amplitude-phase parameterization, which can affect both representation efficiency and optimization behavior. A systematic study of these effects across different symmetry regimes is left to future work.

The feasibility of representing quantum states using neural networks stems from their expressive power. Classical universal approximation results state that neural networks with at least one hidden layer, sufficiently many parameters, and a nonlinear activation function can approximate arbitrary continuous functions to any desired accuracy \cite{barron1993universal}. This expressive capability makes neural networks promising candidates for constructing variational wavefunctions $\psi_{\mTheta}(\vx_1,\dots,\vx_n)$ that closely approximate the exact quantum state $\psi(\vx_1,\dots,\vx_n)$. A natural question that follows is which neural network architectures are most suitable for this task.

\subsection{Transformer}

A central challenge in neural quantum state modeling is to efficiently represent complex many-body wavefunctions with strong, nonlocal correlations among their degrees of freedom. In particular, quantum systems often exhibit long-range and highly structured dependencies that are difficult to capture using architectures based solely on local connectivity.  In this work, we adopt the Transformer architecture as the backbone of our neural quantum state models. The Transformer is a sequence-processing architecture built upon attention mechanisms and feed-forward networks, and is especially well suited for modeling high-dimensional dependencies and long-range correlations. Its attention mechanism enables direct interactions between arbitrary pairs of input elements, providing a natural inductive bias for representing correlated quantum configurations. These properties make Transformers a compelling choice for approximating complex quantum many-body wavefunctions.

We now briefly review the Transformer architecture. We begin by introducing the attention layer, which constitutes the core computational primitive of the Transformer, and then describe the feed-forward layer. Finally, we explain how these components are composed into a multi-layer Transformer through residual connections.

\begin{itemize}

\item{\textbf{Attention layer.} \ } Let $\mX \in \R^{E \times (N+1)}$ denote the input matrix, where $E$ is the embedding dimension and $N+1$ is the sequence length. The (self-)attention operator is defined as
\begin{eqnarray}
    \label{The definition of attention layer}
    \textup{attn}(\mX; \mQ, \mK, \mV ) = \frac{1}{N} \mV \mX \sigma_\textup{attn}(\mX^\top \mQ^\top \mK \mX  ),
\end{eqnarray}
where $\mQ, \mK, \mV \in \R^{E \times E}$ are learnable parameter matrices referred to as the query, key, and value matrices, respectively. This operator aggregates information across all input positions by computing interaction weights between embeddings, thereby enabling the model to capture global dependencies. The activation function $\sigma_\textup{attn}$ is chosen as the sigmoid function
\begin{eqnarray}
    \label{definition of sigmoid function}
    \sigma_\textup{attn}(x ) = \frac{e^x}{e^x +1},
\end{eqnarray}
which is applied entrywise. The normalization factor $1/N$ ensures appropriate scaling with respect to the sequence length.

To further enhance expressivity, the Transformer employs a multi-head attention mechanism. Instead of computing a single attention map, multiple attention heads are applied in parallel, each with its own set of parameters. Denoting by $\textup{Attn}_{\mTheta_\textup{attn}}(\cdot)$ a multi-head attention layer parameterized by
\[
\mTheta_\textup{attn} = \{ \mQ_m, \mK_m, \mV_m \}_{m=1}^M \subset \R^{E \times E},
\]
the output of the attention layer with $M$ heads is given by
\begin{eqnarray}
    \label{definition of attention layer output with multiheads}
    \textup{Attn}_{\mTheta_\textup{attn}}(\mX) = \mX + \sum_{m=1}^M \textup{attn}(\mX; \mQ_m, \mK_m, \mV_m ).
\end{eqnarray}
The residual connection preserves the original input while allowing each attention head to contribute additional features. This design improves optimization stability and facilitates the propagation of information across layers.

\item{\textbf{Feed-forward layer.} \ } While the attention layer is responsible for modeling interactions across different positions in the sequence, the feed-forward layer applies a nonlinear transformation independently to each embedding. Given an input matrix $\mX$, the feed-forward layer is defined as
\begin{eqnarray}
    \label{definition of MLP layer}
    \textup{FF}_{\mTheta_\textup{ff}}(\mX) = \mX + \mU\sigma_{\textup{ff}}(\mW \mX),
\end{eqnarray}
where $\mTheta_\textup{ff} = \{ \mU, \mW \} \in \R^{E \times E}$ denotes the learnable weight matrices. The activation function $\sigma_{\textup{ff}}$ is chosen to be the ReLU function,
\begin{eqnarray}
    \label{definition of Relu function}
    \sigma_{\textup{ff}}(x ) = x \mathbf{1}_{x>0},
\end{eqnarray}
which is applied entrywise. As with the attention layer, a residual connection is incorporated, ensuring that the feed-forward transformation refines rather than overwrites the existing representation.

\end{itemize}

\paragraph{Multi-layer transformers.}
By stacking attention and feed-forward layers, the Transformer builds hierarchical representations of the input sequence. Starting from the initial embedding matrix $\mH^{(0)}  \in \R^{E \times (N+1)}$, which is specified a priori, a Transformer with $L$ layers computes successive hidden representations according to
\begin{eqnarray}
    \label{definition of multi-layer transformer}
    \mH^{(\ell)}=\textup{FF}_{\mTheta_\textup{ff}^{(\ell)}}( \textup{Attn}_{\mTheta_\textup{attn}^{(\ell)}}  ( \mH^{(\ell -1)} )), \ell = 1,\dots, L,
\end{eqnarray}
with the final output given by
\begin{eqnarray}
    \label{the output of multi-layer transformer}
    \textup{TF}_{\mTheta}(\mH^{(0)}) \coloneqq \mH^{(L)}.
\end{eqnarray}
Here, $\mTheta$ collects all parameters across layers,
\begin{eqnarray}
    \label{all transformer parameters}
    \mTheta = \{ \{ \mQ_m^{(\ell)}, \mK_m^{(\ell)}, \mV_m^{(\ell)} \}_{m=1}^M, \mU^{(\ell)}, \mW^{(\ell)}    \}, \ \ell = 1,\dots, L.
\end{eqnarray}
This layered structure allows the model to progressively refine its internal representation, capturing increasingly complex dependencies among the input components. In addition,we note that the parameters $\mTheta$ of the Transformer are learned during an offline training phase by minimizing a task-dependent objective function defined over a collection of training configurations. In this work, we adopt a first-order optimization scheme and update $\mTheta$ using the Adam optimizer~\cite{kingma2014adam}, which adaptively adjusts the step size for each parameter. Once trained, the resulting Transformer defines a fixed neural quantum state representation, which is subsequently evaluated at test time.

Finally, the Transformer produces its prediction by applying a readout operation to the output of the final layer. In particular, the prediction $\wh \vy_{N+1} = \begin{bmatrix} \wh \vy_{N+1}(1) & \wh \vy_{N+1}(2)\end{bmatrix}^\top\in\R^{2}$ corresponding to the $(N+1)$-th input position is extracted from the last column of $\textup{TF}_{\mTheta}(\mH^{(0)})$ as
\begin{equation}
\label{output of transformer neural quantum states}
\begin{split}
    &\wh \vy_{N+1}(1)  =  \textup{ReadOut}(\textup{TF}_{\mTheta}( \mH^{(0)})) = [ \mH^{(L)} ]_{E-1,N+1}, \\
    &\wh \vy_{N+1}(2)  =  \textup{ReadOut}(\textup{TF}_{\mTheta}( \mH^{(0)})) = [ \mH^{(L)} ]_{E,N+1}.
\end{split}
\end{equation}

\section{Generalization of Transformer-Based Neural Quantum States}

\subsection{In-context Learning}

Once a Transformer has been trained to represent a neural quantum state from a given class of quantum wavefunctions, a natural next question is how to assess its generalization ability to unseen configurations that nevertheless share similar structural properties with the training data. In this section, we introduce ICL as a principled framework for evaluating such generalization behavior.

We emphasize that the ICL mechanisms studied in this work operate exclusively during inference and do not involve any modification of the trained parameters $\mTheta$. Instead, ICL probes the model's ability to utilize contextual information provided at test time. Specifically, given a set of input-output pairs $\{(\vx_{i,1}, \dots,\vx_{i,n}), \psi(\vx_{i,1}, \dots,\vx_{i,n})\}_{i=1}^{N}$ serving as context, the objective of ICL is to use the fixed Transformer parameters $\mTheta$ to predict the wavefunction value $\psi((\vx_{N+1,1}, \dots,\vx_{N+1,n}))$ for a new query configuration $(\vx_{N+1,1}, \dots,\vx_{N+1,n})$. To enable the application of the Transformer to complex-valued wavefunctions, we represent each data point $\psi(\vx_{i,1}, \dots,\vx_{i,n})$ as a two-dimensional real vector encoding its real and imaginary components. Concretely, we define
\begin{equation}
\label{each element in the d vector form}
\vd(\{\vx_{i,j}\}_{j=1}^{n} ) = \begin{bmatrix}
d_1(\vx_{i,1}, \dots,\vx_{i,n}) \\ d_2(\vx_{i,1}, \dots,\vx_{i,n})
\end{bmatrix} =
\begin{bmatrix}
\Re{\psi(\vx_{i,1}, \dots,\vx_{i,n})} \\ \Im{\psi(\vx_{i,1}, \dots,\vx_{i,n})}
\end{bmatrix}
\in\R^{2}.
\end{equation}
An illustration of the ICL framework is shown in Fig.~\ref{fig:summary_ICL}. Given a sequence of in-context examples followed by a query instance, the inputs are first embedded and combined with positional encodings. The resulting representations are then iteratively refined through multiple Transformer layers consisting of masked multi-head self-attention and feedforward layers with residual connections. The final hidden state corresponding to the query token is mapped back to the output space via an unembedding layer.
\begin{figure*}[!ht]
  \centering
  \includegraphics[width=12.5cm]{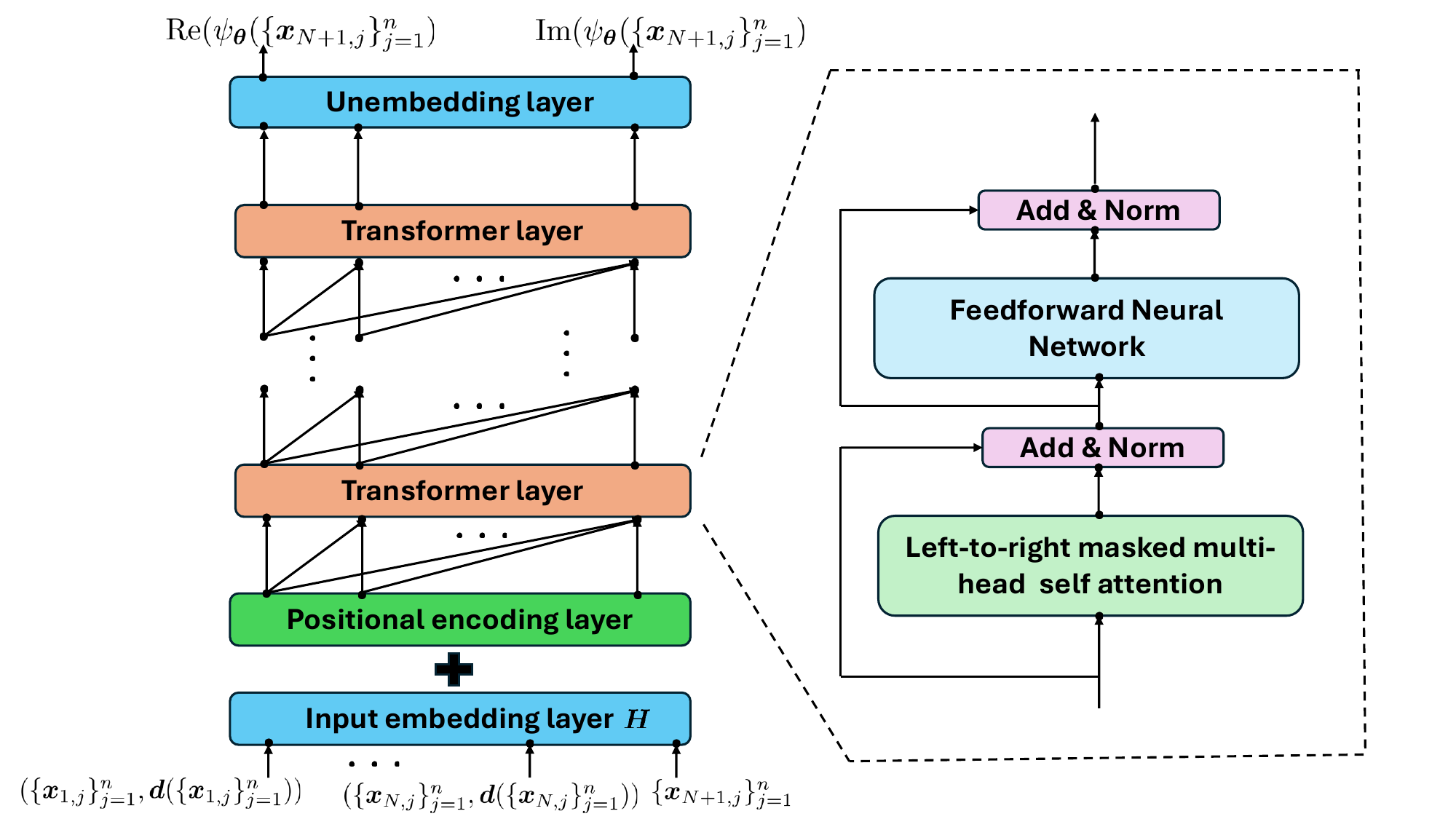}
  \caption{In-Context Learning Framework for Transformer-Based Neural Quantum States.}
  \label{fig:summary_ICL}
\end{figure*}

Finally, we introduce the embedding matrix $\mH\in\R^{E\times(N+1)}$ as follows:
\begin{eqnarray}
    \label{the construction of embedding vector}
    \mH := \begin{bmatrix}\vx_{1,1} & \cdots & \vx_{N,1} & \vx_{N+1,1} \\ \vdots &   & \vdots & \vdots  \\ \vx_{1,n} & \cdots & \vx_{N,n} & \vx_{N+1,n}  \\ \vd(\{\vx_{1,j}\}_{j=1}^{n} ) & \cdots & \vd(\{\vx_{N,j}\}_{j=1}^{n} ) & {\bm 0}  \end{bmatrix}\in \R^{E \times (N+1)},
\end{eqnarray}
where the first $N$ columns correspond to training samples, while the last column represents the query configuration. Since the target quantity associated with the query is unavailable during inference, the corresponding coordinates are initialized to zero.

\subsection{Generalization Analysis for Transformer-Based Neural Quantum States}
\label{subsection: pointwise generalization of NQS}

In this subsection, we formalize the in-context prediction procedure as a statistical learning problem and derive explicit generalization bounds for Transformer-based prediction of neural quantum states. As shown in \eqref{output of transformer neural quantum states}, during inference the Transformer defines a deterministic mapping
\begin{eqnarray}
\label{output of transformer neural quantum states true}
\widehat{\vy}_{N+1} = \textup{ReadOut}(\textup{TF}_{\mTheta}( \mH)),
\end{eqnarray}
where $\mH$ encodes the in-context sequence and the query configuration. For the theoretical analysis, we consider an augmented embedding matrix $\mH$ defined in \eqref{the construction of embedding vector each layer} of Appendix~\ref{Proof of transformer emulate proximal GD}. This formulation retains the original embedding structure in \eqref{the construction of embedding vector} while incorporating additional components required for the theoretical characterization. These auxiliary coordinates are introduced solely for the constructive proof of \Cref{thm:generalization_transformer_nqs}. They encode intermediate quantities arising in the layer-by-layer simulation of the optimization procedure, thereby enabling an explicit characterization of the Transformer's computation and facilitating the theoretical analysis. They do not provide additional information about the underlying quantum wavefunction/state and are not required in practical implementations. In addition, since the model parameters $\mTheta$ are fixed at inference, the generalization error depends on the context size $N$ and the structure of the input domain.

Our objective is to rigorously characterize the scaling of the expected prediction error as a function of $(i)$ the number of in-context examples $N$, $(ii)$ the architectural depth $L$ of the Transformer, and $(iii)$ the structural complexity of the input domain.
\begin{theorem}
\label{thm:generalization_transformer_nqs}
Suppose that the in-context samples and the query input are drawn i.i.d. from an underlying distribution supported on either the continuous domain $\{ (\vx_1,\dots,\vx_n): \sum_{i=1}^n \|\vx_i\|_2^2 \le B, \vx_i\in\R^{D}, i\in[n] \}$ or the discrete domain  $\{ (x_1,\dots,x_n): x_i \in \{0,\dots,d-1\},\ i \in [n]\}$ and that each component of $\vd(\cdot)$ belongs to a function class $\mathcal{G}$ whose elements have bounded first absolute moment on the corresponding domain.
Denote the transformer prediction on the query by $\widehat{\vy}_{N+1} \in \R^2$. Consider a transformer with $L$ layers. Let $M$ denote the number of attention heads, with $M\geq 8$. Assume further that
\begin{eqnarray}
    \label{Upper bound of the layers in the Transformer}
    \begin{cases}
    L\geq B, \ \frac{L}{\log L} \geq \frac{2nDB\sqrt{N}}{\sqrt{\log N}} & \textup{continuous domain},   \\
    L\ge d, \ \ \frac{L}{\log L} \geq \frac{2nd\sqrt{N}}{\sqrt{\log N}}, & \textup{discrete domain}.
    \end{cases}
\end{eqnarray}
Then, with high probability, the expected prediction error satisfies
\begin{eqnarray}
\label{Upper bound of the estimated output and ground truth output main paper simpified}
\E\left[\left\|\wh \vy_{N+1}- \vd(\{\vx_{N+1,j}\}_{j=1}^{n})\right\|_2^2\right] = O\left(\sqrt{\frac{\log N}{N}}\right),
\end{eqnarray}
where the input configuration $\{\vx_{N+1,j}\}_{j=1}^{n}$ is defined over either a continuous or discrete domain and and the constant hidden in the $O(\cdot)$ notation depends on the function class $\mathcal{G}$.
\end{theorem}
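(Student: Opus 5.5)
The plan follows the constructive route of \cite{li2025transformers}. I will exhibit explicit Transformer parameters $\mTheta$ that, layer by layer, emulate an optimization algorithm on the in-context data. Then I will bound the resulting statistical and optimization errors separately.

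\textbf{Step 1 (statistical model).} Each component $d_k$, $k=1,2$, of $\vd$ lies in $\calG$, and every $g\in\calG$ has bounded first absolute moment on the domain. I will represent $g$ through a Barron-type integral representation over ReLU ridge functions $\sigma_{\textup{ff}}(\vw^\top \vx - b)$. The weights $\vw$ range over a sphere in $\R^{nD}$ (or $\R^{nd}$ after one-hot encoding the discrete configurations). The offsets satisfy $|b|\le \sqrt{B}$ in the continuous case and $|b|\le d$ in the discrete case. Discretizing $(\vw,b)$ gives a dictionary of $K$ random or gridded features, and $d_k$ becomes approximately linear in these features, with coefficients of bounded $\ell_1$ norm. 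The approximation error is of order $K^{-1/2}$ times the moment bound. This step uses the boundedness $\sum_i\|\vx_i\|_2^2\le B$, respectively the discreteness, to control the feature magnitudes.

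\textbf{Step 2 (Transformer emulates proximal gradient descent).} Using the augmented embedding of Appendix~\ref{Proof of transformer emulate proximal GD}, I will show the following. The feed-forward layers compute the ReLU features of each column. The $M\ge 8$ sigmoid heads represent $x\mapsto x$ and products on a bounded range, via combinations of shifted sigmoids (sigmoid differences approximate ReLU, and four heads per sign give positive and negative parts for each of the two output coordinates). This lets each attention layer compute one step of proximal gradient descent for the $\ell_1$-regularized empirical least-squares problem $\min_{\vtheta}\frac{1}{2N}\sum_{i\le N}\|\vd_i-\vtheta^\top\vphi(\vx_i)\|_2^2+\lambda\|\vtheta\|_1$. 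In this step the $1/N$ normalization in \eqref{The definition of attention layer} matches the empirical average. Each layer also writes the current prediction $\vtheta_\ell^\top\vphi(\vx_{N+1})$ into the last two coordinates of the query column. Sigmoid-to-linear approximation errors must be shown to accumulate at most additively over $L$ layers. I expect this to be the main obstacle: the per-layer error must scale like $\log L/L$ relative to the feature scale $nDB$ (or $nd$). Choosing sigmoid scalings $\propto L$ gives such an error. I will try bounding $\sigma_{\textup{attn}}(cx)-\sigma_{\textup{attn}}(-cx)$ against $\sign$-type pieces with error $e^{-c\delta}$, and then optimize $c$, which produces the $\log L$.

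\textbf{Step 3 (error decomposition).} The total error splits into three parts. The first is the optimization error of $L$ steps of proximal GD on a convex, smooth-plus-$\ell_1$ objective, which is $O(1/L)$ in objective value. The second is the accumulated emulation error from Step 2. The third is the generalization gap of the $\ell_1$-constrained linear predictor. For the gap I will use Rademacher complexity of $\ell_1$-balls of features bounded by $O(nDB)$ (or $O(nd)$), giving $O(\sqrt{\log K/N})$ with high probability. Choosing $K\asymp N$ makes the approximation and gap terms $O(\sqrt{\log N/N})$. The hypotheses $L\ge B$ (resp.\ $L\ge d$) and $L/\log L\ge 2nDB\sqrt{N}/\sqrt{\log N}$ (resp.\ $2nd\sqrt{N}/\sqrt{\log N}$) are exactly what is needed to push the optimization and emulation terms, of order $nDB\log L/L$, below $\sqrt{\log N/N}$. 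Summing the two output coordinates then yields \eqref{Upper bound of the estimated output and ground truth output main paper simpified}, with constants depending on the moment bound of $\calG$.
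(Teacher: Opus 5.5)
Your skeleton (universal features, Lasso, inexact proximal gradient, layer-wise emulation, error decomposition) matches the paper's. However, the quantitative bookkeeping that should produce the depth hypothesis does not work, and $nDB$ and $\log L$ do not come from where you place them.

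In the paper, the term $nDB\log(BL)/L$ is the optimization error $P_kC_{g_k}^2/L$ of $(L-1)/2$ proximal-gradient steps with step size $\mu_k=1/(2P_k)$. The smoothness grows with the dictionary size, since $\|\vphi_k^i\|_2^2\le P_k+1$. This error is evaluated at $P_k\asymp B\log|\calN_{\epsilon_k}|$, where $\log|\calN_{\epsilon_k}|\asymp nD\log(3\sqrt{B}/\epsilon_k)$ and $\epsilon_k\lesssim B^{-1/2}(\sqrt{\log N/N}+P_k/L)$; the last condition is what produces $\log(BL)$.

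The covering of $\calG\times\calB$ is essential. The features are hard-wired into the weights, so one random dictionary must approximate every $g\in\calG$ at every point simultaneously, which the paper gets from Bernstein plus a union bound over the cover. Your Step 1 never makes the dictionary universal, and that is exactly where the linear dependence on $nD$ enters.

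With your choice $K\asymp N$, the same proximal-gradient bound gives an optimization term of order $NC_{g_k}^2/L$. The hypothesis $L/\log L\gtrsim nDB\sqrt{N/\log N}$ does not control that. Nor are the features of size $nDB$: unit-norm ridge features on $\calB$ are $O(\sqrt{B})$, and the paper's lie in $[0,1]$.

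Also, $C_{g_k}$ is a first Fourier moment. That gives an $\ell_1$-bounded representation over step-type ridge functions, not ReLU ridge functions, which need a second moment. The paper uses the steep ramp $\phi(\tau_k(\cdot))$, a difference of two ReLUs, and pays $\sqrt{B}/\tau_k+\tau_k\epsilon_k$. Its uniform rate is $B^{1/3}(\log|\calN_{\epsilon_k}|/P_k)^{1/3}$, not $K^{-1/2}$.

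Your emulation step is also misdiagnosed. A gradient step needs the sigmoid in its linear regime, not sign- or ReLU-type approximations. The paper's mechanism is:
\begin{itemize}
\item the linearization $x\approx 4\tau_k(\sigma_{\textup{attn}}(x/\tau_k)-\sigma_{\textup{attn}}(0))$, with error at most $2x^2/\tau_k$;
\item the products $\mu_k{\vphi_{k,i}}^\top\vb_k$ and $\mu_k g_{k,i}$ come from the bilinear form $\mX^\top\mQ^\top\mK\mX$;
\item zero-query/key heads cancel $\sigma_{\textup{attn}}(0)$;
\item the indicator $w_i$ plus an extra head removes the query column from the $(N+1)$-column sum;
\item soft-thresholding is done in the ReLU feed-forward layer.
\end{itemize}
Because the weights are unconstrained, $\tau_k$ is simply taken polynomially large in $(P_k,N,L)$. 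This makes the per-iteration error $\lesssim C_{g_k}/((L+P_k)P_kN)$ and the readout error $O(C_{g_k}^2/N^2)$. That per-iteration size is what is needed, because the inexact-PGM bound multiplies the error by $L+P_k$ and by iterate norms as large as $P_k\sqrt{N}C_{g_k}$. So this step has no trade-off in $L$ and produces no $\log L$. Your own plan is also inconsistent here: a per-layer error of order $\log L/L$, accumulated additively over $L$ layers, leaves $O(\log L)$, not $O(\log L/L)$.

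Finally, the tension you ran into with $K\asymp N$ is real, and the paper does not resolve it either. At $P_k\asymp B\log|\calN_{\epsilon_k}|$, the approximation term $C_{g_k}^2B^{2/3}(\log|\calN_{\epsilon_k}|/P_k)^{2/3}$ is only $O(C_{g_k}^2)$, and the paper drops it as ``bounded by a constant.'' Making it $O(\sqrt{\log N/N})$ requires $P_k\gtrsim B\log|\calN_{\epsilon_k}|(N/\log N)^{3/4}$. Balancing against $P_kC_{g_k}^2/L$ gives an error of order $(B\log|\calN_{\epsilon_k}|/L)^{2/5}$. That means a depth requirement of order $nDB\log(\cdot)(N/\log N)^{5/4}$, not the stated one. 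A rigorous version of either your argument or the paper's must therefore balance the dictionary size against $L$ explicitly.
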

The complete derivation is given in {Appendix}~\ref{sec: proof of main theorem}. The leading term $\sqrt{(\log N)/N}$ characterizes the estimation error induced by the finite number of in-context samples. This implies that increasing the context size $N$ directly improves generalization performance.  In addition, the depth condition in~\eqref{Upper bound of the layers in the Transformer} is a sufficient condition to guarantee the simplified generalization bound in~\eqref{Upper bound of the estimated output and ground truth output main paper simpified}. In the continuous and discrete settings, the sufficient depth prescribed by Theorem~\ref{thm:generalization_transformer_nqs} scales as $nDB\log(BL) \sqrt{N}/\sqrt{\log N}  $ and $nd\log(dL) \sqrt{N}/\sqrt{\log N}$, respectively.
When $D$, $d$ and $N$ are treated as fixed constants, the dominant dependence is linear in the input domain dimension $nD$ or $nd$ up to logarithmic factors in $L$. In the continuous case, the dependence on the energy constraint parameter $B$ reflects the growth of the effective input domain. A larger $B$ corresponds to a larger admissible region of quantum wavefunctions, which in turn requires increased model capacity--here quantified by depth $L$--to maintain a fixed approximation accuracy. Moreover, we emphasize that $M\geq 8$ is a sufficient condition for the above result. The proof in {Appendix}~\ref{sec: proof of main theorem} provides a more precise layer-wise attention-head configuration that is sufficient for constructing the Transformer used in our analysis.

To further clarify the dependence among the model parameters, we recall the original generalization error bounds derived in Appendix~\ref{sec: Composition of Approximations} before imposing the sufficient depth condition:
\begin{eqnarray}
\label{Upper bound of the estimated output and ground truth output main paper original}
&\!\!\!\!\!\!\!\!&\E\left[\left\|\wh \vy_{N+1}- \vd(\{\vx_{N+1,j}\}_{j=1}^{n})\right\|_2^2\right] \nonumber\\
&\!\!\!\!=\!\!\!\!&     \begin{cases}
    O\bigg(\sqrt{\frac{\log N}{N}} + \frac{nDB\max\{\log(BL), \log(B\sqrt{N}/\sqrt{\log N}) \}}{L} \bigg), & \textup{continuous domain},   \\
    O\bigg(\sqrt{\frac{\log N}{N}} + \frac{nd\max\{\log(dL), \log(d\sqrt{N}/\sqrt{\log N}) \}}{L} \bigg), & \textup{discrete domain}.
    \end{cases}
\end{eqnarray}
Here, the depth condition in~\eqref{Upper bound of the layers in the Transformer} ensures that the approximation error term is dominated by the statistical error term, thereby yielding the simplified rate in~\eqref{Upper bound of the estimated output and ground truth output main paper simpified}. This expression also reveals that the depth-dependent complexity enters through a factor of order $\log(L)/L$, up to additional logarithmic terms.

Finally, we emphasize that the theorem provides an existence guarantee: it establishes that there exists a Transformer architecture achieving the stated bound, rather than asserting uniform optimality over all parameter choices. Nevertheless, the result reveals a fundamental scaling law governing the interplay between context size, architectural depth, and domain complexity. In this sense, it provides theoretical insight into the capacity requirements of Transformer-based neural quantum states.
In addition, the augmented embedding matrix $\mH$ considered in this theorem, defined in \eqref{the construction of embedding vector each layer} of Appendix~\ref{Proof of transformer emulate proximal GD}, is introduced to facilitate the theoretical analysis. This formulation retains the essential structure of the practical embedding in \eqref{the construction of embedding vector} while incorporating additional components required for the theoretical characterization. In the subsequent simulations, we employ the practical embedding construction in \eqref{the construction of embedding vector} and observe the same scaling behaviors predicted by \Cref{thm:generalization_transformer_nqs}.

\paragraph{Proof Sketch}

The proof is constructive: we explicitly construct an in-context prediction rule realizable by a depth-$L$ Transformer and establish a quantitative bound on its prediction error.

We begin by rewriting the complex-valued wavefunction as two real-valued components $d_1(\cdot)$ and $d_2(\cdot)$. For $k \in \{1,2\}$, we introduce a feature map $\vphi_k^{N+1} \in \mathbb{R}^{P_k+1}$ depending on the query configuration. We consider a linear-in-features representation of $d_k(\cdot)$ of the form ${\vphi_k^{N+1}}^\top \va_k$, where $\va_k$ denotes the associated coefficient vector, and the resulting approximation error is controlled by the truncation of the feature expansion. The coefficient vector $\va_k$ is estimated from the $N$ in-context samples via the Lasso estimator \cite{tibshirani1996regression}, with a regularization parameter that that depends on $d_k(\cdot)$ and is specified in \Cref{Generalization error bound of Lasso estimator} in Appendix~\ref{sec:Approximation of the Target Function Class by the Lasso Estimator}. This yields the estimator $\wh{\va}_k$.  Combining the above linear-in-features representation with standard Lasso estimation guarantees for linear models, we obtain that with high probability the prediction error satisfies
\begin{eqnarray}
    \label{upper bound of first term in the final conclusion}
    \E[({\vphi_{k}^{N+1}}^\top \wh\va_{k} -  d_k(\vx_{i,1}, \dots,\vx_{i,n}) )^2]\leq \begin{cases}
     C_1\sqrt{\frac{\log N}{N}} + \frac{C_2nDB\log(BL)}{L}, & \textup{continuous},\\
     C_3\sqrt{\frac{\log N}{N}} + \frac{C_4nd\log(dL)}{L}, & \textup{discrete},
    \end{cases}
\end{eqnarray}
for some positive constants $C_1,\dots,C_4$.

We then connect the statistical estimator to the Transformer architecture by showing that the Lasso solution can be approximately realized by a depth-$L$ Transformer. While the prediction error of the Lasso estimator has been characterized in~\eqref{upper bound of first term in the final conclusion}, the associated optimization problem generally does not admit a closed-form solution. Consequently, we consider its iterative realization via the inexact proximal gradient method. The proximal gradient formulation serves as a computational bridge between the statistical estimator and the Transformer architecture. Specifically, we prove that one Transformer layer can emulate one proximal gradient iteration up to controlled error. Therefore, stacking $L$ Transformer layers yields an approximate implementation of the iterative optimization procedure, producing an output $\wh\vy_{N+1}$ satisfying
\begin{eqnarray}
    \label{upper bound of second term in the final conclusion}
    ({\vphi_{k}^{N+1}}^\top \wh\va_{k} -  \wh \vy_{N+1}(k) )^2\leq \frac{C_5}{N^2}
\end{eqnarray}
for both the continuous and discrete domains, where $C_5>0$ is a constant.

Combining these estimates and applying a error decomposition yields the bound in \eqref{Upper bound of the estimated output and ground truth output main paper original}. Conceptually, the argument reduces the nonlinear prediction problem to a finite-dimensional linear estimation problem through the induced feature representation.  This reduction makes the dependence on the problem parameters explicit: once the target function is represented through a controlled feature map, the generalization error can be analyzed using regression theory. As a result, the bound reveals how the prediction error scales with the context size $N$, the system size $n$, the domain parameters $(D,d,B)$, and the architectural depth $L$.

\subsection{Physical Representation of Transformer-Based Neural Quantum States}
\label{subsection of physical representation of NQS}

In the preceding section, we investigated the generalization properties of Transformer-based neural quantum states. However, the resulting representations are not necessarily normalized, whereas physically meaningful quantum wavefunctions and quantum states must satisfy the normalization conditions in \eqref{eq:continuous_wavefunction normalization} and \eqref{eq:discrete_wavefunction normalization}. Such normalization constraints are fundamental to ensuring the physical admissibility of the corresponding quantum representations. Motivated by this consideration, we extend the preceding analysis from general Transformer-based neural quantum states to physically admissible quantum wavefunctions and quantum states, and subsequently generalize the framework to encompass the representation of physical density operators. To this end, we define the predicted normalized continuous quantum wavefunction $\ol\psi_{\textup{c}}$ satisfying\footnote{In practical implementations, the normalization integral in \eqref{eq:continuous_wavefunction normalization transformer1} is generally intractable in high-dimensional continuous domains. Consequently, the normalization condition is approximated through Monte Carlo sampling. More precisely, given sampling points $\{(\vx_1^{(k)},\dots,\vx_n^{(k)})\}_{k=1}^K$ drawn from a reference distribution $q$, the normalization condition can be approximated by
\begin{eqnarray}
\label{eq:continuous_wavefunction normalization transformer}
\frac{1}{K}\sum_{k=1}^K
\frac{
|\ol\psi_{\textup{c}}(\vx_1^{(k)},\dots,\vx_n^{(k)})|^2
}{
q(\vx_1^{(k)},\dots,\vx_n^{(k)})
}
\approx 1.
\end{eqnarray}}
\begin{eqnarray}
\label{eq:continuous_wavefunction normalization transformer1}
\int_{\sum_{i=1}^n \|\vx_i\|_2^2 \le B}
|\ol\psi_{\textup{c}}(\vx_1,\dots,\vx_n)|^2
\, d\vx_1 \cdots d\vx_n = 1
\end{eqnarray}
and similarly define the predicted normalized discrete quantum state $\ol\psi_{\textup{d}}$ satisfying\footnote{For discrete quantum states, the normalization condition can be computed explicitly by summing over the discrete configuration space.}
\begin{eqnarray}
\label{eq:discrete_wavefunction normalization transformer}
\sum_{x_1}\cdots\sum_{x_n}
|\ol\psi_{\textup{d}}(x_1,\dots,x_n)|^2 = 1.
\end{eqnarray}
Building upon the above physically normalized representations, we next present the corresponding generalization analysis for Transformer-based neural quantum states.
\begin{theorem}
\label{thm:generalization normalized_transformer_nqs}
Under the same setting as in \Cref{thm:generalization_transformer_nqs}, let $\psi^\star$ and $\ol\psi$ denote the ground-truth quantum state and the corresponding normalized prediction, respectively. Define
\begin{eqnarray}
\label{definition of alpha}
\alpha=
\begin{cases}
\left(\dfrac{2\pi e B}{nD}\right)^{\frac{nD}{2}}, & \textup{continuous domain},\\[2mm]
d^n, & \textup{discrete domain},
\end{cases}
\end{eqnarray}
and let $\|\cdot\|$ denote  the $L^2$ norm in the continuous domain and the $\ell_2$ norm in the discrete domain\footnote{The $L^2$ norm and $\ell_2$ norm are defined as $ \|\psi_{\mathrm{c}}\|_{L^2}^2 = \int_{\sum_{i=1}^n \|\mathbf{x}_i\|_2^2 \le B}
|\psi_{\mathrm{c}}(\vx_1,\dots,\vx_n)|^2\, d\vx_1 \cdots d\vx_n$ and $ \|\psi_{\mathrm{d}}\|_2^2 = \sum_{x_1,\dots,x_n} |\psi_{\mathrm{d}}(x_1,\dots,x_n)|^2$.}. Consider a transformer with $L$ layers. Let $M$ denote the number of attention heads, with $M\geq 8$. Assume further that
\begin{eqnarray}
    \label{Upper bound of the layers in the Transformer another}
    \begin{cases}
    L\geq B, \ \frac{L}{\log L} \geq \frac{2nDB\sqrt{N}}{\sqrt{\log N}} & \textup{continuous domain},   \\
    L\ge d, \ \ \frac{L}{\log L} \geq \frac{2nd\sqrt{N}}{\sqrt{\log N}}, & \textup{discrete domain}.
    \end{cases}
\end{eqnarray}
Then, with high probability, the following generalization bounds hold:
\begin{eqnarray}
    \label{Upper bound of the estimated output and ground truth output main paper normalized simpified}
    \frac{\E[\|\ol\psi-\psi^\star\|^2]}{\alpha} = O\!\left(\sqrt{\frac{\log N}{N}}\right).
\end{eqnarray}
Moreover, these bounds immediately imply corresponding guarantees for the induced density operators\footnote{The Frobenius norms are defined as $\|\psi_{\mathrm{c}}\psi_{\mathrm{c}}^\dagger\|_F^2
=
\int_{\sum_{i=1}^n \|\vx_i\|_2^2 \le B}\!\!\int_{\sum_{i=1}^n \|\vx_i'\|_2^2 \le B}
|\psi_{\mathrm{c}}(\vx_1,\dots,\vx_n)\psi_{\mathrm{c}}(\vx_1',\dots,\vx_n')|^2 \, d\vx_1 \cdots d\vx_nd\vx_1' \cdots d\vx_n'$ and $\| \psi_{\mathrm{d}} \psi_{\mathrm{d}}^\dagger\|_F^2 = \sum_{x_1,\dots,x_n,x_1',\dots, x_n'=1}^{d}
|\psi_{\mathrm{d}}(x_1,\dots,x_n)\psi_{\mathrm{d}}(x_1',\dots, x_n')|^2$.}.
\begin{eqnarray}
    \label{Upper bound of the estimated output and ground truth output main paper normalized density operator simpified}
    \frac{\E[\|\ol\psi \ \ol\psi^\dagger-\psi^\star{\psi^\star}^\dagger\|_F^2]}{\alpha} = O\!\left(\sqrt{\frac{\log N}{N}}\right).
\end{eqnarray}
\end{theorem}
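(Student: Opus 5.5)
The plan is to derive both bounds from the pointwise guarantee of \Cref{thm:generalization_transformer_nqs}. We integrate (or sum) that guarantee over the domain, and then use standard Hilbert-space inequalities to control the effect of normalization and of passing to rank-one projectors.

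\textbf{Step 1: from pointwise to global error of the unnormalized prediction.} Let $\wh\psi(\vx_1,\dots,\vx_n)=\wh\vy_{N+1}(1)+\mathrm{i}\,\wh\vy_{N+1}(2)$ be the Transformer prediction, viewed as a function of the query configuration. Then $|\wh\psi-\psi^\star|^2=\|\wh\vy_{N+1}-\vd\|_2^2$ pointwise. Integrating \eqref{Upper bound of the estimated output and ground truth output main paper simpified} over the domain (Fubini in the continuous case, a finite sum in the discrete case) gives $\E\|\wh\psi-\psi^\star\|^2\le \mathrm{Vol}\cdot O(\sqrt{\log N/N})$, where $\mathrm{Vol}$ is the size of the domain.

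\begin{itemize}
\item In the discrete case, $\mathrm{Vol}=d^n=\alpha$.
\item In the continuous case, the ball of radius $\sqrt B$ in $\R^{nD}$ has volume $\pi^{nD/2}B^{nD/2}/\Gamma(nD/2+1)$. Stirling's bound $\Gamma(k+1)\ge (k/e)^k$ with $k=nD/2$ gives volume at most $(2\pi e B/(nD))^{nD/2}=\alpha$.
\end{itemize}

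This is where $\alpha$ enters. Some care is needed here: the pointwise bound holds for a query drawn from the data distribution, not uniformly over the domain. I would either assume the query law is uniform, or absorb the density ratio into the $\mathcal G$-dependent constant. The ``with high probability'' qualifier is inherited, since the high-probability event concerns the Lasso/context samples, which are shared across all query points.

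\textbf{Step 2: normalization.} Set $\ol\psi=\wh\psi/\|\wh\psi\|$ and use $\|\psi^\star\|=1$. The elementary inequality $\|u/\|u\|-v\|\le 2\|u-v\|$ for unit $v$ follows from the triangle inequality and $|\|u\|-1|\le\|u-v\|$. It gives $\|\ol\psi-\psi^\star\|^2\le 4\|\wh\psi-\psi^\star\|^2$, and Step 1 then yields \eqref{Upper bound of the estimated output and ground truth output main paper normalized simpified}.

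The main obstacle is that $\|\wh\psi\|$ could vanish. One option is to define $\ol\psi$ arbitrarily (say $\ol\psi=\psi^\star$'s competitor of unit norm) on that event and note that the bound $\|\ol\psi-\psi^\star\|^2\le 4$ still holds there. Note that $\|u-v\|\ge1$ on that event, so the factor-$4$ inequality remains valid. A second issue is that the ground truth is normalized only on the restricted domain $\sum_i\|\vx_i\|_2^2\le B$; I would take $\psi^\star$ to mean its normalized restriction, matching the definition in \eqref{eq:continuous_wavefunction normalization transformer1}.

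\textbf{Step 3: density operators.} For unit vectors $u,v$,
\[
\|uu^\dagger-vv^\dagger\|_F\le \|u(u-v)^\dagger\|_F+\|(u-v)v^\dagger\|_F=2\|u-v\|.
\]
Hence $\|\ol\psi\,\ol\psi^\dagger-\psi^\star{\psi^\star}^\dagger\|_F^2\le 4\|\ol\psi-\psi^\star\|^2$. This holds with the Frobenius norms defined in the footnote, because those norms factorize as products of $L^2$ or $\ell_2$ norms. Dividing by $\alpha$ and applying Step 2 gives \eqref{Upper bound of the estimated output and ground truth output main paper normalized density operator simpified}, with the constant absorbing the factor $16$.
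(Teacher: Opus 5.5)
Your proposal is correct and follows the same route as the paper: integrate the pointwise bound of Theorem~\ref{thm:generalization_transformer_nqs} over the ball (whose volume is bounded via the Gamma function, producing $\alpha$) or over the $d^n$ configurations, control the normalization step, and pass to density operators via $\|uu^\dagger-vv^\dagger\|_F\le 2\|u-v\|$. The one substantive difference is Step~2, and there your version is the right one. The paper claims the constant-one bound $\|\ol\psi-\psi^\star\|\le\|\psi-\psi^\star\|$, but its derivation multiplies an inequality by $\|\psi\|-1$, which is negative when $\|\psi\|<1$, and the claim fails for $\psi=\epsilon w$ with $\|w\|=1$, $w\perp\psi^\star$, $0<\epsilon<1$ (then $\|\ol\psi-\psi^\star\|^2=2>1+\epsilon^2$), so your factor-$2$ inequality is the correct fix at the cost of only a constant, and your caveat that the Fubini step needs a (near-)uniform query law likewise makes explicit an assumption the paper leaves implicit.
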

The proof is provided in {Appendix}~\ref{Appendix of generalization of physical Transformer}. Theorem~\ref{thm:generalization normalized_transformer_nqs} establishes generalization bounds for the physically normalized quantum wavefunction/state predictions. Compared to the unnormalized setting in \Cref{thm:generalization_transformer_nqs}, the bounds are expressed in terms of normalized quantum states and associated physical density operators, where the scaling by $(\frac{2\pi e B}{nD})^{\frac{nD}{2}}$ for the continuous case and $d^n$ for the discrete case naturally accounts for the volume (or cardinality) of the underlying configuration space. Importantly, these factors do not reflect additional statistical or optimization complexity of the Transformer model. Instead, they arise from converting pointwise error control into Frobenius-norm error over the full state space, i.e., integrating (or summing) over all configurations.

Although both scaling factors originate from the size of the underlying configuration space, they exhibit seemingly different asymptotic behaviors as the system size increases. Specifically, the discrete factor $d^n$ grows exponentially with the number of particles $n$, whereas the continuous factor $(\tfrac{2\pi eB}{nD})^{nD/2}$ decreases when $B$ is fixed. This apparent discrepancy is a consequence of the domain normalization adopted in the continuous setting. Under the current formulation, the total energy budget $B$ in the continuous constraint $\sum_{i=1}^n \|\vx_i\|_2^2 \le B$ is fixed independently of the number of particles $n$. Consequently, the average admissible energy per particle decreases as $B/n$, causing the configuration-space volume factor $(\tfrac{2\pi eB}{nD})^{nD/2}$ to decay as $n$ increases. This behavior should not be interpreted as a reduction in the intrinsic complexity of the continuous problem, but rather as a consequence of shrinking the feasible domain as the system size grows. In a more physically realistic scaling, where each additional particle contributes a comparable amount of admissible volume (for example, by choosing $B=nDR^2$ with a fixed per-particle radius $R$), the volume factor becomes $(2\pi eR^2)^{nD/2}$, which grows exponentially with $n$, consistent with the discrete factor $d^n$. Therefore, the apparent discrepancy between the continuous and discrete settings originates solely from different normalizations of the configuration space, rather than from any fundamental distinction between the two formulations.

We emphasize that the above analysis is not restricted to the prediction of quantum wavefunctions or quantum states. The derivation relies primarily on the generalization properties of the Transformer predictor and therefore extends naturally to a broad class of quantum-property prediction tasks. In particular, whenever the target quantity can be represented as a deterministic function of the underlying quantum state, analogous generalization guarantees can be established for Transformer-based models predicting quantum fidelities, entanglement entropies, two-point correlation functions, expectation values of local observables, and energy variances of many-body Hamiltonians. This observation suggests that the proposed framework provides a general theoretical foundation for analyzing the generalization behavior of Transformer-based quantum learning models beyond neural quantum state representations. As an example, one may consider a Transformer-based model that directly predicts physical properties of a structured quantum state represented as a MPO. Let $\vrho^\star$ denote a quantum state admitting an MPO representation with maximum bond dimension $\ol r$, and let $f_{\vtheta}(\vrho^\star)$ denote a target quantum property. Here, $\vtheta$ indexes a family of quantum property functionals, where each choice of $\vtheta$ specifies a particular rule for extracting a scalar property from a density operator. The class of MPOs with physical dimension $d$ and bond dimension at most $\ol r$ admits a parameterization with effective complexity scaling as $O(nd^2 {\ol r}^2)$ degrees of freedom. Since $\vrho^\star$ is a unit-trace density operator, it satisfies $\|\vrho^\star\|_F \le 1$.  Denote by $\wt f_{\vtheta}(\vrho^\star)$ the corresponding prediction produced by the Transformer model. Using the same argument as in \Cref{thm:generalization_transformer_nqs}, we obtain that there exists a Transformer architecture with $L$ layers such that
\begin{eqnarray}
    \label{Upper bound of the layers in the Transformer normalized quantum property}
     L\ge d^2{\ol r}^2, \ \ \frac{L}{\log L} = \Omega\left(\frac{nd^2{\ol r}^2\sqrt{N}}{\sqrt{\log N}}  \right).
\end{eqnarray}
Then, with high probability, the expected prediction error satisfies
\begin{eqnarray}
    \label{Upper bound of the estimated output and ground truth output main paper quantum property}
    \E[|\wt f_{\vtheta}(\vrho^\star) -  f_{\vtheta}(\vrho^\star)|^2]= O\left(\sqrt{\frac{\log N}{N}}  \right).
\end{eqnarray}
Here, we note that although the in-context samples contain both quantum states and their associated quantum properties, the Transformer operates on the MPO representation, which is parameterized by tensor factors with effective degrees of freedom $O(nd^2 {\ol r}^2)$ rather than the full $d^n \times d^n$ quantum state. This reduced representation complexity explains why the architectural requirement in \eqref{Upper bound of the layers in the Transformer normalized quantum property} scales with $O(nd^2 \bar r^2)$ instead of an exponential dependence on $n$. However, the above polynomial scaling result relies on the structured MPO representation, which provides an efficient parameterization of quantum states with bond dimension $\ol r$. This assumption captures a broad class of physically relevant low-entanglement quantum states, but does not cover general quantum states with exponentially large degrees of freedom. Extending the present analysis to fully general quantum states, or identifying minimal structural conditions under which similar guarantees hold, remains an important direction for future work.

Building upon the above analysis of wavefunctions/states and the induced density operators, we now clarify the relationship between our results and prior works.
\begin{itemize}
\item \textbf{ICL-based generalization theory for quantum state learning.}
We develop a unified theoretical framework that characterizes ICL as a mechanism for generalization in Transformer-based quantum state estimation. We establish rigorous MSE-based generalization bounds at inference time, covering both continuous and discrete input domains, thereby providing a theoretical explanation of empirical observations in \cite{zhang2023transformer}. We further characterize the scaling of the generalization error with respect to the number of in-context samples, as well as the scaling of the required network depth with respect to the input domain dimension. Moreover, we extend the analysis beyond quantum state estimation to a broader class of quantum properties induced by the learned states.

\item \textbf{Extension of prior Transformer-ICL theory.}
While inspired by the proof strategy in \cite{li2025transformers}, extending their analysis to our setting is non-trivial due to several structural modifications. First, we relax the assumption that the target function is absolutely integrable and instead require only a bounded first absolute moment. This necessitates a new Fourier approximation analysis and a corresponding modification of the feature construction. Second, unlike \cite{li2025transformers}, which considers functions defined on the unit ball, our analysis is carried out over a general bounded-energy domain with radius $B$, together with discrete quantum configuration spaces. Consequently, several approximation arguments and error estimates need to be generalized to account for the enlarged geometry. Third, our Transformer simultaneously predicts two coupled outputs corresponding to the real and imaginary components of quantum wavefunctions. This requires a redesigned embedding matrix and hidden-state representation, together with a new Transformer construction that emulates the inexact proximal gradient iterations in the multi-output setting. Finally, we establish generalization guarantees for normalized quantum wavefunctions/states and the associated density operators, introducing additional normalization and operator-level error analyses that are absent from the original work. The remaining components of the proof are adapted from \cite{li2025transformers} whenever applicable.

\end{itemize}

\section{Simulation Results}
\label{sec: simulation results}

In this section, we present numerical experiments validating the generalization capability of Transformer-based neural quantum states as predicted by our theoretical analysis. To this end, we consider both continuous and discrete quantum systems, corresponding to a family of wavefunctions and a family of density matrices, respectively. For the continuous setting, we consider correlated Gaussian many-body wavefunctions with complex covariance \cite{weedbrook2012gaussian}:
$\psi(\vx_1,\dots,\vx_n) = N_{\textup{g}} e^{( -\frac{1}{2} \sum_{k,l=1}^n \vx_k^\top (\mA_{kl} - i \mB_{kl}) \vx_l)}$,
where $\mA_{kl}, \mB_{kl}\in\R^{D\times D}$ are blocks of symmetric matrices $\mA, \mB\in\R^{nD\times nD}$. Here $\mA$ is positive definite to ensure convergence, while $\mB$ is unconstrained and contributes only to the complex phase, which cancels in $|\psi|^2$. The constant $N_{\textup{g}} = \big(\frac{\det \mathbf{A}}{\pi^{nD}}\big)^{\frac{1}{4}}$ is a normalization constant.

For the discrete setting, we consider a family of thermal states $\vrho = \frac{e^{-\mH/T}}{\trace(e^{-\mH/T})}$ with temperature $T$, constructed from a disordered transverse-field Ising Hamiltonian  \cite{petHo2023random}:
$\mH = - h \sum_{k=1}^{n} \mX_k - J \sum_{k=1}^{n-1} \mZ_k \mZ_{k+1} + \sum_{k=1}^{n} g_k \mZ_k$,
where $\mX_k$ and $\mZ_k$ denote Pauli operators acting on the $k$-th qubit:
$\mX_k = \mId^{\otimes (k-1)} \otimes \mX \otimes \mId^{\otimes (n-k)}$,
$\mZ_k = \mId^{\otimes (k-1)} \otimes \mZ \otimes \mId^{\otimes (n-k)}$,
with $\mX = \begin{bmatrix} 0 & 1 \\ 1 & 0 \end{bmatrix}$ and $\mZ = \begin{bmatrix} 1 & 0 \\ 0 & -1 \end{bmatrix}$.
The parameters $J$ and $h$ control the interaction strength and transverse field, respectively, while $g_k$ are independently sampled random local fields. Although our theoretical results in \Cref{thm:generalization_transformer_nqs} are formulated for quantum states in vector form, they naturally extend to density matrices by vectorizing $\vrho$ into $\vec(\vrho)(x_1,x_2,\dots,x_n) : (\{0,\dots, d^2-1\})^n \to \C$. The only modification required is to replace $d$ with $d^2$ in the corresponding statements.

We now describe the training and evaluation protocols. During training, we adopt the AdamW optimizer with a learning rate of $10^{-3}$, which lies within the standard stable regime for Transformer optimization and is found to provide reliable and efficient training performance in our experiments. We use the mean squared error as the loss function, with the number of in-context samples fixed to $N=20$ and the number of attention heads set to $M=8$. For the continuous case, we set the matrix dimension to $E=8$ and use $2048$ training samples per epoch, where the eigenvalues of $\mA$ and $\mB$ are drawn from $[0.5, 2]$ and $[-1, 1]$, respectively.
Training inputs are sampled uniformly from the Euclidean ball $B = c \cdot \trace(\mA^{-1}) $, where $c$ is a coverage constant. Since $\trace(\mA^{-1})$ corresponds to the total variance of the Gaussian state (i.e., the sum of marginal variances), this choice scales the sampling region proportionally to the characteristic dispersion of the wavefunction. As a result, the sampled configurations adequately cover the high-probability mass of the state, allowing both amplitude and phase variations induced by $\mB$ to be sufficiently represented during training.
During evaluation, we vary the number of in-context samples $N$, and for each value, the procedure is repeated $256$ times. For the discrete case, we set $E=32$ and use $30$ training matrices per epoch, from which a subset of $S$ entries is randomly selected for training; the parameters of each thermal state are sampled from $T = [0.1, 5]$, $J = [0.1, 2]$, $H = [0.1, 2]$, and $g_k = \begin{cases} \text{Uniform}(-0.3, 0.3), & \text{with probability } 0.5, \\ 0, & \text{otherwise} \end{cases}$. Consequently, the total number of training samples per epoch is $30 \times S$. During testing, $100$ matrices are randomly generated, with $N$ elements in each matrix used as in-context samples and the remaining $S - N$ elements serving as queries. Model performance is evaluated using the normalized mean squared error (NMSE), defined as the squared prediction error of the query output in \Cref{thm:generalization_transformer_nqs}, normalized by the squared magnitude of the corresponding ground-truth output, and averaged over all test queries.
In addition, although the theoretical analysis is carried out using the augmented embedding matrix in \eqref{the construction of embedding vector each layer} of Appendix~\ref{Proof of transformer emulate proximal GD}, the simulations are performed using the practical embedding matrix in \eqref{the construction of embedding vector}. The experimental results confirm that the predicted scaling behaviors are preserved under the practical embedding construction. All models are initialized using Kaiming initialization \cite{he2015delving}, and all results are averaged over $20$ independent Monte Carlo trials.

\begin{figure}[t]
\centering
\subfigure[]{
\begin{minipage}[t]{0.31\textwidth}
\centering
\includegraphics[width=5.5cm]{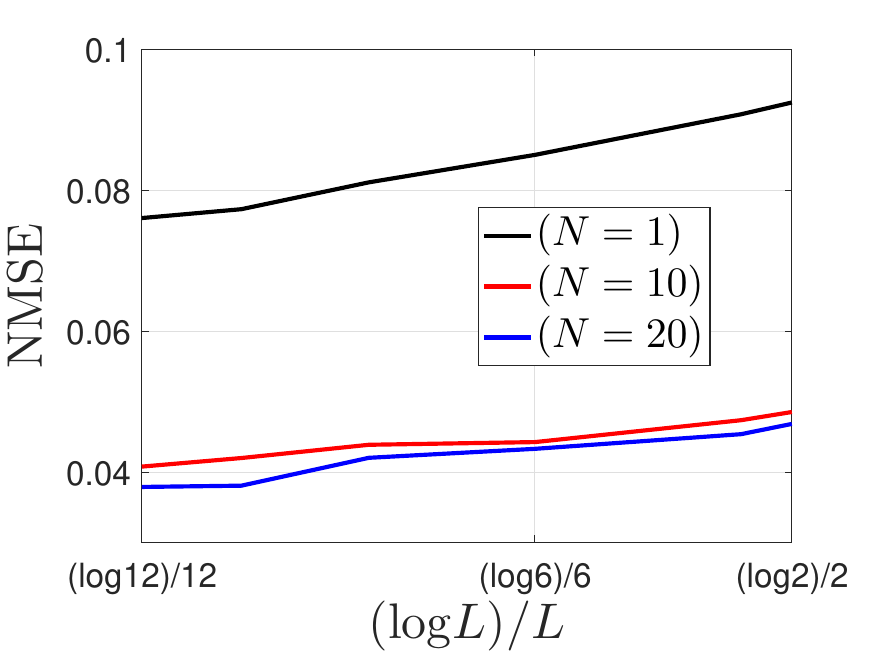}
\end{minipage}
\label{Gaussian diff L new}
}
\subfigure[]{
\begin{minipage}[t]{0.31\textwidth}
\centering
\includegraphics[width=5.5cm]{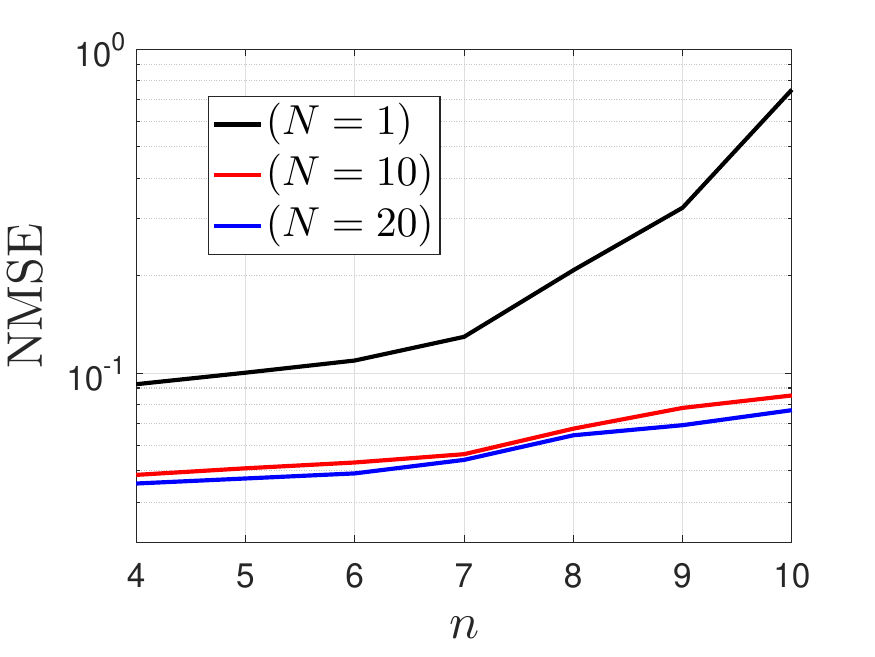}
\end{minipage}
\label{Gaussian diff n new}
}
\subfigure[]{
\begin{minipage}[t]{0.31\textwidth}
\centering
\includegraphics[width=5.5cm]{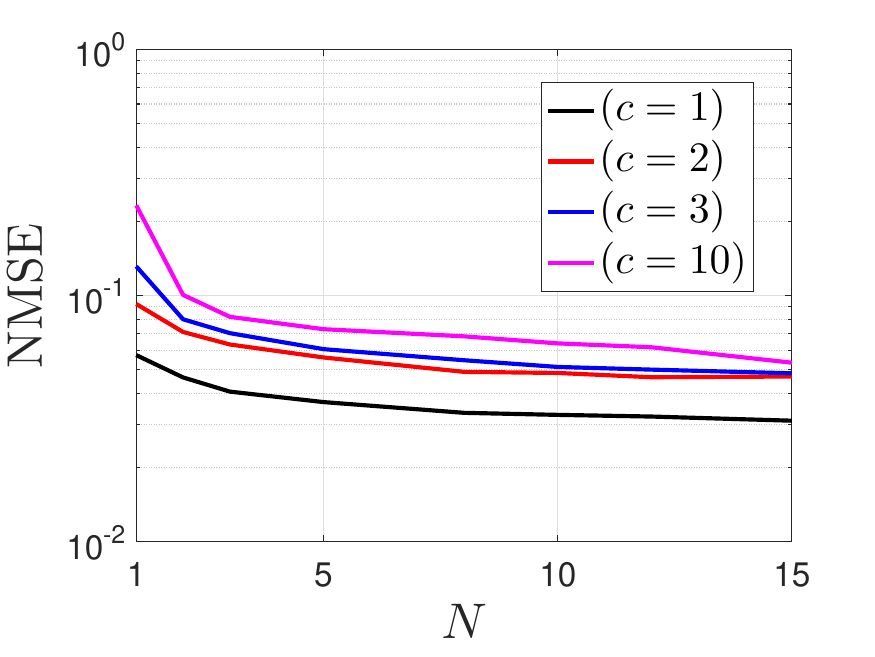}
\end{minipage}
\label{Gaussian diff c new}
}
\caption{Comparison of generalization errors of Transformer-based neural quantum states in the continuous case over 500 training epochs with Transformer width 64: (a) varying $L$ and $N$ with $n=4$ and $c=2$, (b) varying $n$ and $N$ with $c=2$ and $L=2$, (c) varying $c$ and $N$ with $n=4$ and $L=2$. These results empirically validate the theoretical scaling law in \eqref{Upper bound of the estimated output and ground truth output main paper original}: the generalization error decreases with the number of in-context samples $N$ and increases with the problem dimension $n$ and energy constraint $B$, while exhibiting an approximately linear dependence on $(\log L)/L$.}
\end{figure}

In the first experiment, we evaluate the generalization error of neural quantum states in the continuous setting. As shown in \Cref{Gaussian diff L new}, the error increases as $(\log L)/L$ increases or $N$ decreases, consistent with the scaling behavior in \eqref{Upper bound of the estimated output and ground truth output main paper original} discussed after \Cref{thm:generalization_transformer_nqs}. We note that the empirical dependence on $(\log L)/L$ exhibits an approximately linear trend, in agreement with the theoretical scaling law in \eqref{Upper bound of the estimated output and ground truth output main paper original}. This trend reflects the interplay between model capacity and available information: deeper Transformers ($L$ larger) provide stronger representational power, while more in-context samples ($N$ larger) supply richer information predicting components of the quantum state not directly specified in the in-context samples. Empirically, increasing $N$ leads to a more pronounced reduction in error than increasing $L$, indicating that data availability is the dominant factor in this regime. Furthermore, \Cref{Gaussian diff n new} shows that the error increases with $n$. This is because a larger $n$ expands the space of admissible quantum wavefunctions, thereby increasing the intrinsic difficulty of the prediction problem and leading to higher generalization error. We further observe an approximately linear trend for $N=10$ and $N=20$, whereas this linear relationship breaks down for $N=1$, due to the insufficient number of in-context samples, which is insufficient to reliably capture the underlying mapping. Finally, we compare the generalization error under different values of $c$ and $N$. The results indicate that the generalization error increases as $c$ increases (equivalently, as $B$ increases), or as $N$ decreases. In particular, the dependence on $c$ can be interpreted as a consequence of the enlarged support of the input distribution: a larger $c$ induces a broader class of admissible inputs, which in turn increases the complexity of the target function class and yields a more difficult generalization problem under a fixed training budget.

In the second experiment, we evaluate the generalization error of neural quantum states in the discrete setting. As shown in \Cref{Thermal diff L diff n}, consistent with the continuous case, the generalization error decreases as $L$ and $N$ increase or $n$ decreases, indicating that larger models and lower system complexity improve generalization. To ensure a fair comparison, we use 400 training epochs for $n=5$, compared to 300 epochs for $n=4$, as the former requires more optimization steps to reach a comparable training regime. Furthermore, \Cref{Thermal diff L diff sample} investigates the effect of the number of sampled density matrix elements, since training on the full density matrix is computationally infeasible. For a shallow Transformer ($L=1$), using $S=128$ samples achieves performance comparable to $S=512$, suggesting that a small sample size is sufficient. In contrast, for a deeper Transformer ($L=5$), performance improves significantly with $S=512$, indicating that larger models require more samples to fully exploit their representational capacity.

\begin{figure}[t]
\centering
\subfigure[]{
\begin{minipage}[t]{0.45\textwidth}
\centering
\includegraphics[width=7cm]{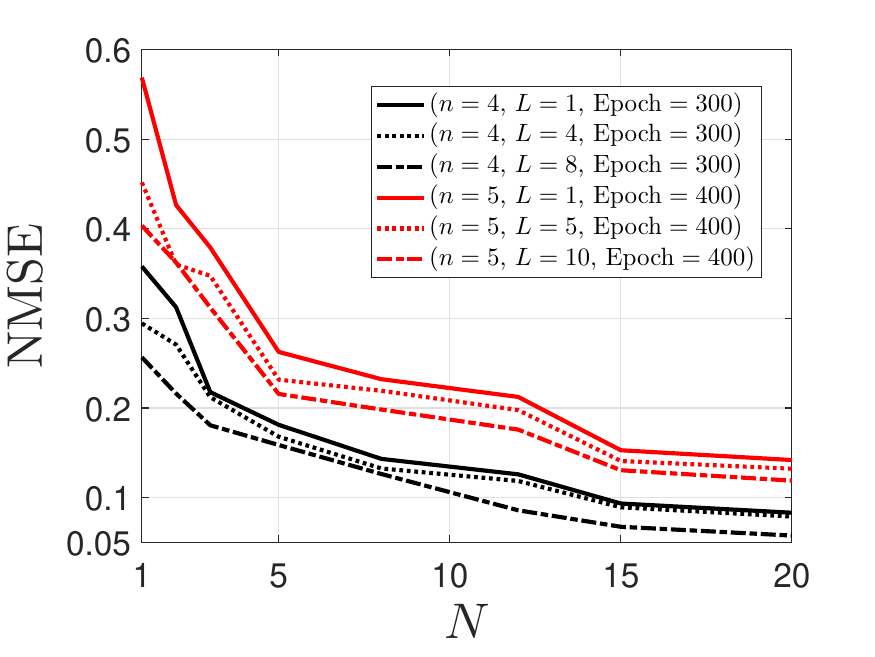}
\end{minipage}
\label{Thermal diff L diff n}
}
\subfigure[]{
\begin{minipage}[t]{0.45\textwidth}
\centering
\includegraphics[width=7cm]{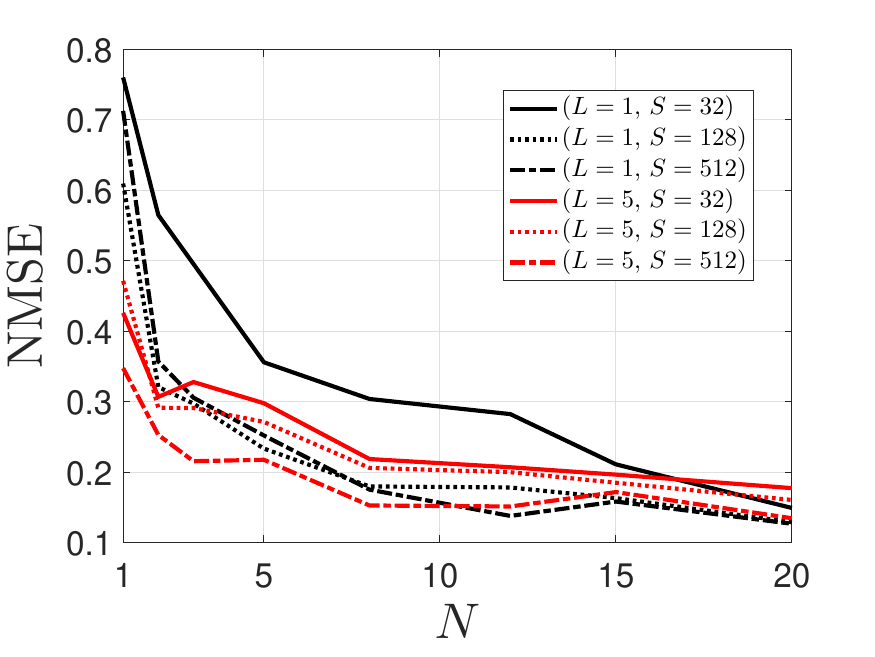}
\end{minipage}
\label{Thermal diff L diff sample}
}
\caption{Comparison of generalization errors of Transformer-based neural quantum states in the discrete case with Transformer width 32: (a) varying $n$ and $L$ with $S=128$, (b) varying $L$ and $S$ with $n=5$ and 500 training epochs. These results demonstrate that the generalization error decreases with increasing model depth and the number of in-context samples, while increasing with system size $n$, consistent with the scaling behavior predicted by the theoretical analysis.}
\end{figure}

In the third experiment, we validate the conclusion established in \eqref{Upper bound of the estimated output and ground truth output main paper quantum property}. Specifically, we consider the expectation value of an a priori unknown linear observable \cite{huang2020predicting}. For an $n$-qubit thermal state $\vrho^\star$ drawn from the disordered transverse-field Ising Hamiltonian family introduced above, the target property is defined as $f_{\vtheta}(\vrho^\star)=\trace(\vrho^\star \mO_{\vtheta})$ where $\mO_{\vtheta}=\sum_{k=1}^{D_\theta}\theta_k\mP_k$. Here $\{\mP_k\}_{k=1}^{D_\theta}$ is a fixed basis consisting of local and two-body Pauli operators, including the $n$ single-site operators $\{\mZ_k\}_{k=1}^n$, the $n$ single-site operators $\{\mX_k\}_{k=1}^n$, and the $n-1$ nearest-neighbor correlators $\{\mZ_k\mZ_{k+1}\}_{k=1}^{n-1}$. Consequently, the basis contains a total of $D_{\theta}=3n-1$ operators. The coefficient vector $\vtheta = \begin{bmatrix} \theta_1 & \cdots & \theta_{D_\theta} \end{bmatrix}^\top\in\R^{D_\theta}$ is sampled uniformly from $[-1,1]^{D_\theta}$, and the resulting observable $\mO_{\vtheta}$ is fixed across all $N$ in-context examples and the query within each task. The thermal states are generated using the same procedure as in the previous experiments, except that the disorder parameters $g_k$ are sampled independently and uniformly from $[-1,1]$. To reduce the input domain, each in-context example consists of the MPO representation of the Hamiltonian $\mH_i$, together with its corresponding temperature $T_i$, as the input, and the associated property value $y_i = f_{\vtheta}(\vrho_i^\star)$ as the output. The query is represented by the same input variables, namely its MPO tensors and temperature, but without the corresponding property value. The model is trained to predict $y_{N+1}$ from the $N$ labeled in-context examples and the unlabeled query. The prediction error is measured by the NMSE, defined as the squared difference between the predicted and ground-truth linear observable values, normalized by the squared ground-truth value. As illustrated in \Cref{fig:comparison between quantum property with diff n}, the NMSE decreases monotonically as the number of in-context samples $N$ increases, consistently across all system sizes $n$. This indicates that the Transformer effectively benefits from additional in-context information when predicting unknown linear observables. Moreover, for a fixed model capacity (Transformer width 64 and depth $L=4$) and sufficiently large $N$, larger systems exhibit higher NMSE, suggesting that the task becomes intrinsically more challenging as the system dimension increases under the same model and data budget. Both trends are consistent with the theoretical bounds established in \eqref{Upper bound of the layers in the Transformer normalized quantum property} and \eqref{Upper bound of the estimated output and ground truth output main paper quantum property}.

\begin{figure}[!ht]
  \centering
  \includegraphics[width=6.5cm]{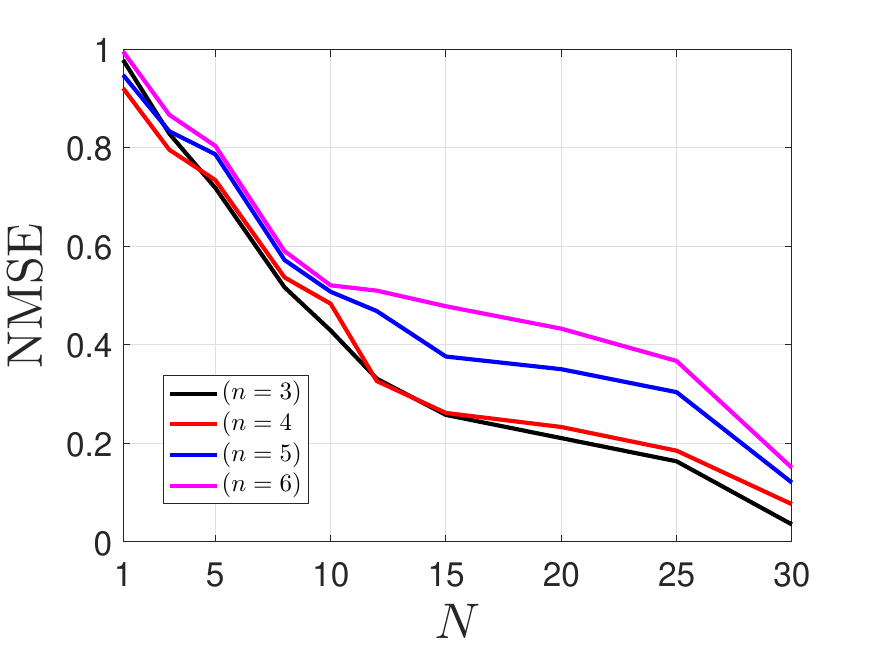}
  \caption{Comparison of the generalization error for predicting unknown linear observables as the number of qubits $n$ varies. The Transformer width is fixed at 64 and the number of layers is fixed at $L=4$. Results are obtained after 300 training epochs (32 batches per epoch) and evaluated on 100 testing tasks. These results demonstrate the predicted benefit of increasing the number of in-context samples $N$ and the increasing difficulty of the prediction task with system size $n$, consistent with the theoretical bounds in \eqref{Upper bound of the layers in the Transformer normalized quantum property} and \eqref{Upper bound of the estimated output and ground truth output main paper quantum property}.}
  \label{fig:comparison between quantum property with diff n}
\end{figure}

\section{Conclusion}

In this paper, we theoretically investigate the generalization behavior of Transformer-based neural quantum states under ICL. By establishing a rigorous inference-time generalization error bound, we show that the prediction error decreases inversely with both the number of in-context examples and the depth of the Transformer architecture. Our analysis further demonstrates that achieving such guarantees requires a Transformer depth that scales linearly with the system size, measured either by the number of particles in continuous systems or the number of qudits in discrete systems. Furthermore, our results provide a quantitative characterization of how in-context information can be exploited at inference time to improve prediction accuracy, offering a principled perspective on data-dependent generalization in neural quantum state representations. Taken together, these findings contribute to a theoretical understanding of ICL in quantum many-body modeling and highlight the scalability of Transformer-based architectures in high-dimensional quantum systems. Numerical experiments further support our theoretical predictions and confirm the derived scaling laws.

A number of theoretical questions remain open. The present analysis is established through an augmented embedding construction that facilitates the proof, whereas the numerical experiments are conducted using the practical embedding matrix. Although the empirical results exhibit the same scaling behavior under the practical embedding, extending the current theoretical framework to rigorously analyze this practically relevant setting remains an important direction for future research. Moreover, an important direction is to identify assumptions that are intrinsic to quantum learning problems, thereby replacing the current proof-oriented assumptions with conditions that more precisely capture the underlying quantum structure. Finally, while the present work establishes the existence of Transformer architectures achieving the desired inference-time generalization guarantees, it remains an important open question whether practical training algorithms can be shown to converge to such solutions with provable guarantees.

\section{Acknowledgments}
\label{sec: ack}

We acknowledge funding support from DARPA HR0011-25-2-0042, ARO W911NF-23-1-0343, NSF ECCS-2532643 and CCF-2246213.  ZQ gratefully acknowledges support from the MICDE Research Scholars Program at the University of Michigan. Finally, we are grateful to Gen Li and Yuchen Jiao for valuable discussions regarding the proof.

\newpage

\appendices

\section{Proof of Theorem~\ref{thm:generalization_transformer_nqs}}
\label{sec: proof of main theorem}

Before proving \Cref{thm:generalization_transformer_nqs}, we first restate the theorem below.
\begin{theorem}
\label{Lemma of final conclusion of all lemmas}
Suppose that the in-context samples and the query input are drawn i.i.d. from an underlying distribution supported on either the continuous domain $\{ (\vx_1,\dots,\vx_n): \sum_{i=1}^n \|\vx_i\|_2^2 \le B, \vx_i\in\R^{D}, i\in[n] \}$ or the discrete domain  $\{ (x_1,\dots,x_n): x_i \in \{0,\dots,d-1\},\ i \in [n]\}$ and that each component of $\vd(\cdot)$ belongs to a function class $\mathcal{G}$ whose elements have bounded first absolute moment on the corresponding domain. Consider a Transformer model whose number of attention heads in the $\ell$-th layer is given by
\begin{eqnarray}
    \label{requirement of attenstion layers appendix_1}
    M_\ell=
\begin{cases}
1, & \ell=1,\\
4, & \ell>1,\ \ell\ \text{is odd},\\
8, & \ell\ \text{is even},
\end{cases}
\end{eqnarray}
with input dimension $nD+2P_1+2P_2+12$.
Assume further that
\begin{eqnarray}
    \label{Upper bound of the layers in the Transformer appendix}
    \begin{cases}
    L\geq B, \ \frac{L}{\log L} \geq \frac{2nDB\sqrt{N}}{\sqrt{\log N}} & \textup{continuous domain},   \\
    L\ge d, \ \ \frac{L}{\log L} \geq \frac{2nd\sqrt{N}}{\sqrt{\log N}}, & \textup{discrete domain},
    \end{cases}
\end{eqnarray}
and that
\begin{eqnarray}
    \label{requirement Pk of the layers in the Transformer}
    P_k = \begin{cases}
    \Omega(nDB\max\{\log(BL), \log(B\sqrt{N}/\sqrt{\log N}) \}),   \\
    \Omega(nd\max\{\log(dL), \log(d\sqrt{N}/\sqrt{\log N}) \} ),
    \end{cases}
\end{eqnarray}
where the first and second cases correspond to the continuous and discrete domains, respectively. Suppose that the in-context samples and the query input are drawn i.i.d. from an underlying distribution that shares the same structural properties as the training distribution, forming an in-context sequence of size $N$ together with a query input. Denote the transformer prediction on the query by $\widehat{\vy}_{N+1} \in \R^2$. Then, with high probability, the expected prediction error satisfies
\begin{eqnarray}
    \label{Upper bound of the estimated output and ground truth output main paper}
    \begin{cases}
    \E[\|\wh \vy_{N+1} -  \vd(\{\vx_{N+1,j}\}_{j=1}^{n} )\|_2^2]= O\big(\sqrt{\frac{\log N}{N}} \big), \\
    \E[\|\wh \vy_{N+1} -  \vd(\{x_{N+1,j}\}_{j=1}^{n} )\|_2^2]= O\big( \sqrt{\frac{\log N}{N}}  \big),
    \end{cases}
\end{eqnarray}
for the continuous and discrete domains respectively.
\end{theorem}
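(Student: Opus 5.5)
The proof follows the three-stage decomposition outlined in the proof sketch. For each component $k\in\{1,2\}$, with $d_k\in\mathcal{G}$, I write
\[
\wh\vy_{N+1}(k)-d_k = \big(\wh\vy_{N+1}(k)-{\vphi_k^{N+1}}^\top\wh\va_k\big) + \big({\vphi_k^{N+1}}^\top\wh\va_k - d_k\big)
\]
and apply $(a+b)^2\le 2a^2+2b^2$. Summing over $k=1,2$ then bounds the $\ell_2$ error by twice the sum of an optimization-emulation term and a statistical-plus-approximation term. The constant factors are absorbed into $O(\cdot)$.

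\textbf{Step 1 (feature approximation).} Using only the bounded first absolute moment of the Fourier transform of $d_k$, I construct random Fourier features $\vphi_k(\cdot)\in\R^{P_k+1}$. The frequencies are sampled from the normalized measure $|\hat d_k(\omega)|/\|\hat d_k\|_1$-type density. Since the first moment replaces absolute integrability, I will truncate frequencies at a radius $R$ and control the tail through $\int\|\omega\|\,|\hat d_k|<\infty$. On the continuous domain the inputs have norm at most $\sqrt B$, so the Lipschitz-type error of the truncation scales with $B$. In the discrete case the domain $\{0,\dots,d-1\}^n$ is embedded in a box of side $d$.

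A Maurey/empirical-process argument over $P_k$ features then gives an approximation error of order $nDB\log(BL)/P_k$, with $nd\log(dL)$ in the discrete case. The choice of $P_k$ in \eqref{requirement Pk of the layers in the Transformer} makes this of order $1/L$.

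\textbf{Step 2 (Lasso).} I fit $\wh\va_k$ by the Lasso on the $N$ context pairs. The key requirement is an $\ell_1$ bound $\|\va_k\|_1\lesssim\|\hat d_k\|_1$, which holds because random-feature coefficients are uniform. A slow-rate Lasso bound for bounded features (no restricted eigenvalue is needed) with $\lambda\asymp\sqrt{\log N/N}$ then yields, with high probability, an expected out-of-sample prediction error of order $\sqrt{\log N/N}$ plus the approximation term. This is \eqref{upper bound of first term in the final conclusion}, and it follows from \Cref{Generalization error bound of Lasso estimator}.

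\textbf{Step 3 (Transformer emulation).} This is the main obstacle. Using the augmented embedding \eqref{the construction of embedding vector each layer}, I store $\vphi_k^i$, the current iterate $\va_k^{(t)}$, residuals, and the labels. The odd layers, with $4$ heads, and the first layer compute the features and the predictions ${\vphi_k^i}^\top\va_k^{(t)}$. Sigmoid attention is combined with ReLU feed-forward layers to approximate the products; sigmoid pairs $\sigma(x)-\sigma(-x)$ give near-linear maps for small arguments. The even layers, with $8$ heads, i.e.\ four per output $k$, form the gradient average $\frac1N\sum_i({\vphi_k^i}^\top\va_k^{(t)}-d_k^i)\vphi_k^i$. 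The ReLU layer then applies soft-thresholding, written as $\mathrm{ReLU}(z-\lambda\eta)-\mathrm{ReLU}(-z-\lambda\eta)$.

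Each layer therefore performs an inexact proximal gradient step with error $\epsilon_\ell$. The inexact proximal gradient theory for convex composite objectives gives objective suboptimality $O(1/L)+\sum_\ell\epsilon_\ell$. Choosing the per-layer accuracy suitably, this translates into \eqref{upper bound of second term in the final conclusion}, i.e.\ a bound of order $C_5/N^2$ plus a $1/L$ part. The delicate points are keeping all embedding entries bounded so that the sigmoid approximation errors stay uniform, and running the two outputs in parallel without interference; separate coordinate blocks with zero cross-weights handle the latter.

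\textbf{Combination.} Under \eqref{Upper bound of the layers in the Transformer appendix}, the term $nDB\log(BL)/L$ is at most $\frac12\sqrt{\log N/N}$, since $L/\log L\ge 2nDB\sqrt N/\sqrt{\log N}$ and $L\ge B$ gives $\log(BL)\le 2\log L$; the discrete case is analogous with $d$. The term $1/N^2$ is dominated by $\sqrt{\log N/N}$. Therefore both cases of the bound are $O(\sqrt{\log N/N})$, with a constant depending on $\mathcal{G}$ through $\|\hat d_k\|_1$ and its first moment.
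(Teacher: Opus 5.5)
Your pipeline (features, Lasso, inexact proximal gradient emulated two layers per iteration, readout) matches the paper's, but two steps fail as written: the feature construction and the error bookkeeping.

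\textbf{Features.} Sampling frequencies from $|\hat d_k|/\|\hat d_k\|_1$ makes the feature map depend on the particular target $d_k$. In the in-context setting the Transformer weights are fixed, including the features hard-coded in \textbf{FF0}, while the target ranges over $\mathcal{G}$. Only the coefficients, which the Lasso fits from the context, may depend on the task.

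The paper therefore samples from a measure $\Lambda$ independent of $g_k$; this is what the supremum over $\mathcal{G}$ inside the integral in $C_{g_k}$ is for. It then proves the approximation uniformly over $\mathcal{G}\times\calB$ by a union bound over the cover $\calN_{\epsilon_k}$. That is where $\log|\calN_{\epsilon_k}|$, and hence the $nD\log(\cdot)$ in $P_k$, comes from.

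Your density also needs $\int|\hat d_k|<\infty$, which the hypothesis does not give. A finite $\int\|\boldsymbol{\omega}\|_2|\hat d_k|$ says nothing near $\boldsymbol{\omega}={\bm 0}$, and truncating at a large radius only treats the other end. The paper instead expands $g_k(\vz)-g_k({\bm 0})$, whose Fourier integrand is bounded by $\|\boldsymbol{\omega}\|_2\|\vz\|_2|F_{g_k}|$, and keeps $g_k({\bm 0})$ as the constant feature.

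Finally, cosine features cannot be produced exactly by the single ReLU layer \textbf{FF0} within the width $nD+2P_1+2P_2+12$ fixed in the statement. The paper uses ramps $\phi(z)=\sigma_{\textup{ff}}(z+\frac12)-\sigma_{\textup{ff}}(z-\frac12)$ (\Cref{lem:approx-by-phi}, \Cref{approximation theorem in the continuous case}). With these, \textbf{FF0} outputs all features exactly using $2(P_k+1)$ hidden units per output, at the cost of the $\sqrt B/\tau_k$, $\tau_k\epsilon_k$ and $(\log|\calN_{\epsilon_k}|/P_k)^{1/3}$ terms of \Cref{feature_construction_of_functions}.

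\textbf{Error bookkeeping.} With the smallest admissible $P_k\asymp nDB\max\{\log(BL),\dots\}$, your approximation term $nDB\log(BL)/P_k$ is $\Theta(1)$, not $O(1/L)$.

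The proximal gradient rate is not $O(1/L)$ either. Since $\|\vphi_k^i\|_\infty\le 1$, the smoothness constant can only be bounded by order $P_k$ (hence $\mu_k=1/(2P_k)$). So after $(L-1)/2$ steps the suboptimality is of order $P_kC_{g_k}^2/L$ (\Cref{Performance analysis of PGM}). The inexactness also enters multiplied by $(L+P_k)\max_j\|\va_k^{(j)}\|_1$, not as a plain sum. This is why the paper needs the induction $\|\va_k^{(t)}\|_1\lesssim P_k\sqrt N C_{g_k}$ and a very large attention scale $\tau_k$.

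In the paper, the $nDB\log(BL)/L$ you bound in your last paragraph is exactly this optimization term. It enters as $\epsilon_{k,\rm opt}$ through \Cref{Generalization error bound of Lasso estimator}, not through the $C_5/N^2$ readout bound, which only compares the query output with ${\vphi_k^{N+1}}^\top\vb_k^{(L)}$.

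Your approximation term decreases in $P_k$ while the optimization term increases, and you never balance them. Balancing $nDB\log(\cdot)/P_k$ against $P_k/L$ gives at best $\sqrt{nDB\log(\cdot)/L}$. The stated depth condition only guarantees this to be $O((\log N/N)^{1/4})$.

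You cannot repair this by copying the paper. Its composition step only shows that $C_{g_k}^2B^{2/3}(\log|\calN_{\epsilon_k}|/P_k)^{2/3}$ is bounded by a constant under this $P_k$, and then drops it. Making that term $O(\sqrt{\log N/N})$ needs $P_k$ polynomial in $N$, and hence, through $P_k/L$, a stronger depth condition.
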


Following the proof strategy of \cite{li2025transformers}, we divide the proof into four main parts. First, we construct a universal feature representation for the target function class and show that the target function admits a sparse linear representation under the resulting features. Second, we establish that the corresponding sparse coefficients can be recovered by the Lasso estimator. Third, we prove that the inexact proximal gradient method converges to the Lasso solution. Fourth, we construct a Transformer that emulates the inexact proximal gradient iterations, thereby establishing the desired generalization guarantee. Finally, by combining these approximation results, we derive the desired conclusion. The results established in the second and third steps directly follow from existing theoretical analyses in \cite{li2025transformers}. Nevertheless, for completeness and clarity, we briefly summarize the key arguments underlying these steps. In the analysis below, we primarily focus on continuous degrees of freedom. The discrete system can be viewed as a special case of the continuous one via discretization; therefore, the theoretical results developed for continuous quantum wavefunctions naturally extend to the discrete setting. Without loss of generality, we assume that there exists a finite constant $B$ such that, $(\vx_1,\vx_2,\dots,\vx_n)\in \calB = \{\vx_i\in\R^{D}, i\in[n]: \sum_{i=1}^n\|\vx_i\|_2^2\leq B\}$, which is physically reasonable since the wavefunction typically decays rapidly outside a finite region (e.g., due to finite energy) and is also convenient for establishing boundedness in the subsequent theoretical analysis. For the final step, we explicitly construct the Transformer to emulate the inexact proximal gradient iterations. In particular, we show in {Appendix}~\ref{Proof of transformer emulate proximal GD} that the layer-wise attention-head configuration specified in~\eqref{requirement of attenstion layers appendix} is sufficient for this construction.

Throughout the proof, we directly invoke previously established results whenever they are applicable. The arguments that require new technical developments or differ from existing analyses are explicitly highlighted, and complete proofs are provided for these components.

\subsection{Universal Feature Representation of the Target Function}
\label{sec: universal feature representation}

In this subsection, we construct a universal feature representation for the target function class under the assumed bounded domain. Specifically, we show that each target function can be expressed as a linear combination of the constructed features with bounded coefficients, which provides the foundation for the subsequent Lasso-based sparse approximation. To simplify the notation, we define
\begin{eqnarray}
\label{definition of vector z}
\vz = \begin{bmatrix} \vx_{1}^\top & \cdots & \vx_{n}^\top \end{bmatrix}^\top \in \R^{nD},
\end{eqnarray}
and introduce the functions $g_1(\vz) = d_1(\vx_{1}, \dots, \vx_{n})$ and $g_2(\vz) = d_2(\vx_{1}, \dots, \vx_{n})$. Next, let $\calG$ denote the class of functions with bounded first absolute moment on $\{ \vz: \|\vz\|_2^2 \le B \}$, and assume that $g_1, g_2 : \R^{nD} \to \R$ belong to $\calG$. This assumption ensures that their Fourier representations exist.  For $k=1,2$, let $F_{g_k}$ denote the Fourier transform of $g_k$. Then $g_k$ admits the following Fourier representation:
\begin{eqnarray}
    \label{Fourier transform}
    F_{g_k}(\boldsymbol{\omega}) &\!\!\!\!=\!\!\!\!& \frac{1}{2\pi} \int_{\vz} e^{-i \boldsymbol{\omega}^\top \vz} {g_k}(\vz)\, \textup{d}{\vz}, \\
    \label{inverse Fourier transform}
    {g_k}(\vz) &\!\!\!\!=\!\!\!\!& \int_{\boldsymbol{\omega}} e^{i \boldsymbol{\omega}^\top \vz} F_{g_k}(\boldsymbol{\omega})\, \textup{d}\boldsymbol{\omega},
\end{eqnarray}
where $i = \sqrt{-1}$ denotes the imaginary unit. Following the framework in \cite{barron1993universal,li2025transformers}, we introduce the following complexity measure:
\begin{eqnarray}
    \label{C1 definition}
    C_{g_k} = \sup_{g_k\in\calG} |g_k({\bm 0}) | + \int_{\boldsymbol{\omega}}\sup_{g_k\in\calG}\|F_{\nabla g_k}(\boldsymbol{\omega})  \|_2\textup{d}\boldsymbol{\omega}< \infty, \ k=1,2,
\end{eqnarray}
which controls the first moment of the Fourier magnitude distribution over the function class $\calG$ and characterizes the $\ell_1$ norm of the maximal Fourier magnitude of the gradient. Recall that the input space has been restricted to the Euclidean ball $\calB = \{\vz = \begin{bmatrix} \vx_{1}^\top   & \cdots & \vx_{n}^\top \end{bmatrix}^\top \in\R^{nD}: \|\vz\|_2^2 = \sum_{i=1}^n\|\vx_i\|_2^2\leq B\}$. We next introduce a covering structure for the joint space $\calG \times \calB$. A finite set $\calN_{\epsilon_k}$ is said to be an $\epsilon_k$-cover of $\calG \times \calB$ if, for every $(g_k,\vz)\in\calG\times\calB$, there exists $(\wh g_k,\wh{\vz})\in\calN_{\epsilon_k}$ such that
\begin{eqnarray}
    \label{covering inequality}
    \|\vz - \wh {\vz}\|_2\leq \epsilon_k, \ \ \ |g_k(\vz) - g_k({\bm 0}) -  \wh g_k(\wh {\vz}) + \wh g_k({\bm 0})|\leq C_{g_k} \epsilon_k, \  k=1,2.
\end{eqnarray}
Before stating the main result, following \cite[eq. (19)]{li2025transformers}, we introduce the sigmoid function as a smooth approximation of the step function. This function is used to construct a finite set of features for approximating $g_k(\vz) - g_k({\bm 0})$. Specifically, let
\begin{eqnarray}
    \label{smoothed_step_function}
    \phi(z)
= \left( z + \frac{1}{2} \right) \mathbbm{1}\!\left\{ z + \frac{1}{2} > 0 \right\}
 - \left( z - \frac{1}{2} \right) \mathbbm{1}\!\left\{ z - \frac{1}{2} > 0 \right\},
\end{eqnarray}
which satisfies $\lim_{z \to -\infty} \phi(z) = 0$, $\lim_{z \to \infty} \phi(z) = 1$, and $\phi(0) = \frac{1}{2}$.  Based on this construction, we will define a finite collection of features that provides an approximate representation of $g_k(\vz) - g_k({\bm 0})$.
\begin{lemma}
\label{feature_construction_of_functions}
Let $\tau_k > 2$ and $P_k \ge c_k \log |\calN_{\epsilon_k}|$ for some constants $c_k > 0$, $k=1,2$. Then there exists a finite collection of functions $\{\phi_{k,j}: \R^{nD} \to \R \}_{1 \le j \le P_k},  k=1,2$ such that for any $g_k \in \calG$ and any $\vz \in \calB$, with high probability,
\begin{eqnarray}
    \label{difference_of_functions_and_feature}
    \bigg| g_k(\vz) - g_k({\bm 0}) - \frac{1}{P_k} \sum_{j=1}^{P_k} a_{k,j}^\star \phi_{k,j}(\vz) \bigg|
    \lesssim C_{g_k} \left( \frac{\sqrt{B}}{\tau_k} + B^{\frac{1}{3}} \left(\frac{\log |\calN_{\epsilon_k}|}{P_k}\right)^{\frac{1}{3}} + \tau_k \epsilon_k \right),
\end{eqnarray}
for some coefficients $\{a_{k,j}^\star\}_{j=1}^{P_k}$ satisfying $ |g_k({\bm 0})| + \frac{1}{P_k} \sum_{j=1}^{P_k} |a_{k,j}^\star| \lesssim 4 C_{g_k} $. The feature functions are constructed as $\phi_{k,j}(\vz) = \phi ( \tau_k ( \frac{\vw_{k,j}^\top \vz}{\|\vw_{k,j}\|_2} - t_{k,j} ))$ for some constants $\{(\vw_{k,j}, t_{k,j})\}_{1 \le j \le P_k} \subset \R^{nD} \times \R$, $k=1,2$.
\end{lemma}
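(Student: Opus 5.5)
We follow Barron's argument as adapted in \cite{li2025transformers}, in three stages: an integral (infinite-width) representation of $g_k(\vz)-g_k({\bm 0})$ by ridge step functions, smoothing by $\phi$, and random sampling of finitely many ridge features, made uniform over $\calG\times\calB$ via the cover $\calN_{\epsilon_k}$.

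\textbf{Integral representation.} From \eqref{inverse Fourier transform},
\[
g_k(\vz)-g_k({\bm 0})=\int_{\boldsymbol{\omega}}\big(e^{i\boldsymbol{\omega}^\top\vz}-1\big)F_{g_k}(\boldsymbol{\omega})\,\textup{d}\boldsymbol{\omega}.
\]
Take real parts and write $\boldsymbol{\omega}=\|\boldsymbol{\omega}\|_2\vw$ with $\|\vw\|_2=1$. For $\vw^\top\vz\in[-\sqrt B,\sqrt B]$, use
\[
\cos(\|\boldsymbol{\omega}\|_2\vw^\top\vz+\theta)-\cos\theta=-\int_0^{\sqrt B}\|\boldsymbol{\omega}\|_2\sin(\|\boldsymbol{\omega}\|_2t+\theta)\,\mathbbm{1}\{\vw^\top\vz>t\}\,\textup{d}t+(\text{mirror term on }[-\sqrt B,0]).
\]
This writes $g_k(\vz)-g_k({\bm 0})$ as an expectation of $\pm\mathbbm{1}\{\vw^\top\vz>t\}$ under a probability measure $\mu$ on $(\vw,t)$. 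The total mass is bounded by a multiple of $\int\|\boldsymbol{\omega}\|_2|F_{g_k}|=\int\|F_{\nabla g_k}\|_2\le C_{g_k}$; the $\sqrt B$ range of $t$ is absorbed by taking $\mu$ proportional to $\|\boldsymbol{\omega}\||F|$ times uniform $t$. This is exactly where only the first absolute moment (the gradient Fourier $\ell_1$ norm in \eqref{C1 definition}) is needed, rather than integrability of $F_{g_k}$ itself.

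\textbf{Smoothing.} Replace $\mathbbm{1}\{s>0\}$ by $\phi(\tau_k s)$. The two differ only on $|s|\le 1/(2\tau_k)$, where the difference is at most $1$. Integrating over $t$ for fixed $\vw$ gives an error of order $C_{g_k}\sqrt B/\tau_k$, after normalizing by the $t$-range. This is the first term in \eqref{difference_of_functions_and_feature}.

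\textbf{Sampling.} Draw $(\vw_{k,j},t_{k,j})\sim\mu$ i.i.d. for $j\le P_k$, and set $a^\star_{k,j}=\pm Z_k$ where $Z_k\lesssim C_{g_k}$ is the total mass. The coefficient bound $|g_k({\bm 0})|+\frac1{P_k}\sum_j|a^\star_{k,j}|\lesssim 4C_{g_k}$ then follows immediately. The features are $\mu$-sampled but independent of the particular $g_k$, because $\mu$ is built from the envelope $\sup_{g\in\calG}\|F_{\nabla g}\|$ in \eqref{C1 definition}. The coefficients may depend on $g_k$ through the sign and phase, which is handled by allowing $a^\star$ to be $g_k$-dependent rescalings bounded by the envelope.

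For each fixed point of $\calN_{\epsilon_k}$, the summands are bounded by $\lesssim C_{g_k}$. However, a plain Hoeffding bound gives a $P_k^{-1/2}$ rate with a $\sqrt{B}$ factor rather than the stated $B^{1/3}P_k^{-1/3}$. We would follow \cite{li2025transformers}: split the $t$-integral into a near region of width $\delta$, where the variance is small, and a far region, then balance $\delta\sim(B\log|\calN|/P_k)^{1/3}$. A union bound over $\calN_{\epsilon_k}$ then gives the middle term with probability $1-|\calN_{\epsilon_k}|^{-c}$, provided $P_k\ge c_k\log|\calN_{\epsilon_k}|$.

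\textbf{Extension off the net.} For general $(g_k,\vz)$, pick $(\wh g_k,\wh\vz)$ in the net. By \eqref{covering inequality}, the function difference is at most $C_{g_k}\epsilon_k$. Since $\phi$ is $1$-Lipschitz, each feature moves by at most $\tau_k\|\vz-\wh\vz\|\le\tau_k\epsilon_k$, so the averaged feature sum moves by $\lesssim C_{g_k}\tau_k\epsilon_k$. This gives the third term; $\tau_k>2$ keeps the constants clean.

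\textbf{Main obstacle.} We expect the hardest step to be the concentration giving the $B^{1/3}(\log|\calN|/P_k)^{1/3}$ rate uniformly over the net. The transplanting of the variance-splitting argument from the unit ball to radius $\sqrt B$ must be tracked carefully so that $B$ enters only as stated.
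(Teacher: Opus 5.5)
\textbf{Main gap: the $\sqrt B$ cannot be absorbed.} Your skeleton matches the paper's: a ridge-step representation of $g_k(\vz)-g_k(\bm 0)$, smoothing by $\phi$, i.i.d.\ sampling from a $g$-independent measure, a union bound over $\calN_{\epsilon_k}$, and Lipschitz extension off the net using the net point's coefficients. But the step that is supposed to deliver the $B$-free coefficient bound fails. Normalizing $\|\boldsymbol{\omega}\|_2|F_{g_k}(\boldsymbol{\omega})|\,\mathrm{d}\boldsymbol{\omega}\,\mathrm{d}t$ on $|t|\le\sqrt B$ into a probability measure does not absorb the length of the $t$-range. It moves that length into the weights. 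The normalizing constant is $\int|F_{g_k}(\boldsymbol{\omega})|\int_{-\sqrt B}^{\sqrt B}\|\boldsymbol{\omega}\|_2|\sin(\|\boldsymbol{\omega}\|_2 t+\theta_{g_k}(\boldsymbol{\omega}))|\,\mathrm{d}t\,\mathrm{d}\boldsymbol{\omega}$, which is of order $\sqrt B\int\|F_{\nabla g_k}(\boldsymbol{\omega})\|_2\,\mathrm{d}\boldsymbol{\omega}$. So $Z_k\asymp\sqrt B\,C_{g_k}$, and $\frac{1}{P_k}\sum_j|a^\star_{k,j}|=Z_k$ is not $\lesssim C_{g_k}$ once $B\gg 1$. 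This is the same $\sqrt B$ you then meet in Hoeffding: your claims that the summands are $\lesssim C_{g_k}$ and that Hoeffding carries a factor $\sqrt B$ contradict each other.

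\textbf{No choice of measure repairs this.} Take $g(\vz)=\sin(\vw^\top\vz)$ with $\|\vw\|_2=1$, smoothly cut off far outside $\calB$, so that $C_g=O(1)$ uniformly in $B$. On the segment $\vz=s\vw$, $|s|\le\sqrt B$, every feature $s\mapsto\phi(\tau_k(c_js-t_{k,j}))$ with $|c_j|\le 1$ is monotone with values in $[0,1]$. Hence any approximant of the required form has total variation at most $\frac{1}{P_k}\sum_j|a^\star_{k,j}|$ on that segment, while $\sin s$ has variation of order $\sqrt B$. A $B$-free coefficient bound therefore forces a sup-error bounded below by a constant for large $B$. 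Yet the lemma's right-hand side can be made arbitrarily small by taking $\tau_k$ and $P_k$ large and $\tau_k\epsilon_k$ small. As written, the lemma fails for $B$ beyond a constant. A version with coefficients $\lesssim\sqrt B\,C_{g_k}$ and correspondingly rescaled error terms is provable.

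\textbf{How the paper handles this.} The paper does not get around this by a different device. Its proof imports the $B$-free bounds $|A_{g_k}|\le 3m\wt C_{g_k}$ and $\E_{\Lambda}|A_{g_k}|\le 3\wt C_{g_k}$ from Lemma~\ref{approximation theorem in the continuous case}, where $\sqrt B$ multiplies only the approximation error. The same variation argument rules out that scaling. The consistent way to pass from the unit ball is to apply the unit-ball result to $\vz\mapsto g_k(\sqrt B\vz)$, whose Barron constant is $\sqrt B$ times larger; that puts the $\sqrt B$ into the weights too.

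\textbf{Concentration step.} Your concentration step is not an argument, and a near/far split in $t$ is not where the exponent $\frac{1}{3}$ comes from. In the paper, the auxiliary lemma supplies $m$-indexed weights with $|A_{g_k}|\le 3m\wt C_{g_k}$ but $\E|A_{g_k}|\le 3\wt C_{g_k}$, hence variance at most $9m\wt C_{g_k}^2$. Bernstein plus a union bound over $\calN_{\epsilon_k}$ then gives deviation $C_{g_k}\sqrt{m\log|\calN_{\epsilon_k}|/P_k}$. This is balanced against the $C_{g_k}\sqrt B\,\pi/m$ discretization error by choosing $m=\lceil B^{1/3}(P_k/\log|\calN_{\epsilon_k}|)^{1/3}\rceil$. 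Your construction has no such $m$: its summands are uniformly bounded by $Z_k$, so Hoeffding alone gives $Z_k\sqrt{\log|\calN_{\epsilon_k}|/P_k}$, and no splitting is needed.

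\textbf{What your route actually yields.} Carried out with the correct $Z_k\asymp\sqrt B\,C_{g_k}$, your route gives the following bounds:
\begin{itemize}
\item smoothing error $O(C_{g_k}/\tau_k)$, since the step and $\phi(\tau_k\,\cdot)$ differ only on a $t$-window of length $1/\tau_k$ (better than your stated $\sqrt B/\tau_k$);
\item sampling error $\sqrt B\,C_{g_k}\sqrt{\log|\calN_{\epsilon_k}|/P_k}$;
\item off-net error $C_{g_k}\epsilon_k+\sqrt B\,C_{g_k}\tau_k\epsilon_k$;
\item coefficient bound $\lesssim\sqrt B\,C_{g_k}$.
\end{itemize}
That is a correct lemma with the $B$-dependence in the right place, but not the statement you set out to prove.
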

The proof is provided in Appendix~\ref{Proof of feature_construction_of_functions}. This lemma establishes that, for any $g_k \in \calG$ and input $\vz \in \calB$, the deviation $g_k(\vz) - g_k({\bm 0})$ can be effectively captured by a linear combination of the constructed feature functions $\{\phi_{k,j}(\vz)\}$. For the subsequent analysis, we select the parameters $\tau_k = \frac{B^{\frac{1}{4}}}{\sqrt{\epsilon_k}}$ which induces a corresponding constraint on $\epsilon_k$ and consequently affects the choice of $P_k$ through its dependence on $B$; a detailed discussion is deferred to Appendix~\ref{sec: Composition of Approximations}.  Our analysis extends the result in \cite[Lemma 1]{li2025transformers} from the unit ball to a general bounded domain with radius $B$.  When $B=1$, our result recovers the corresponding result in \cite[Lemma 1]{li2025transformers}.

We further choose $\eta_k = C_{g_k} ( 2 B^{\frac{1}{4}} \sqrt{\epsilon_k} + B^{\frac{1}{3}} (\frac{\log |\calN_{\epsilon_k}|}{P_k})^{\frac{1}{3}} )$, which allows us to simplify the bound \eqref{difference_of_functions_and_feature} as $|g_k(\vz) - g_k({\bm 0}) - \frac{1}{P_k}\sum_{j=1}^{P_k} a_{k,j}^\star \phi_{k,j}(\vz)   | \lesssim  \eta_k$. Hence, Lemma~\ref{feature_construction_of_functions} can be specialized to yield the following explicit form:
\begin{lemma}
\label{feature_construction_special_case}
Under the setting of Lemma~\ref{feature_construction_of_functions}, let the parameters be chosen as
\begin{eqnarray}
    \label{choices of parameters}
    \tau_k = \frac{B^{\frac{1}{4}}}{\sqrt{\epsilon_k}}, \qquad
    \eta_k = C_{g_k} \bigg( 2 B^{\frac{1}{4}} \sqrt{\epsilon_k} + B^{\frac{1}{3}} \bigg(\frac{\log |\calN_{\epsilon_k}|}{P_k}\bigg)^{\frac{1}{3}} \bigg),
\end{eqnarray}
Then, for any $g_k \in \calG$ and $\vz \in \calB$, there exists a linear combination of the feature functions $\{\phi_{k,j}(\vz)\}_{j=1}^{P_k}$ such that
\begin{eqnarray}
    \label{final conclusion of Lemma 1}
    \bigg|g_k(\vz) - g_k({\bm 0}) - \frac{1}{P_k}\sum_{j=1}^{P_k} a_{k,j}^\star \phi_{k,j}(\vz)   \bigg| \lesssim  \eta_k.
\end{eqnarray}
with coefficients $\{a_{k,j}^\star\}$ satisfying $ |g_k({\bm 0})| + \frac{1}{P_k} \sum_{j=1}^{P_k} |a_{k,j}^\star| \lesssim 4 C_{g_k} $.
\end{lemma}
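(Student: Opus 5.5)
The plan is to specialize Lemma~\ref{feature_construction_of_functions} directly. Its conclusion \eqref{difference_of_functions_and_feature} holds for any $\tau_k>2$ and any $P_k \ge c_k \log|\calN_{\epsilon_k}|$, with the same feature family $\phi_{k,j}(\vz) = \phi(\tau_k(\vw_{k,j}^\top\vz/\|\vw_{k,j}\|_2 - t_{k,j}))$ and the same coefficient bound $|g_k({\bm 0})| + \frac{1}{P_k}\sum_j |a_{k,j}^\star| \lesssim 4C_{g_k}$. So the argument reduces to substituting the choices \eqref{choices of parameters} into the right-hand side and checking that the lemma's hypotheses remain valid. The coefficient bound then carries over verbatim, because it does not depend on $\tau_k$ or $\epsilon_k$.

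\textbf{Step 1 (substitution).} With $\tau_k = B^{1/4}/\sqrt{\epsilon_k}$, the first term becomes $\sqrt{B}/\tau_k = B^{1/4}\sqrt{\epsilon_k}$. The third term becomes $\tau_k\epsilon_k = B^{1/4}\sqrt{\epsilon_k}$. Their sum is therefore $2B^{1/4}\sqrt{\epsilon_k}$. The middle term $B^{1/3}(\log|\calN_{\epsilon_k}|/P_k)^{1/3}$ is unaffected. Multiplying by $C_{g_k}$ gives exactly $\eta_k$ as defined in \eqref{choices of parameters}. Hence \eqref{difference_of_functions_and_feature} becomes \eqref{final conclusion of Lemma 1}, up to the same universal constant hidden in $\lesssim$.

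\textbf{Step 2 (admissibility of $\tau_k$).} This is the only point needing care. Lemma~\ref{feature_construction_of_functions} requires $\tau_k>2$, which here is equivalent to $\epsilon_k < \sqrt{B}/4$. Since the covering radius $\epsilon_k$ is a free parameter, I will restrict to $\epsilon_k < \sqrt{B}/4$. This is harmless: in the composition argument (Appendix~\ref{sec: Composition of Approximations}) $\epsilon_k$ is taken small, polynomially in $1/L$ or $1/N$, so that $\eta_k$ is small, and such choices satisfy the constraint once $L \ge B$. I will state this restriction explicitly as part of ``the setting of Lemma~\ref{feature_construction_of_functions}.''

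\textbf{Step 3 (remaining hypotheses).} The requirement $P_k \ge c_k\log|\calN_{\epsilon_k}|$ is inherited unchanged. The cover $\calN_{\epsilon_k}$ is defined for whatever $\epsilon_k$ is chosen, so no extra verification is needed there. The ``with high probability'' qualifier also carries over, since it refers to the random draw of $(\vw_{k,j},t_{k,j})$ in the proof of Lemma~\ref{feature_construction_of_functions}, which the substitution does not touch.

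The main obstacle is really only Step 2: ensuring the balanced choice of $\tau_k$ stays in the admissible range. Beyond that, the statement is a direct corollary of Lemma~\ref{feature_construction_of_functions}.
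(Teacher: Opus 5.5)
Your proposal is correct and matches the paper's argument: plug $\tau_k = B^{1/4}/\sqrt{\epsilon_k}$ into Lemma~\ref{feature_construction_of_functions}, so that $\sqrt{B}/\tau_k$ and $\tau_k\epsilon_k$ both equal $B^{1/4}\sqrt{\epsilon_k}$ and the right-hand side becomes $\eta_k$, while the coefficient bound carries over unchanged. Your check that $\tau_k>2$ requires $\epsilon_k<\sqrt{B}/4$ is a point the paper leaves implicit; it is better justified by noting that $\epsilon_k$ is only upper-bounded in Appendix~\ref{sec: Composition of Approximations}, so it can always be taken smaller, than by appealing to $L\ge B$.
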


\subsection{Learning Sparse Coefficients via Lasso}
\label{sec:Approximation of the Target Function Class by the Lasso Estimator}

Building on Lemma~\ref{feature_construction_special_case}, we observe that the target function admits an approximate linear representation in terms of the constructed features with a controlled $\ell_1$-norm on the coefficients. Specifically, the coefficients $\{a_{k,j}^\star\}$ lie within an $\ell_1$-ball of radius $O(C_{g_k})$, as established in Lemma~\ref{feature_construction_special_case}. Consequently, the recovery problem can be formulated as a constrained least-squares estimator over an $\ell_1$-ball, which is equivalent to the standard Lasso formulation. The subsequent step is to employ the Lasso estimator to recover the coefficient vectors $\{a_{k,j}^\star\}$.   For the purpose of the ensuing analysis, we introduce the following vectorized notation: $\va_{k}^\star  = \begin{bmatrix}  a_{k,1}^\star/P_k & \cdots &   a_{k,P_k}^\star/P_k &   g_k({\bm 0})   \end{bmatrix}^\top$ and $\vphi_{k}^{i}  =  \begin{bmatrix} \phi_{k,1}(\vz_i)  & \cdots & \phi_{k,P_k}(\vz_i) & 1    \end{bmatrix}^\top$. We then estimate  the coefficient vectors $\va_k^\star$ from the dataset $\{ (\vz_i, g_k(\vz_i)) \}_{i=1}^{N}$ for $k=1,2$ via the Lasso estimator \cite{tibshirani1996regression}
\begin{eqnarray}
    \label{The Lasso estimator}
    \min_{\va_{k}\in\R^{P_k+1}} h(\va_{k}) \coloneqq \frac{1}{N}\sum_{i=1}^{N}(g_k(\vz_i) - {\vphi_{k}^{i}}^\top \va_{k})^2 + \lambda_k\|\va_{k} \|_1, k=1,2,
\end{eqnarray}
where $\lambda_k, k=1,2$ denote the regularized parameters. The rationale for employing Lasso stems from \Cref{feature_construction_of_functions}, which guarantees that $\|\va_{k}^\star\|_1 = |g_k({\bm 0})| + \frac{1}{P_k}\sum_{j=1}^{P_k} |a_{k,j}^\star|\lesssim  4 C_{g_k}$. Accordingly, we require $\lambda_k \gtrsim \eta_k^2 / C_{g_k}$, which ensures a proper balance between the empirical risk and the $\ell_1$ regularization, consistent with the intrinsic scale of the oracle coefficient vector and the feature approximation error $\eta_k$.  In addition, \eqref{final conclusion of Lemma 1} establishes that $\va_k^\star$ yields an accurate approximation of the target function values $\{ g_k(\vz_i) \}$ on the dataset. Consequently, the subsequent analysis focuses on bounding the prediction error of any estimator $\hat{\va}_k$ that satisfies $\wh \va_{k}$ obeying $h(\wh\va_{k}) - h(\va_{k}^\star) \leq \epsilon_{k,\rm{opt}}, k=1,2$, thereby quantifying the performance of near-optimal Lasso solutions relative to the true coefficient vectors.
\begin{lemma}
\label{Generalization error bound of Lasso estimator}
Let $\lambda_k \ge \Omega(\eta_k^2 / C_{g_k})$, where $\eta_k$ is defined in \eqref{choices of parameters}. Consider any estimator $\wh\va_{k}$ that is independent of $\vz_{N+1}$ and satisfies $h(\wh\va_{k}) - h(\va_k^\star) \le \epsilon_{k,\rm opt}$ for $k=1,2$. Then, with high probability, its $\ell_1$ norm is controlled as $\|\wh\va_{k}\|_1 \lesssim C_{g_k} + \lambda_k^{-1} \epsilon_{k,\rm opt}$, and the expected squared prediction error on a new input $\vz_{N+1}$ satisfies
\begin{eqnarray}
\label{Generalization error of input at N plus 1 1}
\E[({\vphi_{k}^{N+1}}^\top \wh\va_{k} - g_k(\vz_{N+1}))^2 ] \lesssim  \sqrt{\frac{\log N}{N}}(C_{g_k}^2 + \lambda_k^{-2}\epsilon_{k,\rm{opt}}^2 ) + \eta_k^2 + \lambda_k C_{g_k} + \epsilon_{k,\rm{opt}}.
\end{eqnarray}
Here, the expectation is taken over the randomness of $\vz_{N+1}$.
\end{lemma}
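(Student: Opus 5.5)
We will prove the lemma in two parts. First we control the $\ell_1$ norm of the near-optimizer using the definition of $h$. Then we bound the out-of-sample error through a uniform deviation over an $\ell_1$-ball, using that the features $\phi_{k,j}$ take values in $[0,1]$.

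\textbf{Step 1 ($\ell_1$ control).} From $h(\wh\va_k)\le h(\va_k^\star)+\epsilon_{k,\rm opt}$ we get
\[
\lambda_k\|\wh\va_k\|_1\le \frac{1}{N}\sum_{i=1}^N (g_k(\vz_i)-{\vphi_k^i}^\top\va_k^\star)^2+\lambda_k\|\va_k^\star\|_1+\epsilon_{k,\rm opt}.
\]
By Lemma~\ref{feature_construction_special_case}, the empirical residual of $\va_k^\star$ is at most $O(\eta_k^2)$ pointwise on $\calB$, so it is $O(\eta_k^2)$ on the data (with high probability, as in that lemma). Also $\|\va_k^\star\|_1\lesssim 4C_{g_k}$. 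Dividing by $\lambda_k$ and using $\lambda_k\gtrsim \eta_k^2/C_{g_k}$ gives $\|\wh\va_k\|_1\lesssim C_{g_k}+\lambda_k^{-1}\epsilon_{k,\rm opt}$. The same inequality also yields the empirical risk bound
\[
\frac1N\sum_i({\vphi_k^i}^\top\wh\va_k-g_k(\vz_i))^2\lesssim \eta_k^2+\lambda_k C_{g_k}+\epsilon_{k,\rm opt}.
\]

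\textbf{Step 2 (generalization).} Set $R=C_{g_k}+\lambda_k^{-1}\epsilon_{k,\rm opt}$ and consider the loss class $\{\vz\mapsto(\vphi_k(\vz)^\top\va-g_k(\vz))^2:\|\va\|_1\le R\}$. Since $\|\vphi_k(\vz)\|_\infty\le 1$ and $|g_k|\lesssim C_{g_k}$ on $\calB$ (by the Fourier representation \eqref{inverse Fourier transform} together with \eqref{C1 definition}), every loss in the class is bounded by $O(R^2+C_{g_k}^2)$. We would then bound the uniform deviation between population and empirical risk over this class by $O\big((R^2+C_{g_k}^2)\sqrt{\log N/N}\big)$, following \cite{li2025transformers}. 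There are two routes:
\begin{itemize}
\item a covering-number argument for the $\ell_1$-ball against the fixed dictionary of $P_k+1$ bounded features, via Maurey sparsification at scale $1/\sqrt N$, which produces a $\log N$ entropy factor independent of $P_k$ up to logs;
\item Rademacher complexity of $\ell_1$-bounded linear functions, of order $R\sqrt{\log P_k/N}$, combined with the contraction lemma for the squared loss.
\end{itemize}
Because $\wh\va_k$ is independent of $\vz_{N+1}$ and $\vz_{N+1}$ has the same law as the $\vz_i$, the conditional expectation over $\vz_{N+1}$ is exactly the population risk at $\wh\va_k$. Adding the uniform deviation to the empirical risk bound from Step 1 gives \eqref{Generalization error of input at N plus 1 1}, using $(R^2)\lesssim C_{g_k}^2+\lambda_k^{-2}\epsilon_{k,\rm opt}^2$.

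\textbf{Main obstacle.} The delicate part is obtaining the $\sqrt{\log N/N}$ rate without a $\log P_k$ dependence. The dictionary depends on the random construction of Lemma~\ref{feature_construction_of_functions}, but it is fixed before the data are seen, so conditioning on it is legitimate. I would try Maurey's empirical method first: approximate each $\va$ by an average of $O(\sqrt N)$ signed scaled atoms $\pm R\,\ve_j$, which gives an $O(R/N^{1/4})$ sup-norm error, and apply a union bound over the resulting finite net. If that loses too much, I would fall back on the Rademacher bound with $\log P_k=O(\log N)$, which holds under the choice \eqref{requirement Pk of the layers in the Transformer}.
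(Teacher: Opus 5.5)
Your Step~1 is correct. The basic inequality $h(\wh\va_k)\le h(\va_k^\star)+\epsilon_{k,\rm opt}$, combined with the $O(\eta_k^2)$ empirical residual of $\va_k^\star$ and $\|\va_k^\star\|_1\lesssim C_{g_k}$, yields both $\|\wh\va_k\|_1\lesssim C_{g_k}+\lambda_k^{-1}\epsilon_{k,\rm opt}$ and the empirical-risk bound $\eta_k^2+\lambda_kC_{g_k}+\epsilon_{k,\rm opt}$. Your use of the independence of $\wh\va_k$ from $\vz_{N+1}$ is also right.

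One minor correction on boundedness. The Fourier representation with \eqref{C1 definition} only gives $|g_k(\vz)-g_k(\mathbf{0})|\le\|\vz\|_2\int\|F_{\nabla g_k}(\boldsymbol{\omega})\|_2\,\textup{d}\boldsymbol{\omega}$, hence $\sup_{\calB}|g_k|\le(1+\sqrt{B})\,C_{g_k}$, not $\lesssim C_{g_k}$. To keep constants free of $B$, bound $|g_k|\le\|\va_k^\star\|_1+O(\eta_k)\lesssim C_{g_k}+\eta_k$ on $\calB$ via Lemma~\ref{feature_construction_special_case}, as the paper itself does in Appendix~\ref{Proof of transformer emulate proximal GD}.

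The real gap is in Step~2: neither route produces the $P_k$-free factor $\sqrt{\log N/N}$.

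\textbf{The Maurey route fails on three counts.}
\begin{itemize}
\item The empirical method controls $\vphi_k(\vz)^\top(\bar\va-\va)$ in $L^2(Q)$ for fixed measures $Q$, not in sup-norm over $\calB$.
\item The $k$-term averages of the $2(P_k+1)$ atoms $\pm R\ve_j$ form a set of log-cardinality $k\log(2P_k+2)$, so the entropy does depend on $P_k$.
\item With $k\asymp\sqrt N$, both the sparsification error and the union-bound deviation are of order $N^{-1/4}$. Optimizing over $k$ gives at best $(\log P_k/N)^{1/4}$, a strictly worse rate.
\end{itemize}

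\textbf{The Rademacher route is the right tool, but it loses a $\log P_k$.} Massart's lemma over the $2(P_k+1)$ extreme points, contraction with constant $2(R+\sup_{\calB}|g_k|)$, and McDiarmid at confidence $1-N^{-c}$ give a deviation of order $(R+C_{g_k}+\eta_k)^2\sqrt{\log(NP_k)/N}$. Your claim that \eqref{requirement Pk of the layers in the Transformer} forces $\log P_k=O(\log N)$ is unfounded. That display is only a lower bound: it makes $P_k$ grow at least linearly in $nDB$, while $N$ is a free parameter (fix $N$ and let $n\to\infty$). Exploiting the ridge structure of the $\phi_{k,j}$ would only trade $\log P_k$ for the dimension $nD$.

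As written, your argument proves the lemma with $\log N$ replaced by $\log(NP_k)$. The stated form needs an explicit extra hypothesis such as $P_k\le N^{O(1)}$, and you should state it. The paper offers no argument of its own here; it defers to the noiseless case of Lemma~2 in \cite{li2025transformers}.

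Under that hypothesis, a shorter derivation gives exactly the stated shape without a deterministic radius or contraction. Let $\mathcal{R},\widehat{\mathcal{R}}$ be the population and empirical risks, set $\widehat\Sigma=\frac1N\sum_i\vphi_k^i{\vphi_k^i}^\top$, $\widehat\beta=\frac1N\sum_i g_k(\vz_i)\vphi_k^i$, $\widehat\gamma=\frac1N\sum_i g_k(\vz_i)^2$, and let $\Sigma,\beta,\gamma$ be their means. Then for all $\va$ simultaneously,
\[
|\mathcal{R}(\va)-\widehat{\mathcal{R}}(\va)|\le\|\va\|_1^2\,\|\widehat\Sigma-\Sigma\|_{\max}+2\|\va\|_1\,\|\widehat\beta-\beta\|_\infty+|\widehat\gamma-\gamma|.
\]
Entrywise Hoeffding with a union bound over $O(P_k^2)$ entries makes the three deviations $\lesssim\sqrt{\log(NP_k)/N}$ times $1$, $\sup_{\calB}|g_k|$ and $\sup_{\calB}|g_k|^2$, respectively. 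Substituting $\|\wh\va_k\|_1\lesssim C_{g_k}+\lambda_k^{-1}\epsilon_{k,\rm opt}$ then gives the term $\sqrt{\log(NP_k)/N}\,(C_{g_k}^2+\lambda_k^{-2}\epsilon_{k,\rm opt}^2)$, with the $\eta_k^2$ contribution absorbed.
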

The proof of \Cref{Generalization error bound of Lasso estimator} follows directly from the noiseless setting considered in \cite[Lemma 2]{li2025transformers}. Therefore, we omit the proof for brevity. Note that the parameters in \eqref{Generalization error of input at N plus 1 1} will be specified in {Appendix}~\ref{sec: Composition of Approximations}, which then determines the prediction error of the Lasso estimator as stated in \eqref{upper bound of first term in the final conclusion}.

\subsection{Approximating the Lasso Solution via Inexact Proximal Gradient Method}
\label{sec: convergence of IPGM}

In this subsection, we show that the Lasso solution can be approximated by a finite number of inexact proximal gradient iterations. Although the prediction error of the Lasso estimator has been characterized in the previous subsection, the optimization problem in \eqref{The Lasso estimator} generally does not admit a closed-form solution. To address this issue, we consider an approximate solution obtained via the (inexact) proximal gradient method (PGM). Specifically, starting from the initialization $\va_{k}^{(0)} = {\bf 0}$, the iterative update at step $t = 1,\dots,T$ takes the form
\begin{eqnarray}
    \label{The update of PGM}
    \va_{k}^{(t)} = \calP_{\mu_k\lambda_k}\bigg(\va_{k}^{(t-1)} - \frac{2\mu_k}{N}\sum_{i=1}^{N}({\vphi_{k}^{i}}^\top \va_{k}^{(t-1)} - g_k(\vz_i) )\vphi_{k}^{i}  \bigg) + \ve_k^{(t)},
\end{eqnarray}
where $\mu_k$ is the step size, $\calP_{\mu_k\lambda_k}(z) \coloneqq \sign(z)\max\{|z| - \mu_k\lambda_k, 0  \}$ represents the entrywise soft-thresholding operator, and $\ve_k^{(t)}$ captures the inexactness of the update. It will be shown that after $T = \frac{L-1}{2}$ iterations, the resulting iterate $\va_k^{(T)}$ satisfies the following guarantee:
\begin{lemma}
\label{Performance analysis of PGM}
Suppose the proximal gradient method (PGM) is initialized at $\va_k^{(0)} = {\bf 0}$ and run for
$T = \frac{L-1}{2}$ iterations with step size $\mu_k = \frac{1}{2P_k}$.
Assume the regularization parameter satisfies $\lambda_k \gtrsim \sqrt{\frac{\log N}{N}}\,(C_{g_k}+\eta_k)$, and the inexact updates obey $\|\ve_k^{(t)}\|_1 \le \epsilon_{k,\rm app}$ for all $t\le T$, where $\epsilon_{k,\rm app} \lesssim C_{g_k}$. Then, with high probability, the iterate $\va_k^{(T)}$ satisfies
\begin{eqnarray}
    \label{The update of PGM final step}
    h(\va_{k}^{(T)}) - h(\va_{k}^\star)  \leq \frac{c_1P_kC_{g_k}^2}{L} + c_1(L + P_k)\epsilon_{k,\rm{app}} (C_{g_k} + \eta_k+  \max_{1\leq j\leq T}\|\va_{k}^{(j)} \|_1  + \lambda_k),
\end{eqnarray}
where $c_1$ is a universal constant. Moreover, the intermediate iterates satisfy the following bound for all $t\le T$: $\|\va_{k}^{(t)}\|_1 \lesssim C_{g_k} + \frac{P_k C_{g_k}^2}{\lambda_k t} + \sqrt{N}(P_k + t)\max_{1\leq j\leq t} \|\ve_k^{(j)}\|_1 + \frac{P_k + t}{\lambda_k}\max_{1\leq j\leq t} (\|\ve_k^{(j)}\|_1 \|\va_{k}^{(j)} \|_1)$.
\end{lemma}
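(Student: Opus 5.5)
The plan is to treat \eqref{The update of PGM} as an inexact proximal gradient method on the composite objective $h=f_k+\lambda_k\|\cdot\|_1$, with smooth part $f_k(\va)=\frac1N\sum_i(g_k(\vz_i)-{\vphi_k^i}^\top\va)^2$, and to follow the standard convergence analysis of inexact proximal methods (as in Schmidt--Le Roux--Bach, adapted in \cite{li2025transformers}).

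\textbf{Smoothness and step size.} The features satisfy $\phi_{k,j}\in[0,1]$ and the last coordinate equals $1$, so $\|\vphi_k^i\|_2^2\le P_k+1$. Hence $\nabla f_k$ is Lipschitz with constant $L_f\le 2(P_k+1)$. Up to the harmless $+1$ (absorbed into constants, or handled by taking $\mu_k=1/(2(P_k+1))$), the choice $\mu_k=1/(2P_k)$ is the usual step $1/L_f$.

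\textbf{One-step inequality.} Write the exact proximal step as $\tilde\va^{(t)}$, so that $\va^{(t)}=\tilde\va^{(t)}+\ve^{(t)}$. For the exact step I would use the descent lemma together with the three-point (prox) inequality against the comparator $\va_k^\star$:
\[
h(\tilde\va^{(t)})-h(\va^\star)\le \tfrac{1}{2\mu_k}\big(\|\va^{(t-1)}-\va^\star\|_2^2-\|\tilde\va^{(t)}-\va^\star\|_2^2\big).
\]
Replacing $\tilde\va^{(t)}$ by $\va^{(t)}$ then costs two error terms.
\begin{itemize}
\item The change in $h$, bounded via $|h(\va+\ve)-h(\va)|\le \|\ve\|_1\big(\|\nabla f_k\|_\infty+\lambda_k\big)+O(P_k\|\ve\|_1^2)$. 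Since each entry of $\nabla f_k(\va)$ is at most $2\max_i|g_k(\vz_i)-{\vphi_k^i}^\top\va|\lesssim C_{g_k}+\eta_k+\|\va\|_1$, this term is $\epsilon_{k,\rm app}\big(C_{g_k}+\eta_k+\max_j\|\va^{(j)}\|_1+\lambda_k\big)$.
\item The change in the distance term, $\frac1{2\mu_k}\big(\|\tilde\va-\va^\star\|_2^2-\|\va-\va^\star\|_2^2\big)\le P_k\|\ve\|_2\big(2\|\va-\va^\star\|_2+\|\ve\|_2\big)$. This is bounded by $P_k\epsilon_{k,\rm app}$ times the same $\ell_1$ quantities, using $\|\va^\star\|_1\lesssim C_{g_k}$ and $\epsilon_{k,\rm app}\lesssim C_{g_k}$.
\end{itemize}

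\textbf{Telescoping.} Summing over $t=1,\dots,T$ and using monotonicity up to errors (or averaging the iterates) gives the rate. With $T=(L-1)/2$,
\[
h(\va^{(T)})-h(\va^\star)\lesssim \frac{\|\va^\star\|_2^2}{2\mu_kT}+(T+P_k)\,\epsilon_{k,\rm app}(\cdots).
\]
The first term is at most $\frac{P_kC_{g_k}^2}{L}$ because $\|\va^\star\|_2\le\|\va^\star\|_1\lesssim C_{g_k}$. This yields \eqref{The update of PGM final step}.

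\textbf{Iterate bound.} Here I use the Lasso structure: $\lambda_k\|\va^{(t)}\|_1\le h(\va^{(t)})$. The partial-sum version of the telescoped bound gives $h(\va^{(t)})\le h(\va^\star)+\frac{P_kC_{g_k}^2}{t}+(P_k+t)\max_j\|\ve^{(j)}\|_1(\cdots)$. Bounding $h(\va^\star)\lesssim \eta_k^2+\lambda_kC_{g_k}$ via \Cref{feature_construction_special_case}, and using $\lambda_k\gtrsim\eta_k^2/C_{g_k}$, then dividing by $\lambda_k$, produces the $C_{g_k}+\frac{P_kC_{g_k}^2}{\lambda_kt}+\frac{P_k+t}{\lambda_k}\max_j\|\ve^{(j)}\|_1\|\va^{(j)}\|_1$ terms.

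\textbf{Main obstacle.} The remaining $\sqrt N(P_k+t)\max_j\|\ve^{(j)}\|_1$ term comes from the parts of the error that are not divided by $\lambda_k$, namely the $C_{g_k}+\eta_k+\lambda_k$ pieces. Since $\lambda_k\gtrsim\sqrt{\log N/N}\,(C_{g_k}+\eta_k)$, dividing those pieces by $\lambda_k$ gives $(C_{g_k}+\eta_k)/\lambda_k\lesssim\sqrt N$. This is the reason for the lower bound on $\lambda_k$.

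The bound is circular, because $\max_j\|\va^{(j)}\|_1$ appears on the right-hand side. I would state it exactly as in the lemma, with the maximum inside, rather than trying to close the recursion. Closing it is only needed later, when $\epsilon_{k,\rm app}$ is chosen small enough, and that happens outside this lemma. The high-probability qualifier enters only through the Lasso-side facts, namely the bound on $h(\va^\star)$ and the empirical residuals; I would cite these from \Cref{feature_construction_special_case}.
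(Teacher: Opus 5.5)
Your proposal is correct and follows the route the paper relies on. The paper defers to the proof of Lemma~3 in \cite{li2025transformers}, whose intermediate bound $h(\va_k^{(t)})-h(\va_k^\star)\lesssim \frac{P_k\|\va_k^\star\|_1^2}{t}+(P_k+t)\max_{j\le t}\|\ve_k^{(j)}\|_1(\|\va_k^{(j)}\|_1+\eta_k+C_{g_k}+\lambda_k)$ is quoted in Appendix~\ref{Proof of transformer emulate proximal GD}; this is the inequality you get by telescoping, and the iterate bound then follows from $\lambda_k\|\va_k^{(t)}\|_1\le h(\va_k^{(t)})$, exactly as in the $\ell_1$ control of \Cref{Generalization error bound of Lasso estimator}.

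Two small fixes. First, bound the quadratic remainder in $f_k(\tilde\va+\ve)-f_k(\tilde\va)$ by $\|\ve\|_1^2$, using $|{\vphi_k^i}^\top\ve|\le\|\ve\|_1$, rather than by $P_k\|\ve\|_1^2$. That remainder is accumulated $T$ times in the last-iterate (monotonicity-up-to-errors) step, so with the extra $P_k$ it would leave a $TP_k\epsilon_{k,\rm app}^2$ term outside the stated $(L+P_k)$ form; averaging the iterates would not bound $\va_k^{(T)}$ at all. Second, your step $h(\va_k^\star)/\lambda_k\lesssim C_{g_k}$ uses $\lambda_k\gtrsim\eta_k^2/C_{g_k}$, which is not among this lemma's hypotheses. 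It is assumed in \Cref{Generalization error bound of Lasso estimator} and satisfied by the $\lambda_k$ of \Cref{lemma: transformer emulate proximal GD}, so state it explicitly.
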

The proof of \Cref{Performance analysis of PGM} is identical to the corresponding analysis in \cite[Lemma 3]{li2025transformers}. Therefore, we omit the proof here. Note that \Cref{Performance analysis of PGM} yields the prediction error bound for the Lasso estimator obtained via the PGM. After $T=(L-1)/2$ iterations, the iterate $\va_k^{(T)}$ attains the required optimization precision, which establishes the necessary foundation for the ensuing analysis of its Transformer-based approximation.

\subsection{Transformer Implementation of the Inexact Proximal Gradient Method}
\label{sec:Approximation of the Lasso Solution by a Transformer}

In the previous step, we established an optimization error bound for the Lasso estimator obtained via the PGM. We now construct a Transformer architecture that approximates the PGM iterations, thereby establishing a connection between the Transformer and the optimization procedure. Unlike the single-output construction in \cite[Lemma 4]{li2025transformers}, our analysis considers two coupled output functions. Consequently, the input embedding matrix needs to be augmented to simultaneously encode the information associated with both outputs throughout the inexact proximal gradient iterations.  To facilitate the subsequent analysis, we introduce the following input embedding matrix:
\begin{eqnarray}
    \label{the construction of embedding vector each layer initialization}
    \mH^{(0)} := \begin{bmatrix}\vz_{1}  & \cdots & \vz_{N}  & \vz_{N+1}  \\  1  & \cdots & 1  & 1  \\  g_{1}(\vz_1) & \cdots & g_{1}(\vz_N) & 0 \\  g_{2}(\vz_1) & \cdots & g_{2}(\vz_N) & 0 \\  1 & \cdots & 1 & 0 \\ {\bf 0} & \cdots & {\bf 0} & {\bf 0} \end{bmatrix}\in \R^{(nD + 2P_1 + 2P_2 + 12) \times (N+1)}.
\end{eqnarray}
With this construction, the Transformer can emulate the proximal gradient updates, as formalized in the following lemma.
\begin{lemma}
\label{lemma: transformer emulate proximal GD}
Consider a Transformer with $L$ layers, and input embedding matrix $\mH^{(0)}$. Let $M_\ell$ denote the number of
attention heads in the $\ell$-th attention layer, where
\begin{eqnarray}
    \label{requirement of attenstion layers appendix}
    M_\ell=
\begin{cases}
1, & \ell=1,\\
4, & \ell>1,\ \ell\ \text{is odd},\\
8, & \ell\ \text{is even}.
\end{cases}
\end{eqnarray}
The output of the Transformer produces vectors $\vb_1^{(L)}$ and $\vb_2^{(L)}$ that correspond to $\va_1^{(\frac{L-1}{2})}$ and $\va_2^{(\frac{L-1}{2})}$ generated by the (inexact) PGM in \eqref{The update of PGM} for $\lambda_k, k=1,2$, with approximation errors $\ve_k^{(t)}$. Specifically, the parameters can be chosen such that $\max_{1\leq t \leq \frac{L-1}{2}} \| \ve_k^{(t)} \|_1  \lesssim  \frac{C_{g_k}}{(L + P_k)P_k N}, k=1,2$ and
\begin{eqnarray}
    \label{choice of lambda in the transformer}
    \lambda_k &\!\!\!\!\asymp\!\!\!\!& \bigg(\frac{\log N}{N} \bigg)^{\frac{1}{6}}C_{g_k}^{-\frac{1}{3}}\bigg(\sqrt{\frac{\log N}{N}}C_{g_k}^2 + \eta_k^2 + \frac{P_kC_{g_k}^2}{L}    \bigg)^{\frac{2}{3}} + \sqrt{\frac{\log N}{N}}C_{g_k} + C_{g_k}^{-1} \eta_k^2, k = 1,2.
\end{eqnarray}
Under these choices, the following bounds hold:
\begin{eqnarray}
    \label{upper bound of loss at N+1 in the transformer}
    h(\vb_k^{(L)}) - h(\va_k^\star) &\!\!\!\!\lesssim\!\!\!\!& \sqrt{\frac{\log N}{N}}C_{g_k}^2 + \eta_k^2 + \frac{P_k C_{g_k}^2}{L}, k = 1,2,\\
    \label{upper bound of output at N+1 in the transformer}
    ({\vphi_{k}^{N+1}}^\top \vb_k^{(L)} - \wh \vy_{N+1}(k)  )^2  &\!\!\!\!\lesssim\!\!\!\!& \frac{C_{g_k}^2}{N^2}, k=1,2.
\end{eqnarray}
\end{lemma}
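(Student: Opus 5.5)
The construction is explicit and layer-by-layer, following \cite[Lemma 4]{li2025transformers} but with two coupled output channels. The first layer, with a single attention head, acts as a feature layer. Its feed-forward block computes, for every token $i\in[N+1]$, the features $\phi_{k,j}(\vz_i)=\phi(\tau_k(\vw_{k,j}^\top\vz_i/\|\vw_{k,j}\|_2-t_{k,j}))$ for $k=1,2$ and $j\le P_k$. Each $\phi$ in \eqref{smoothed_step_function} is the difference of two ReLUs of an affine function of $(\vz_i,1)$. The constant row of $\mH^{(0)}$ supplies the biases, so one ReLU layer suffices. The outputs are written into $P_1+P_2$ of the reserved zero rows. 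The remaining $P_1+P_2$ reserved rows will store the iterates $\vb_1,\vb_2$, initialized at $\vb_k^{(1)}=\mathbf 0=\va_k^{(0)}$. The last few rows are scratch space for the residuals and for the two predicted outputs.

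Next, each subsequent pair of layers (odd $\ell>1$ followed by even $\ell$) emulates one step of \eqref{The update of PGM} simultaneously for $k=1,2$, so $T=(L-1)/2$ iterations fit in $L$ layers. In the first layer of the pair, attention computes the current predictions ${\vphi_k^i}^\top\vb_k$ and residuals $r_{k,i}={\vphi_k^i}^\top\vb_k-g_k(\vz_i)$ on every token. This is a bilinear quantity in the current embedding. We realize it with sigmoid heads in their near-linear regime, using the identity $\sigma_{\rm attn}(x)-\sigma_{\rm attn}(-x)\approx x/2$ for small $x$, scaled by a small parameter $\delta$ and rescaled through $\mV$. Two heads per output channel give the four heads. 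The $1/N$ normalization of \eqref{The definition of attention layer}, together with the indicator row (which is $1$ on context tokens and $0$ on the query), then lets the even layer's attention form the gradient $\frac{2}{N}\sum_{i\le N}r_{k,i}\vphi_k^i$ and add $-\mu_k$ times it into the $\vb_k$ rows of every column. Eight heads suffice here: two outputs, each requiring the gradient and a correction for the query column. The feed-forward block then applies soft-thresholding, via $\calP_{\mu\lambda}(z)=\sigma_{\rm ff}(z-\mu\lambda)-\sigma_{\rm ff}(-z-\mu\lambda)$ entrywise. It also overwrites the scratch rows so that the final $(E-1,E)$ entries of the query column equal ${\vphi_k^{N+1}}^\top\vb_k$. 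The residual connections are used to cancel old values.

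Finally, we set the error bookkeeping. The only inexactness comes from the sigmoid linearization, whose error is $O(\delta^2)$ times bounded quantities. These quantities are bounded because $|\phi|\le1$, because $|g_k|\lesssim C_{g_k}$ on $\calB$, and because of the a priori $\ell_1$ bound on the iterates from \Cref{Performance analysis of PGM}. Choosing $\delta$ small enough gives $\|\ve_k^{(t)}\|_1\lesssim C_{g_k}/((L+P_k)P_kN)$. The readout discrepancy is likewise at most $C_{g_k}/N$, which yields \eqref{upper bound of output at N+1 in the transformer}. Substituting $\epsilon_{k,\rm app}$ into \eqref{The update of PGM final step} makes the second term $O(C_{g_k}^2/N)$ after bounding $\max_j\|\va_k^{(j)}\|_1$. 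The step size is admissible because $\mu_k=1/(2P_k)$ is at most the inverse smoothness constant, since $\|\vphi_k^i\|_2^2\le P_k+1$. With $\lambda_k$ as in \eqref{choice of lambda in the transformer}, which satisfies the lower bounds required in \Cref{Performance analysis of PGM}, this gives \eqref{upper bound of loss at N+1 in the transformer}. The main obstacle I expect is closing this iterate-norm bound self-consistently. The $\ell_1$ bound in \Cref{Performance analysis of PGM} depends on $\max_j\|\ve^{(j)}\|_1\|\va^{(j)}\|_1$. I would handle this by induction on $t$, using $\epsilon_{k,\rm app}\sqrt N(P_k+t)\ll C_{g_k}$ and checking that the chosen $\lambda_k$ keeps $P_kC_{g_k}^2/(\lambda_k t)$ controlled. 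The linearization error must also be kept uniform over the layers.
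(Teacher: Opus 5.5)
Your plan has the same skeleton as the paper's: a one-head first layer whose ReLU block writes the features, sigmoid attention in its linear regime for the gradient step, ReLU soft-thresholding, and an induction on $\|\va_k^{(t)}\|_1$ and $\|\ve_k^{(t)}\|_1$. You split a PGM step differently, though.

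In the paper, the 8-head (even) layer produces the whole gradient at once. Because $\vb_k$ is replicated in every column, $\tau_k^{-1}\mu_k\vphi_{k,i}^\top\vb_k$ is formed directly as a query--key score. It sits next to heads with scores $\tau_k^{-1}\mu_k g_{k,i}$, $\tau_k^{-1}\mu_k(1-w_i)\wt g_k$ and $0$, all with values $\pm 8\tau_k\vphi_{k,i}$. The following 4-head (odd) layer only computes the query prediction $\wt g_k\approx{\vphi_k^{N+1}}^\top\vb_k$ and broadcasts it to all columns. That scalar does double duty. It is the readout, and in the next gradient layer it cancels the query column's spurious $-{\vphi_k^{N+1}}^\top\vb_k$ term up to $\wt f_k$, so no masking constants are needed.

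Your scheme of per-token residuals followed by aggregation is a legitimate alternative. Your separate linearization scale $\delta$ is also cleaner than the paper's reuse of the feature sharpness $\tau_k$, which forces $\tau_k\gtrsim P_k^2N^2C_{g_k}$. As written, however, three steps need repair, and one estimate is off.

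First, the pairing is mis-indexed. With the prescribed $M_\ell$, layer $2$ (8 heads) comes right after the feature layer and layer $L$ (4 heads) is last. So your blocks must be $(2,3),\dots,(L-1,L)$, in the order (gradient, residuals). The first gradient step then uses $r_{k,i}=-g_k(\vz_i)$, which is valid because $\vb_k=\mathbf 0$; for instance, let the first ReLU block copy $-g_{k,i}$ into the scratch row. As you wrote it, layer $2$ has no role and only $(L-3)/2$ steps are performed, which contradicts $\vb_k^{(L)}=\va_k^{((L-1)/2)}$.

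Second, the feed-forward block cannot place ${\vphi_k^{N+1}}^\top\vb_k$ in rows $E-1,E$. It acts columnwise and piecewise-linearly, while this quantity is bilinear in the freshly thresholded $\vb_k$. The readout must come from the final 4-head residual layer, whose query-column output is ${\vphi_k^{N+1}}^\top\vb_k-g_{k,N+1}={\vphi_k^{N+1}}^\top\vb_k$. With $\delta\lesssim(P_k^2N^2C_{g_k})^{-1}$, its linearization error gives the $C_{g_k}/N$ readout bound.

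Third, you cannot cancel the query column the way the paper does. In your scheme $r_{k,N+1}$ is stored only in the query column. Removing $(\sigma_\textup{attn}(\delta r_{k,N+1})-\tfrac12)\vphi_k^{N+1}$ would need a within-column product such as $(1-w_i)r_{k,i}$ or $(1-w_i)\vphi_k^i$. Linear $\mQ,\mK,\mV$ cannot produce that; the paper's $(1-w_i)\wt g_k$ works only because $\wt g_k$ sits in every column. Instead, put a large negative offset $-\Lambda(1-w_i)$, built from the constant and indicator rows, into the scores of both heads of each channel. This suppresses the query term to $O(e^{-\Lambda})$, and that term must then be added to your error budget.

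Finally, the induction only yields $\max_j\|\va_k^{(j)}\|_1\lesssim P_k\sqrt N C_{g_k}$. The reason is that $P_kC_{g_k}^2/(\lambda_k t)$ at $t=1$ is controlled only at that order. Hence the error term in the PGM bound is $O(C_{g_k}^2/\sqrt N)$, not $O(C_{g_k}^2/N)$. This is still absorbed by $\sqrt{\log N/N}\,C_{g_k}^2$.
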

The proof is deferred to Appendix~\ref{Proof of transformer emulate proximal GD}. The above result shows that the solution of the Lasso problem can be effectively approximated by a Transformer architecture with depth $L$. Note that, compared with \cite[Lemma 4]{li2025transformers}, our setting requires an extended input embedding matrix to accommodate the two-dimensional Transformer output. This modification changes the intermediate representations, and therefore the corresponding derivations need to be modified accordingly.

\subsection{Composition of Approximations}
\label{sec: Composition of Approximations}

By combining the results established in {Appendices}~\ref{sec: universal feature representation}-\ref{sec:Approximation of the Lasso Solution by a Transformer}, the proof of \Cref{Lemma of final conclusion of all lemmas} follows.
\begin{proof}

According to \eqref{upper bound of loss at N+1 in the transformer} in \Cref{lemma: transformer emulate proximal GD}, we have
\begin{eqnarray}
    \label{upper bound of loss at N+1 in the transformer appendix lemma 5}
    h(\vb_k^{(L)}) - h(\va_k^\star) \leq  \epsilon_{k,\rm{opt}},
\end{eqnarray}
where $\epsilon_{k,\rm{opt}} \asymp \sqrt{\frac{\log N}{N}}C_{g_k}^2 + \eta_k^2 + \frac{P_k C_{g_k}^2}{L}$ and $\vb_k^{(L)} = \wh\va_{k}$. Combining the above bound with
\eqref{Generalization error of input at N plus 1 1} in \Cref{Generalization error bound of Lasso estimator}, we obtain
\begin{eqnarray}
    \label{Generalization error of input at N plus 1 1 appendix lemma 5}
    &\!\!\!\! \!\!\!\!&\E[({\vphi_{k}^{N+1}}^\top \wh\va_{k} - g_k(\vz_{N+1}))^2 ]\nonumber\\
    &\!\!\!\!\lesssim\!\!\!\!& \sqrt{\frac{\log N}{N}}(C_{g_k}^2 + \lambda_k^{-2}\epsilon_{k,\rm{opt}}^2 ) + \eta_k^2 + \lambda_k C_{g_k} + \epsilon_{k,\rm{opt}}\nonumber\\
    &\!\!\!\!\lesssim\!\!\!\!& \sqrt{\frac{\log N}{N}} C_{g_k}^2 + C_{g_k}^2 B^{\frac{1}{2}} \epsilon_k + C_{g_k}^2B^{\frac{2}{3}}\bigg(\frac{\log |\calN_{\epsilon_k}|}{P_k}\bigg)^{\frac{2}{3}} + \lambda_k C_{g_k} + \frac{P_k C_{g_k}^2}{L} + \sqrt{\frac{\log N}{N}}\lambda_k^{-2}\epsilon_{k,\rm{opt}}^2\nonumber\\
    &\!\!\!\!\lesssim\!\!\!\!& \sqrt{\frac{\log N}{N}}C_{g_k}^2   + \frac{P_kC_{g_k}^2}{L} + C_{g_k}^2 B^{\frac{1}{2}} \epsilon_k + C_{g_k}^2B^{\frac{2}{3}}\big(\frac{\log |\calN_{\epsilon_k}|}{P_k}\big)^{\frac{2}{3}},
\end{eqnarray}
where the second inequality follows from \eqref{choices of parameters} together with $\epsilon_{k,\rm{opt}} \asymp \sqrt{\frac{\log N}{N}}C_{g_k}^2 + \eta_k^2 + \frac{P_k C_{g_k}^2}{L}$. For the final inequality, we use the choice of $\lambda_k$ in
\eqref{choice of lambda in the transformer}, which implies $\lambda_k C_{g_k} \lesssim (\frac{\log N}{N} )^{\frac{1}{6}}C_{g_k}^{\frac{2}{3}}(\sqrt{\frac{\log N}{N}}C_{g_k}^2 + \eta_k^2 + \frac{P_kC_{g_k}^2}{L}   )^{\frac{4}{3}} + \sqrt{\frac{\log N}{N}}C_{g_k}^2 +  \eta_k^2 \lesssim (\frac{\log N}{N} )^{\frac{1}{2}}C_{g_k}^2 + \big(\sqrt{\frac{\log N}{N}}C_{g_k}^2 + \eta_k^2 + \frac{P_kC_{g_k}^2}{L}\big) + \sqrt{\frac{\log N}{N}}C_{g_k}^2 +  \eta_k^2 \lesssim \sqrt{\frac{\log N}{N}}C_{g_k}^2   + \frac{P_kC_{g_k}^2}{L} + C_{g_k}^2 B^{\frac{1}{2}} \epsilon_k + C_{g_k}^2B^{\frac{2}{3}}\big(\frac{\log |\calN_{\epsilon_k}|}{P_k}\big)^{\frac{2}{3}}$, where the last step again uses \eqref{choices of parameters}. Moreover, based on \eqref{choice of lambda in the transformer}, $\lambda_k^{-2}\epsilon_{k,\rm{opt}}^2\lesssim  \big(\frac{\log N}{N} \big)^{\frac{1}{6}} C_{g_k}^{\frac{2}{3}} \big(\sqrt{\frac{\log N}{N}}C_{g_k}^2 + \eta_k^2 + \frac{P_k C_{g_k}^2}{L}\big)^{\frac{2}{3}} \lesssim \sqrt{\frac{\log N}{N}}C_{g_k}^2 + \eta_k^2 + \frac{P_k C_{g_k}^2}{L} \lesssim \sqrt{\frac{\log N}{N}}C_{g_k}^2   + \frac{P_kC_{g_k}^2}{L} + C_{g_k}^2 B^{\frac{1}{2}} \epsilon_k + C_{g_k}^2B^{\frac{2}{3}}\big(\frac{\log |\calN_{\epsilon_k}|}{P_k}\big)^{\frac{2}{3}}$, where the last inequality again follows from \eqref{choices of parameters}.

Using \eqref{Generalization error of input at N plus 1 1 appendix lemma 5}, we can further obtain
\begin{eqnarray}
    \label{Upper bound of the estimated output and ground truth output}
    &\!\!\!\! \!\!\!\!&\E[(\wh \vy_{N+1}(k) - g_k(\vz_{N+1}) )^2] \nonumber\\
    &\!\!\!\!=\!\!\!\!& \E[( \wt g_k^{(L)} - g_k(\vz_{N+1}) )^2]\nonumber\\
    &\!\!\!\!\leq \!\!\!\!&  2\E[( {\vphi_{k}^{N+1}}^\top \wh\va_{k} - g_k(\vz_{N+1}) )^2] + 2\E[( \wt g_k^{(L)} - {\vphi_{k}^{N+1}}^\top \wh\va_{k} )^2]\nonumber\\
    &\!\!\!\!\leq \!\!\!\!&  2\E[( {\vphi_{k}^{N+1}}^\top \wh\va_{k} - g_k(\vz_{N+1}) )^2] + 2( \wt g_k^{(L)} - {\vphi_{k}^{N+1}}^\top \vb_k^{(L)} )^2\nonumber\\
    &\!\!\!\!\lesssim \!\!\!\!& \frac{C_{g_k}^2}{N^2} + \sqrt{\frac{\log N}{N}}C_{g_k}^2   + \frac{P_kC_{g_k}^2}{L} + C_{g_k}^2 B^{\frac{1}{2}} \epsilon_k + C_{g_k}^2B^{\frac{2}{3}}\bigg(\frac{\log |\calN_{\epsilon_k}|}{P_k}\bigg)^{\frac{2}{3}}\nonumber\\
    &\!\!\!\!\lesssim \!\!\!\!& \sqrt{\frac{\log N}{N}}C_{g_k}^2   + \frac{P_kC_{g_k}^2}{L} + C_{g_k}^2B^{\frac{2}{3}}\bigg(\frac{\log |\calN_{\epsilon_k}|}{P_k}\bigg)^{\frac{2}{3}},
\end{eqnarray}
where \eqref{upper bound of output at N+1 in the transformer} is used in the third inequality and we use $\epsilon_k\leq B^{-\frac{1}{2}}(\sqrt{\frac{\log N}{N}} +  \frac{P_k}{L})$ in the last line.

Finally, note that the covering number of continuous degrees of freedom is $O\big( \frac{3\sqrt{B}}{\epsilon_k} \big)^{nD}$. Together with the condition $\epsilon_k \le B^{-\frac{1}{2}}\big(\sqrt{\frac{\log N}{N}} + \frac{P_k}{L}\big)$, this implies that $P_k = \Omega(B \log |\calN_{\epsilon_k}| ) = \Omega(nDB\log(\frac{B}{\sqrt{(\log N)/N} +  P_k/L})  )$.  When $\sqrt{(\log N)/N}\geq P_k/L$, this yields $P_k = \Omega(nDB\log(B\sqrt{N}/\sqrt{\log N}) ) $. When $\sqrt{(\log N)/N}< P_k/L$, the bound instead reduces to $P_k = \Omega(nDB\log(BL) ) $. Thus one may choose $P_k = \Omega(nDB\max\{\log(BL), \log(B\sqrt{N}/\sqrt{\log N}) \}) $.

With this choice, the third term in \eqref{Upper bound of the estimated output and ground truth output} becomes bounded by a constant when $P_k$ is sufficiently large. Consequently, \eqref{Upper bound of the estimated output and ground truth output} reduces to
\begin{eqnarray}
    \label{Upper bound of the estimated output and ground truth output continuous}
    \E[(\wh \vy_{N+1}(k) - g_k(\vz_{N+1}) )^2] = O\bigg(\sqrt{\frac{\log N}{N}} + \frac{nDB\max\{\log(BL), \log(B\sqrt{N}/\sqrt{\log N}) \}}{L} \bigg).
\end{eqnarray}
To ensure that the second term is dominated by the statistical error term, it suffices to require
\begin{eqnarray}
    \label{Sufficient condition for output continuous}
    \sqrt{\frac{\log N}{N}} \geq \frac{nDB\max\{\log(BL), \log(B\sqrt{N}/\sqrt{\log N}) \}}{L}.
\end{eqnarray}
This condition implies $L \geq \frac{nDB\sqrt{N}}{\sqrt{\log N}}\max\{\log(BL), \log(B\sqrt{N}/\sqrt{\log N})\} $. Since such a choice necessarily satisfies $L\geq \sqrt{N}/\sqrt{\log N}$, it follows that $\E[(\wh \vy_{N+1}(k) - g_k(\vz_{N+1}) )^2] = O\big(\sqrt{\frac{\log N}{N}}\big)$ whenever $L \geq \frac{nDB\log(BL)\sqrt{N}}{\sqrt{\log N}}  $.

To derive a simpler sufficient condition, we additionally consider the case where $L\geq B$. Under this condition, $\frac{2nDB\sqrt{N}}{\sqrt{\log N}}\log L\geq \frac{nDB\sqrt{N}}{\sqrt{\log N}}(\log L + \log B )\geq \frac{nDB\sqrt{N}}{\sqrt{\log N}}\log(BL)$. Therefore, it suffices to require
\begin{eqnarray}
    \label{sufficient condition for continuous case}
    \frac{L}{\log L} \geq \frac{2nDB\sqrt{N}}{\sqrt{\log N}},
\end{eqnarray}
which guarantees  $\E[(\wh \vy_{N+1}(k) - g_k(\vz_{N+1}) )^2] = O\big(\sqrt{\frac{\log N}{N}}\big)$.

For discrete degrees of freedom, we first normalize $(x_{1}, \dots, x_{n})$ by $\sqrt{n(d-1)}$ in the embedding matrix so as to compress the input domain. Under this normalization, the discrete domain can be embedded into the continuous region characterized by $\sum_{i=1}^{n}\frac{ x_i^2}{n(d-1)} \le d-1$. Consequently, the covering number associated with this region satisfies $O\big( \frac{3\sqrt{d-1}}{\epsilon_k} \big)^{n}$. Following the same argument as in the continuous case and applying \eqref{Upper bound of the estimated output and ground truth output}, we can choose $P_k = \Omega(nd\max\{\log(dL), \log(d\sqrt{N}/\sqrt{\log N}) \} )$ which yields the following generalization bound for discrete degrees of freedom:
\begin{eqnarray}
    \label{Upper bound of the estimated output and ground truth output discrete}
    \E[\|\wh \vy_{N+1} -  \vd(\{x_{N+1,j}\}_{j=1}^{n} )\|_2^2]= O\bigg( \sqrt{\frac{\log N}{N}} + \frac{nd\max\{\log(dL), \log(d\sqrt{N}/\sqrt{\log N}) \}}{L} \bigg).
\end{eqnarray}
Note that although the above bound is derived based on the covering number of the corresponding continuous domain, the discrete system can be regarded as a special case obtained by discretizing this domain. Therefore, the same generalization bound also applies to the discrete setting. Following the same derivation as that of \eqref{sufficient condition for continuous case}, we additionally assume that $L\ge d$. Then, a sufficient condition is
\begin{eqnarray}
    \label{sufficient condition for discrete case}
    \frac{L}{\log L} \geq \frac{2nd\sqrt{N}}{\sqrt{\log N}},
\end{eqnarray}
which guarantees  $\E[\|\wh \vy_{N+1} -  \vd(\{x_{N+1,j}\}_{j=1}^{n} )\|_2^2] = O\big(\sqrt{\frac{\log N}{N}}\big)$.

\end{proof}

\section{Proof of \Cref{feature_construction_of_functions}}
\label{Proof of feature_construction_of_functions}

The proof follows the general strategy of \cite[Lemma 1]{li2025transformers}, but extends the result from the unit ball to a general bounded domain of radius $B$. Consequently, our analysis recovers \cite[Lemma 1]{li2025transformers} as the special case $B=1$.
For notational convenience, we express the Fourier transform $F_{g_k}(\boldsymbol{\omega})$ in polar form as $F_{g_k}(\boldsymbol{\omega}) = |F_{g_k}(\boldsymbol{\omega})|e^{j \theta_{g_k}(\boldsymbol{\omega}) }$, where $\theta_{g_k}(\boldsymbol{\omega})$ denotes the phase. Under this representation, the function difference ${g_k}(\vz) - {g_k}({\bf 0})$ admits the integral representation ${g_k}(\vz) - {g_k}({\bf 0}) = \int_{\vw\neq {\bf 0}} \big(  \cos(\boldsymbol{\omega}^\top \vz + \theta_{g_k}(\boldsymbol{\omega})  )  - \cos(\theta_{g_k}(\boldsymbol{\omega})  )  \big) |F_{g_k}(\boldsymbol{\omega})| \,d\boldsymbol{\omega}$. To approximate ${g_k}(\vz) - {g_k}({\bf 0})$, we construct a finite set of features based on the smoothed step function in \eqref{smoothed_step_function}. For this purpose, we first introduce a quantity that measures the approximation discrepancy between $\phi(\tau_k x)$ and the unit step function $\mathbf{1}(x>0)$:
\begin{eqnarray}
    \label{The definition of delta function}
\delta_{\tau_k}\;:=\;\inf_{0 < \epsilon \leq \tfrac{1}{2}}\bigg(2\epsilon\;+\;\sup_{|x| \geq \epsilon}\big| \phi(\tau_k x) - \mathbf{1}(x > 0) \big|\bigg).
\end{eqnarray}

According to \Cref{approximation theorem in the continuous case}, for any $g_k \in \calG$, any $\vz$ satisfying $\|\vz\|_2 \le \sqrt{B}$, any positive integer $m$, and any $\tau_k > 2$, we have
\begin{eqnarray}
    \label{The upper bound of approximation term g_k another}
    \big| {g_k}(\vz) - {g_k}({\bf 0}) - g^{\mathrm{approx}}_k (\vz) \big|
    \le C_{g_k} \sqrt{B} \Bigl( 3\delta_{\tau_k} + \frac{\pi}{m} \Bigr),
\end{eqnarray}
where
\begin{eqnarray}
    \label{The definition of approximate function g another}
    g^{\mathrm{approx}}_k (\vz) := \sum_{i=1}^m \int_{\boldsymbol{\omega} \ne {\bf 0}} \int_{t} A_{g_k}(i,t,\boldsymbol{\omega})\,\phi\! \big( \tau_k(\|\boldsymbol{\omega}\|_2^{-1} \boldsymbol{\omega}^\top \vz - t ) \big) \Lambda(i, \mathrm{d}t, \mathrm{d}\boldsymbol{\omega}),
\end{eqnarray}
and $\Lambda(i, \mathrm{d}t, \mathrm{d}\boldsymbol{\omega})$ is a probability measure on $[m] \times \R \times \R^{nD}$ independent of $g_k$, while $A_{g_k}(i,t,\boldsymbol{\omega})$ is a weight function depending on $g_k,i,t,\boldsymbol{\omega}$.

To obtain a finite-sum representation, we approximate the integral in \eqref{The definition of approximate function g another} via random sampling. Specifically, we draw $P_k$ independent samples $(i_j, t_j, \boldsymbol{\omega}_j), 1\leq j \leq P_k$, from $\Lambda(i, t, \boldsymbol{\omega})$. Define
\begin{eqnarray}
    \label{The definition of C minus gk}
    \wt C_{g_k} = C_{g_k} - \sup_{\wt g_k \in \calG} |\wt g_k({\bf 0})|.
\end{eqnarray}
For any fixed $(\wh g_k, \wh\vz)\in\calN_{\epsilon_k}$, we have
\begin{eqnarray}
    \label{Conditions for Berstein inequality}
    \E[ A_{\wh g_k}(i_j,t_j,\boldsymbol{\omega}_j)\,\phi\! \big( \tau_k(\|\boldsymbol{\omega}_j\|_2^{-1} \boldsymbol{\omega}_j^\top \wh \vz - t_j ) \big) ] &\!\!\!\!=\!\!\!\!& \wh g^{\mathrm{approx}}_k (\wh\vz),\\
    | A_{\wh g_k}(i_j,t_j,\boldsymbol{\omega}_j)\,\phi\! \big( \tau_k(\|\boldsymbol{\omega}_j\|_2^{-1} \boldsymbol{\omega}_j^\top \wh \vz - t_j ) \big) | &\!\!\!\!\leq\!\!\!\!& 3m \wt C_{g_k},\\
    \mathrm{Var}[A_{\wh g_k}(i_j,t_j,\boldsymbol{\omega}_j)\,\phi\! \big( \tau_k(\|\boldsymbol{\omega}_j\|_2^{-1} \boldsymbol{\omega}_j^\top \wh \vz - t_j ) \big) ] &\!\!\!\!\leq \!\!\!\!&\E[( A_{\wh g_k}(i_j,t_j,\boldsymbol{\omega}_j)\,\phi\! \big( \tau_k(\|\boldsymbol{\omega}_j\|_2^{-1} \boldsymbol{\omega}_j^\top \wh \vz - t_j ) \big))^2 ]\nonumber\\
    &\!\!\!\!\leq \!\!\!\!& \hspace{-0.5cm}\sup_{1\leq i\leq m,t\in\R, \boldsymbol{\omega}\in\R^{nD} } \hspace{-0.8cm} |A_{\wh g_k}(i,t,\boldsymbol{\omega})|\E[|A_{\wh g_k}(i,t,\boldsymbol{\omega})|]\leq 9m \wt C_{g_k}^2,
\end{eqnarray}
where the last inequality follows \Cref{approximation theorem in the continuous case}. Applying the Bernstein inequality together with a union bound over the covering set $\calN_{\epsilon_k}$, we obtain that with high probability the following inequality holds simultaneously for all $(\wh g_k,\wh\vz)\in\calN_{\epsilon_k}$:
\begin{eqnarray}
    \label{Bernstein inequality1}
    \bigg| \wh g^{\mathrm{approx}}_k (\wh\vz) \! - \! \frac{1}{P_k}\sum_{j=1}^{P_k} A_{\wh g_k}(i_j,t_j,\boldsymbol{\omega}_j)\,\phi\! \big( \tau_k(\|\boldsymbol{\omega}_j\|_2^{-1} \boldsymbol{\omega}_j^\top \wh \vz - t_j ) \big)   \bigg| \lesssim \max\bigg\{ \sqrt{\frac{m\wt C_{g_k}^2\log|\calN_{\epsilon_k}|}{P_k}}, \frac{m\wt C_{g_k}\log|\calN_{\epsilon_k}|}{P_k}      \bigg\}.
\end{eqnarray}
Applying the Bernstein inequality with $| A_{\wh g_k}(i_j,t_j,\boldsymbol{\omega}_j) | \leq 3m \wt C_{g_k}$ and $\mathrm{Var}[| A_{\wh g_k}(i_j,t_j,\boldsymbol{\omega}_j) | ]\leq  \E[| A_{\wh g_k}(i_j,t_j,\boldsymbol{\omega}_j) |^2] \leq 9m \wt C_{g_k}^2$ yields that with high probability
\begin{eqnarray}
    \label{Bernstein inequality2}
    \bigg| \E_{(i,t,\boldsymbol{\omega})\sim\Lambda }[| A_{\wh g_k}(i,t,\boldsymbol{\omega}) | ]  -  \frac{1}{P_k}\sum_{j=1}^{P_k}| A_{\wh g_k}(i_j,t_j,\boldsymbol{\omega}_j) |   \bigg| \lesssim \max\bigg\{ \sqrt{\frac{m\wt C_{g_k}^2\log|\calN_{\epsilon_k}|}{P_k}}, \frac{m\wt C_{g_k}\log|\calN_{\epsilon_k}|}{P_k}      \bigg\}.
\end{eqnarray}

Suppose $P_k \gtrsim m\log|\calN_{\epsilon_k}|$. Then the first term dominates the second one, and the bound simplifies to
\begin{eqnarray}
    \label{Upper bound of summation of approximation function}
    \frac{1}{P_k}\sum_{j=1}^{P_k}| A_{\wh g_k}(i_j,t_j,\boldsymbol{\omega}_j) | \lesssim \E_{(i,t,\boldsymbol{\omega})\sim\Lambda }[ | A_{\wh g_k}(i,t,\boldsymbol{\omega})| ] + \sqrt{\frac{m\wt C_{g_k}^2\log|\calN_{\epsilon_k}|}{P_k}} \leq 4\wt C_{g_k}\leq 4 C_{g_k},
\end{eqnarray}
where the second inequality uses \Cref{approximation theorem in the continuous case}. Moreover, it follows that
\begin{eqnarray}
    \label{Upper bound of summation of approximation function difference}
    &\!\!\!\!\!\!\!\!&\bigg| \frac{1}{P_k}\sum_{j=1}^{P_k}\bigg( A_{\wh g_k}(i_j,t_j,\boldsymbol{\omega}_j)\,\phi\! \big( \tau_k(\|\boldsymbol{\omega}_j\|_2^{-1} \boldsymbol{\omega}_j^\top \wh \vz - t_j ) \big)  -  A_{\wh g_k}(i_j,t_j,\boldsymbol{\omega}_j)\,\phi\! \big( \tau_k(\|\boldsymbol{\omega}_j\|_2^{-1} \boldsymbol{\omega}_j^\top \vz - t_j ) \big) \bigg) \bigg|\nonumber\\
    &\!\!\!\!\leq\!\!\!\!& \frac{1}{P_k}\sum_{j=1}^{P_k}|A_{\wh g_k}(i_j,t_j,\boldsymbol{\omega}_j)|\cdot \big|\tau_k  \|\boldsymbol{\omega}_j\|_2^{-1} \boldsymbol{\omega}_j^\top (\wh \vz - \vz)\big|\nonumber\\
    &\!\!\!\!\leq\!\!\!\!& \frac{1}{P_k}\sum_{j=1}^{P_k}|A_{\wh g_k}(i_j,t_j,\boldsymbol{\omega}_j)| \tau_k \|\wh \vz - \vz\|_2 \leq 4 C_{g_k}\tau_k \epsilon_{k}.
\end{eqnarray}

Furthermore, \eqref{Upper bound of summation of approximation function} implies $| g_k({\bf 0})| +  \frac{1}{P_k}\sum_{j=1}^{P_k}| A_{\wh g_k}(i_j,t_j,\boldsymbol{\omega}_j) | \lesssim  4 C_{g_k}$. Collecting the preceding bounds yields
\begin{eqnarray}
    \label{Upper bound of final conclusion 1 Lemma 1}
    &\!\!\!\!\!\!\!\!&\bigg|{g_k}(\vz) - {g_k}({\bf 0}) - \frac{1}{P_k}\sum_{j=1}^{P_k}  A_{\wh g_k}(i_j,t_j,\boldsymbol{\omega}_j) \,\phi\! \big( \tau_k(\|\boldsymbol{\omega}_j\|_2^{-1} \boldsymbol{\omega}_j^\top \vz - t_j ) \big) \bigg|\nonumber\\
    &\!\!\!\!\leq\!\!\!\!& \bigg|{g_k}(\vz) - {g_k}({\bf 0}) - \wh g_k(\vz) + \wh g_k({\bf 0})\bigg| + \bigg| \wh g_k(\vz) - \wh g_k({\bf 0}) - \wh g^{\mathrm{approx}}_k (\vz)\bigg| \nonumber\\
    &\!\!\!\!\!\!\!\!& + \bigg|\wh g^{\mathrm{approx}}_k (\vz) - \frac{1}{P_k}\sum_{j=1}^{P_k} A_{\wh g_k}(i_j,t_j,\boldsymbol{\omega}_j)\,\phi\! \big( \tau_k(\|\boldsymbol{\omega}_j\|_2^{-1} \boldsymbol{\omega}_j^\top \wh \vz - t_j ) \big)    \bigg|  \nonumber\\
    &\!\!\!\!\!\!\!\!& + \bigg| \frac{1}{P_k}\sum_{j=1}^{P_k}\bigg( A_{\wh g_k}(i_j,t_j,\boldsymbol{\omega}_j)\,\phi\! \big( \tau_k(\|\boldsymbol{\omega}_j\|_2^{-1} \boldsymbol{\omega}_j^\top \wh \vz - t_j ) \big)  -  A_{\wh g_k}(i_j,t_j,\boldsymbol{\omega}_j)\,\phi\! \big( \tau_k(\|\boldsymbol{\omega}_j\|_2^{-1} \boldsymbol{\omega}_j^\top \vz - t_j ) \big) \bigg) \bigg|\nonumber\\
    &\!\!\!\! \lesssim\!\!\!\!& C_{g_k} \bigg(\epsilon_{k}  + \sqrt{B} ( 3\delta_{\tau_k} + \pi/m ) + \sqrt{m \log|\calN_{\epsilon_k}|/P_k} + 4 \tau_k \epsilon_{k}\bigg)\nonumber\\
    &\!\!\!\! \lesssim\!\!\!\!& C_{g_k} \bigg( \sqrt{B}/\tau_k  + B^{\frac{1}{3}}( \log|\calN_{\epsilon_k}|/P_k)^{\frac{1}{3}} + \tau_k \epsilon_{k} \bigg).
\end{eqnarray}
Here the second inequality follows from
\Cref{approximation theorem in the continuous case},
\eqref{Bernstein inequality1},
and \eqref{Upper bound of summation of approximation function difference}.
The final bound is obtained by choosing $m = \lceil B^{\frac{1}{3}}  (\frac{P_k}{\log|\calN_{\epsilon_k}|})^{\frac{1}{3}} \rceil$ and using $\delta_{\tau_k}\leq \frac{1}{\tau_k}$ for any $\tau_k>2$ \cite{li2025transformers}. Setting $a_{k, j}^\star \coloneqq  A_{\wh g_k}(i_j,t_j,\boldsymbol{\omega}_j)$ completes the proof.

\section{Proof of \Cref{lemma: transformer emulate proximal GD}}
\label{Proof of transformer emulate proximal GD}

In this proof, we construct a Transformer to approximate the proximal gradient descent iterations described above. We begin by specifying the input structure for each layer of the Transformer. Compared with the construction in \cite[Lemma 4]{li2025transformers}, the main difference lies in the design of the input embedding matrix. Since our Transformer simultaneously approximates two output functions, namely $g_1$ and $g_2$, we augment the embedding with additional coordinates to encode the corresponding feature vectors, regression coefficients, and auxiliary variables required by the analysis. To accommodate the two-output setting, we define the input matrix for the $\ell$-th hidden layer ($0\leq \ell\leq L$) as
\begin{eqnarray}
    \label{the construction of embedding vector each layer}
    \mH^{(\ell)} := \begin{bmatrix}\vz_{1}^{(\ell)} & \cdots & \vz_{N}^{(\ell)} & \vz_{N+1}^{(\ell)} \\ g_{1,1}^{(\ell)} & \cdots & g_{1,N}^{(\ell)} & 0 \\  g_{2,1}^{(\ell)} & \cdots & g_{2,N}^{(\ell)} & 0 \\  w_{1}^{(\ell)} & \cdots & w_{N}^{(\ell)} & w_{N+1}^{(\ell)} \\ \vphi_{1,1}^{(\ell)} & \cdots & \vphi_{1,N}^{(\ell)} & \vphi_{1,N+1}^{(\ell)} \\ \vphi_{2,1}^{(\ell)} & \cdots & \vphi_{2,N}^{(\ell)} & \vphi_{2,N+1}^{(\ell)} \\ \vb_{1}^{(\ell)} & \cdots & \vb_{1}^{(\ell)} & \vb_{1}^{(\ell)} \\ \vb_{2}^{(\ell)} & \cdots & \vb_{2}^{(\ell)} & \vb_{2}^{(\ell)} \\ \lambda_1^{(\ell)} & \cdots & \lambda_1^{(\ell)} & \lambda_1^{(\ell)}  \\ \lambda_2^{(\ell)} & \cdots & \lambda_2^{(\ell)} & \lambda_2^{(\ell)} \\ \wt g_1^{(\ell)} & \cdots & \wt g_1^{(\ell)} & \wt g_1^{(\ell)} \\ \wt g_2^{(\ell)} & \cdots & \wt g_2^{(\ell)} & \wt g_2^{(\ell)} \end{bmatrix}\in \R^{(nD + 2P_1 + 2P_2 + 12) \times (N+1)},
\end{eqnarray}
where $\vz_{i}^{(\ell)}\in\R^{nD+1}$, $g_{1,i}^{(\ell)},g_{2,i}^{(\ell)}, w_{i}^{(\ell)}, \lambda_1^{(\ell)}, \lambda_2^{(\ell)}, \wt g_1^{(\ell)}, \wt g_2^{(\ell)}\in\R$,  $\vphi_{1,i}^{(\ell)}, \vb_{1}^{(\ell)}\in\R^{P_1+1}$ and $\vphi_{2,i}^{(\ell)},  \vb_{2}^{(\ell)}\in\R^{P_2+1}$. Note that the last rows consist of $N+1$ identical copies of $\lambda_1^{(\ell)}, \lambda_2^{(\ell)}, \wt g_1^{(\ell)}, \wt g_2^{(\ell)}, \vb_{1}^{(\ell)}, \vb_{2}^{(\ell)}$. Compared with the practical embedding in \eqref{the construction of embedding vector}, the augmented embedding matrix in \eqref{the construction of embedding vector each layer} introduces additional coordinates solely for the theoretical analysis, making intermediate quantities explicit in the proof. These auxiliary coordinates are not required in the practical implementation, where the corresponding quantities are generated implicitly by the hidden representations of the Transformer. Consequently, the practical embedding consists only of the physical variables together with a small number of channel-indicator coordinates. Specifically, two indicator coordinates are used for complex-valued wave functions to distinguish the real and imaginary channels, whereas only one indicator coordinate is required for real-valued wave functions.

For the initial layer ($\ell=0$), the elements of $\mH^{(0)}$ can be initialized as
\begin{eqnarray}
    \label{initial values for embedding matrix}
    \vz_{i}^{(0)} = \begin{bmatrix} \vz_i \\ 1  \end{bmatrix}, \vphi_{1,i}^{(0)} = {\bf 0}, \vphi_{2,i}^{(0)} = {\bf 0}, && 1\leq i \leq N+1, \nonumber\\
    g_{1,i}^{(0)} = g_1(\vz_i), g_{2,i}^{(0)}  = g_2(\vz_i),  w_i^{(0)} = 1, && 1\leq i \leq N, \\
    w_{N+1}^{(0)} = 0, \vb_1^{(0)} = {\bf 0}, \vb_2^{(0)} = {\bf 0}, \lambda_1^{(0)} =0, \lambda_2^{(0)} =0, \wt g_1^{(0)} = 0, \wt g_2^{(0)} = 0. && \nonumber
\end{eqnarray}
With this initialization, the construction in \eqref{the construction of embedding vector each layer} is fully specified.

Next, we construct a single Transformer and analyze each of its components. The Transformer consists of $L$ layers. The first layer comprises an attention layer, denoted as {\bf Attn0}, followed by a feed-forward layer, denoted as {\bf FF0}. For the remaining $L-1$ layers (assuming $L-1$ is even), we partition them into $\frac{L-1}{2}$ blocks, where each block consists of two consecutive layers: {\bf Attn1}, {\bf FF1}, {\bf Attn2} and {\bf FF2}. To streamline notation, for $\ell = 1, \dots, L$ we define the intermediate outputs of the attention and feed-forward layers as
\begin{eqnarray}
    \label{definition of attention layer output with multiheads each layer}
    \mH^{(\ell - \frac{1}{2})} &\!\!\!\!=\!\!\!\!& \textup{Attn}_{\mTheta_\textup{attn}^{(\ell)}}(\mH^{(\ell - 1)}),\\
    \label{definition of MLP layer each layer}
    \mH^{(\ell - 1)} &\!\!\!\!=\!\!\!\!& \textup{FF}_{\mTheta_\textup{ff}^{(\ell)}}(\mH^{(\ell - \frac{1}{2})}).
\end{eqnarray}
We now proceed to construct these six layers explicitly and explain the rationale behind each design choice. For notational convenience, we set $E  \coloneqq nD + 2P_1 + 2P_2 + 12$.
\begin{itemize}

\item{{\bf Attn0}:} In this attention layer, we set $M=1$ and initialize  $\mQ_1^{(1)} = \mK_1^{(1)} = \mV_1^{(1)} = {\bf 0}$. Hence, the output of {\bf Attn0} is simply $\mH^{(\frac{1}{2})} = \mH^{(0)}$.

\item{{\bf FF0}:} In this feed-forward layer, we update $\vphi_{k,i}^{(\ell)}, k=1,2, 1\leq i \leq N+1$ and $\vb_{k}^{(\ell)}, k =1,2$ as
\begin{eqnarray}
    \label{The output of FF0 1}
    \vphi_{k,i}^{(\ell)} &\!\!\!\!=\!\!\!\!& \vphi_{k,i}^{(\ell-\frac{1}{2})} +  \begin{bmatrix} \phi_{k,1}(\vz_i)  & \cdots & \phi_{k,P_k}(\vz_i) & 1    \end{bmatrix}^\top,\\
    \label{The output of FF0 2}
    \lambda_{k}^{(\ell)} &\!\!\!\!=\!\!\!\!& \ol\lambda_{k},
\end{eqnarray}
where $\ol\lambda_{k} = c\bigg(\big(\frac{\log N}{N} \big)^{\frac{1}{6}}C_{g_k}^{-\frac{1}{3}}(\sqrt{\frac{\log N}{N}}C_{g_k}^2 + \eta_k^2 + \frac{P_kC_{g_k}^2}{L}    )^{\frac{2}{3}} + \sqrt{\frac{\log N}{N}}C_{g_k} + C_{g_k}^{-1} \eta_k^2\bigg)$ for some constants $c$.

Since  $\phi(z) = \sigma_{\textup{ff}}(z + \frac{1}{2} ) - \sigma_{\textup{ff}}(z - \frac{1}{2}  )$, we have $\phi_{k,i}(\vz) = \sigma_{\textup{ff}}(\tau_k  ( \frac{\vw_{k,i}^\top \vz}{\|\vw_{k,i}\|_2}  - t_{k,i}    ) + \frac{1}{2} ) - \sigma_{\textup{ff}}(\tau_k  ( \frac{\vw_{k,i}^\top \vz}{\|\vw_{k,i}\|_2}  - t_{k,i}    ) - \frac{1}{2} )$, where $\tau_k =\frac{B^{\frac{1}{4}}}{\sqrt{\epsilon_k}}$. Furthermore, noting that $1 = \sigma_{\textup{ff}}(1) - \sigma_{\textup{ff}}( -1 )$, we construct $\mW^{(1)}, \mU^{(1)}\in\R^{E\times E}$ as
\begin{eqnarray}
    \label{The definition of W}
    &\!\!\!\!\!\!\!\!&\mW^{(1)}(1:2P_1+2P_2+4, 1:nD+1 ) = \begin{bmatrix} \|\vw_{1,1}\|_2^{-1}\tau_1 \vw_{1,1}^\top   &  -t_{1,1} \tau_1 +\frac{1}{2} \\  \vdots & \vdots \\ \|\vw_{1,P_1}\|_2^{-1}\tau_1 \vw_{1,P_1}^\top   &  -t_{1,P_1} \tau_1 +\frac{1}{2} \\ {\bf 0}^\top & 1 \\ \|\vw_{1,1}\|_2^{-1}\tau_1 \vw_{1,1}^\top   &  -t_{1,1} \tau_1 -\frac{1}{2} \\  \vdots & \vdots \\ \|\vw_{1,P_1}\|_2^{-1}\tau_1 \vw_{1,P_1}^\top   &  -t_{1,P_1} \tau_1 -\frac{1}{2} \\ {\bf 0}^\top & -1 \\ \|\vw_{2,1}\|_2^{-1}\tau_2 \vw_{2,1}^\top   &  -t_{2,1} \tau_2 +\frac{1}{2} \\  \vdots & \vdots \\ \|\vw_{2,P_2}\|_2^{-1}\tau_2 \vw_{2,P_2}^\top   &  -t_{2,P_2} \tau_2 +\frac{1}{2} \\ {\bf 0}^\top & 1 \\ \|\vw_{2,1}\|_2^{-1}\tau_2 \vw_{2,1}^\top   &  -t_{2,1} \tau_2 -\frac{1}{2} \\  \vdots & \vdots \\ \|\vw_{2,P_2}\|_2^{-1}\tau_2 \vw_{2,P_2}^\top   &  -t_{2,P_2} \tau_2 -\frac{1}{2} \\ {\bf 0}^\top & -1  \end{bmatrix}, \nonumber\\
    &\!\!\!\!\!\!\!\!&\mW^{(1)}(2P_1 + 2P_2 + 5,  :) = \ol\lambda_1 \ve_{nD+1}^\top \qquad \text{and} \qquad \mW^{(1)}(2P_1 + 2P_2 + 6,  :) = \ol\lambda_2 \ve_{nD+1}^\top.
\end{eqnarray}
\begin{eqnarray}
    \label{The definition of U}
    &\!\!\!\!\!\!\!\!& \mU^{(1)}(nD + 5: nD +P_1+ 5, 1:2P_1 + 2 ) = \begin{bmatrix}  \mId_{P_1+1} & -\mId_{P_1+1} \end{bmatrix},\nonumber\\
    &\!\!\!\!\!\!\!\!& \mU^{(1)}(nD + P_1 +6: nD + P_1 + P_2 + 6, 2P_1 + 3 : 2P_1 + 2P_2 +4 ) = \begin{bmatrix}  \mId_{P_2+1} & -\mId_{P_2+1} \end{bmatrix}\nonumber\\
    &\!\!\!\!\!\!\!\!& \mU^{(1)}( nD+2P_1+2P_2+9,:) = \ve_{2P_1+2P_2+5}^\top,\nonumber\\
    &\!\!\!\!\!\!\!\!& \mU^{(1)}( nD+2P_1+2P_2+10,:) = \ve_{2P_1+2P_2+6}^\top,
\end{eqnarray}
where $\ve_i$ denotes the $i$-th standard basis vector.

\item{{\bf Attn1}:} This attention layer is designed to emulate the following iteration:
\begin{eqnarray}
    \label{The update of PGM appendix}
    \vb_{k}^{(\ell-\frac{1}{2})} = \vb_{k}^{(\ell-1)} - \frac{2\mu_k}{N}\sum_{i=1}^{N}({\vphi_{k,i}^{(\ell-1)}}^\top \vb_{k}^{(\ell-1)} - g_{k,i}^{(\ell-1)} )\vphi_{k,i}^{(\ell-1)}  + \vf_k^{(\ell-1)},
\end{eqnarray}
where $\vf_k^{(\ell-1)}$ denotes the residual term.

Notice that the sigmoid function $\sigma_\textup{attn}(\cdot)$ admits the following first-order approximation: $\sigma_\textup{attn}(\tau^{-1} x) = \sigma_\textup{attn}( 0 ) + \frac{x}{4\tau} + O(\frac{x^2}{\tau^2})$.  Consequently, for any $x$, one can write the approximate inversion $x\approx 4\tau(\sigma_\textup{attn}(\tau^{-1} x) - \sigma_\textup{attn}(0) )$. Applying this approximation, the update equation in \eqref{The update of PGM appendix} can then be expressed in the form
\begin{eqnarray}
    \label{The update of PGM appendix1}
    \vb_{k}^{(\ell-\frac{1}{2})} &\!\!\!\!=\!\!\!\!& \vb_{k}^{(\ell-1)} - \frac{2\mu_k}{N}\sum_{i=1}^{N}({\vphi_{k,i}^{(\ell-1)}}^\top \vb_{k}^{(\ell-1)} - g_{k,i}^{(\ell-1)} )\vphi_{k,i}^{(\ell-1)}  + \vf_k^{(\ell-1)}\nonumber\\
    &\!\!\!\! = \!\!\!\!& \vb_{k}^{(\ell-1)} + \frac{2}{N}\sum_{i=1}^{N+1}4\tau_k( \sigma_\textup{attn}(\tau_k^{-1}\mu_k g_{k,i}^{(\ell-1)}   ) - \sigma_\textup{attn}(0) )\vphi_{k,i}^{(\ell-1)}\nonumber\\
    &\!\!\!\! \!\!\!\!& - \frac{2}{N}\sum_{i=1}^{N+1}4\tau_k( \sigma_\textup{attn}(\tau_k^{-1}\mu_k {\vphi_{k,i}^{(\ell-1)}}^\top \vb_{k}^{(\ell-1)}   ) - \sigma_\textup{attn}(0) )\vphi_{k,i}^{(\ell-1)}\nonumber\\
    &\!\!\!\! \!\!\!\!& + \frac{2}{N}\sum_{i=1}^{N+1}4\tau_k( \sigma_\textup{attn}(\tau_k^{-1}\mu_k (1 - w_{i}^{(\ell-1)}) \wt g_k^{(\ell-1)}  ) - \sigma_\textup{attn}(0) )\vphi_{k,i}^{(\ell-1)}, k =1,2.
\end{eqnarray}
Here, $\vf_k^{(\ell-1)}$ represents the discrepancy between the two formulations above. By invoking the bound $|x -  4\tau(\sigma_\textup{attn}(\tau^{-1} x) - \sigma_\textup{attn}(0) )|\leq 2\tau^{-1}|x|^2$ from \cite[eq.(102)]{li2025transformers}, we can further establish the following upper bound:
\begin{eqnarray}
    \label{upper bound of e l in the PGM}
    \|\vf_k^{(\ell)} \|_{\infty}&\!\!\!\!\leq\!\!\!\!& \frac{4\mu_k^2}{\tau_kN}\sum_{i=1}^{N}(g_{k,i}^{(\ell )} )^2 + \frac{4\mu_k^2}{\tau_kN}\sum_{i=1}^{N+1}({\vphi_{k,i}^{(\ell)}}^\top \vb_{k}^{(\ell)})^2 + \frac{4\mu_k^2}{\tau_kN}(\wt g_k^{(\ell)})^2 + \frac{2\mu_k}{N}|\wt g_k^{(\ell)} - (\vphi_{k,N+1}^{(\ell)})^\top \vb_{k}^{(\ell)} |\nonumber\\
    &\!\!\!\!\lesssim \!\!\!\!& \frac{\mu_k^2}{\tau_k}(C_{g_k}^2 + \eta_k^2 + \|\vb_{k}^{(\ell)} \|_1^2 + (\wt g_k^{(\ell)})^2) + \frac{2\mu_k}{N}|\wt g_k^{(\ell)} - (\vphi_{k,N+1}^{(\ell)})^\top \vb_{k}^{(\ell)} |,
\end{eqnarray}
where we have used the estimates $|g_{k,i}^{(\ell )} |\leq |{\vphi_{k}^{i}}^\top\va_{k}^\star| + \eta_k\lesssim \|\vphi_{k}^{i}\|_{\infty}\|\va_{k}^\star \|_1+ \eta_k \lesssim C_{g_k} + \eta_k$ and $|(\vphi_{k,N+1}^{(\ell)})^\top \vb_{k}^{(\ell)}| \leq \|\vphi_{k}^{i}\|_{\infty}\|\vb_k^{(\ell)}\|_1\leq \|\vb_k^{(\ell)}\|_1$ for $\forall i$, applying $\|\vphi_{k}^{i}\|_{\infty} \leq 1$, $\|\va_{k}^\star \|_1\leq 4 C_{g_k}$ and \eqref{final conclusion of Lemma 1}.

To implement the update above, we employ $8$ attention heads in this layer. In particular, we set
\begin{eqnarray*}
    \label{The definition of Q K V 1-1}
    &\!\!\!\!\!\!\!\!& \mQ_1^{(\ell)}(1,:) = \tau_1^{-1}\mu_1\ve_{nD+2}^\top, \quad \mK_1^{(\ell)}(1,:) = \ve_{nD+1}^\top,\\
    &\!\!\!\!\!\!\!\!& \mV_1^{(\ell)}(nD+P_1+P_2+7:nD+2P_1+P_2+7, nD+5: nD+P_1+5 ) = 8\tau_1\mId_{P_1+1}, \\
    \label{The definition of Q K V 1-2}
    &\!\!\!\!\!\!\!\!& \mQ_2^{(\ell)}(1:P_1+1, nD+5: nD+P_1+5) = \tau_1^{-1}\mu_1\mId_{P_1+1},  \\
    &\!\!\!\!\!\!\!\!& \mK_2^{(\ell)}(1:P_1+1,nD+P_1+P_2+7:nD+2P_1+P_2+7) = \mId_{P_1+1}, \quad \mV_2^{(\ell)} = - \mV_1^{(\ell)}, \\
    \label{The definition of Q K V 1-3}
    &\!\!\!\!\!\!\!\!& \mQ_3^{(\ell)}(1,:) = \tau_1^{-1}\mu_1(\ve_{nD+1} - \ve_{nD+4})^\top, \quad \mK_3^{(\ell)}(1.:) = \ve_{nD+2P_1+2P_2+11}^\top, \quad \mV_3^{(\ell)} =  \mV_1^{(\ell)},\\
    \label{The definition of Q K V 1-4}
    &\!\!\!\!\!\!\!\!& \mQ_4^{(\ell)} = {\bm 0}_{E\times E}, \quad \mK_4^{(\ell)} = {\bf 0}_{E\times E}, \quad \mV_4^{(\ell)} =  -\mV_1^{(\ell)},\\
    \label{The definition of Q K V 1-5}
    &\!\!\!\!\!\!\!\!& \mQ_5^{(\ell)}(1,:) = \tau_2^{-1}\mu_2\ve_{nD+3}^\top, \quad \mK_5^{(\ell)}(1,:) = \ve_{nD+1}^\top,\\
    &\!\!\!\!\!\!\!\!& \mV_5^{(\ell)}(nD+2P_1+P_2+8:nD+2P_1+2P_2+8, nD+P_1+6: nD+P_1 + P_2+6) = 8\tau_2\mId_{P_2+1}, \\
    \label{The definition of Q K V 1-6}
    &\!\!\!\!\!\!\!\!& \mQ_6^{(\ell)}(1:P_2+1, nD+P_1+6: nD+P_1 + P_2+6) = \tau_2^{-1}\mu_2\mId_{P_2+1},  \\
    &\!\!\!\!\!\!\!\!& \mK_6^{(\ell)}(1:P_2+1,nD+2P_1+P_2+8:nD+2P_1+2P_2+8) = \mId_{P_2+1}, \quad \mV_6^{(\ell)} = - \mV_5^{(\ell)},\\
    \label{The definition of Q K V 1-7}
    &\!\!\!\!\!\!\!\!& \mQ_7^{(\ell)}(1,:) = \tau_2^{-1}\mu_2(\ve_{nD+1} - \ve_{nD+4})^\top, \quad \mK_7^{(\ell)}(1.:) = \ve_{nD+2P_1+2P_2+12}^\top, \quad \mV_7^{(\ell)} =  \mV_5^{(\ell)},\\
    \label{The definition of Q K V 1-8}
    &\!\!\!\!\!\!\!\!& \mQ_8^{(\ell)} = {\bf 0}_{E\times E}, \quad \mK_8^{(\ell)} = {\bf 0}_{E\times E}, \quad \mV_8^{(\ell)} =  -\mV_5^{(\ell)}.
\end{eqnarray*}

\item{{\bf FF1}:} In this feed-forward layer, our goal is to emulate the update
\begin{eqnarray}
    \label{emulation of inexact PGM in the appendix}
    \vb_{k}^{(\ell )} &\!\!\!\!=\!\!\!\!& \calP_{\mu_k\lambda_k^{(\ell-\frac{1}{2})}}(\vb_{k}^{(\ell-\frac{1}{2})}),\\
    \label{setting zero in the estimated output in the appendix}
    \wt g_k^{(\ell )}&\!\!\!\!=\!\!\!\!& \wt g_k^{(\ell-\frac{1}{2} )} -  \sigma_{\textup{ff}}(\wt g_k^{(\ell-\frac{1}{2} )}  ) + \sigma_{\textup{ff}}(-\wt g_k^{(\ell-\frac{1}{2} )}  ) =0.
\end{eqnarray}
Recalling that  $\calP_{\mu_k\lambda_k }(z) = z + \sigma_{\textup{ff}}(\mu_k\lambda_k ) - \sigma_{\textup{ff}}(z + \mu_k\lambda_k ) + \sigma_{\textup{ff}}(z - \mu_k\lambda_k )$, we can equivalently rewrite \eqref{emulation of inexact PGM in the appendix} as
\begin{eqnarray}
    \label{emulation of inexact PGM in the appendix1}
    \vb_{k}^{(\ell )} &\!\!\!\!=\!\!\!\!& \vb_{k}^{(\ell-\frac{1}{2})} +  \sigma_{\textup{ff}}(\mu_k\lambda_k^{(\ell-\frac{1}{2})}\mathbf{1}_{P_k+1}) - \sigma_{\textup{ff}}(\vb_{k}^{(\ell-\frac{1}{2})} + \mu_k\lambda_k^{(\ell-\frac{1}{2})}\mathbf{1}_{P_k+1})\nonumber\\
    &\!\!\!\!\!\!\!\!&+ \sigma_{\textup{ff}}(\vb_{k}^{(\ell-\frac{1}{2})}- \mu_k\lambda_k^{(\ell-\frac{1}{2})}\mathbf{1}_{P_k+1}).
\end{eqnarray}

To realize the above feed-forward operation within the network, the weight matrices $\mW^{(\ell)}$ and $\mU^{(\ell)}$ can be explicitly constructed as follows. The matrix $\mW^{(\ell)}$ is defined by
\begin{eqnarray}
    \label{The definition of W in the next layer}
    &\!\!\!\!\!\!\!\!&\mW^{(\ell)}(1:2P_1+2P_2+10,:) \nonumber\\
    &\!\!\!\!=\!\!\!\!&\begin{bmatrix} {\bf 0}_{(P_1+1)\times (nD+P_1+P_2+6)} & \mId_{P_1+1} & {\bf 0}_{(P_1+1)\times (P_2+1)} & \mu_1\mathbf{1}_{P_1+1} & {\bf 0} \\ {\bf 0}_{(P_1+1)\times (nD+P_1+P_2+6)} & \mId_{P_1+1} & {\bf 0}_{(P_1+1)\times (P_2+1)} & -\mu_1\mathbf{1}_{P_1+1} & {\bf 0} \\ & & \mu_1\ve_{nD+2P_1+2P_2+9}^\top &  & \\ & & \ve_{nD+2P_1+2P_2+11}^\top &  & \\ & & -\ve_{nD+2P_1+2P_2+11}^\top &  & \\ {\bf 0}_{(P_2+1)\times (nD+2P_1+P_2+8)} & \mId_{P_2+1} & {\bf 0}_{(P_2+1)\times 1} & \mu_2\mathbf{1}_{P_2+1} & {\bf 0} \\ {\bf 0}_{(P_2+1)\times (nD+2P_1+P_2+8)} & \mId_{P_2+1} & {\bf 0}_{(P_2+1)\times 1} & -\mu_2\mathbf{1}_{P_2+1} & {\bf 0} \\ & & \mu_2\ve_{nD+2P_1+2P_2+10}^\top &  & \\ & & \ve_{nD+2P_1+2P_2+12}^\top &  & \\ & & -\ve_{nD+2P_1+2P_2+12}^\top &  & \end{bmatrix},
\end{eqnarray}
Correspondingly, the matrix $\mU^{(\ell)}$ is given by
\begin{eqnarray}
    \label{The definition of U in the next layer}
    &\!\!\!\!\!\!\!\!&\mU^{(\ell)}(nD+P_1+P_2+7:nD+2P_1+P_2+7, 1:2P_1+2P_2+10) = \begin{bmatrix} -\mId_{P_1+1} & \mId_{P_1+1}  & \mathbf{1}_{P_1+1} & {\bf 0}  \end{bmatrix},\nonumber\\
    &\!\!\!\!\!\!\!\!&\mU^{(\ell)}(nD+2P_1+P_2+8:nD+2P_1+2P_2+8, 1:2P_1+2P_2+10)\nonumber\\
    &\!\!\!\!=\!\!\!\!& \begin{bmatrix} {\bf 0}_{(P_2+1)\times (2P_1+5)}  &  -\mId_{P_2+1} & \mId_{P_2+1}  & \mathbf{1}_{P_2+1} & {\bf 0}  \end{bmatrix},\nonumber\\
    &\!\!\!\!\!\!\!\!&\mU^{(\ell)}(nD+2P_1+2P_2+11, 1:2P_1+2P_2+10 ) = \begin{bmatrix} {\bf 0}^\top_{2P_1+3} & -1 & 1  & {\bf 0}^\top_{2P_2+5}  \end{bmatrix},\nonumber\\
    &\!\!\!\!\!\!\!\!&\mU^{(\ell)}(nD+2P_1+2P_2+12, 1:2P_1+2P_2+10 ) = \begin{bmatrix} {\bf 0}^\top_{2P_1+2P_2+8} & -1 & 1    \end{bmatrix}.
\end{eqnarray}

\item{{\bf Attn2}:} This attention layer is designed to update the estimated outputs $\wt g_k$ for $k=1,2$ according to
\begin{eqnarray}
    \label{updating equation of estimated output at N+1}
    \wt g_k^{(\ell + \frac{1}{2})} = (\vphi_{k,N+1}^{(\ell)})^\top \vb_k^{(\ell)} + \wt f_k^{(\ell)},
\end{eqnarray}
where $\wt f_k^{(\ell)}$ is  some residual term. Leveraging the approximation $x\approx 4\tau(\sigma_\textup{attn}(\tau^{-1} x) - \sigma_\textup{attn}(0) )$, the update \eqref{updating equation of estimated output at N+1} can be equivalently expressed as
\begin{eqnarray}
    \label{updating equation of estimated output at N+1 1}
    \wt g_k^{(\ell + \frac{1}{2})} = \wt g_k^{(\ell  )} + \sum_{i=1}^{N+1}4\tau_k (1- w_i^{(\ell)})  ( \sigma_\textup{attn}(\tau_k^{-1}  {\vphi_{k,i}^{(\ell)}}^\top \vb_{k}^{(\ell)}   ) - \sigma_\textup{attn}(0) ).
\end{eqnarray}
Moreover, based on the bound $|x -  4\tau(\sigma_\textup{attn}(\tau^{-1} x) - \sigma_\textup{attn}(0) )|\leq 2\tau^{-1}|x|^2$ and noting that $\wt g_k^{(\ell  )}=0$, the residual $\wt f_k^{(\ell)}$ can be further bounded as
\begin{eqnarray}
    \label{upper bound of approximate error in the attention 2}
    |\wt f_k^{(\ell)}| \leq \frac{2 \big((\vphi_{k,N+1}^{(\ell)})^\top \vb_k^{(\ell)}\big)^2}{\tau_k}.
\end{eqnarray}

To implement \eqref{updating equation of estimated output at N+1 1} within the network, we employ $4$ attention heads in this layer with the following constructions for the query, key, and value matrices:
\begin{eqnarray*}
    \label{design of Q K V the second attention layer}
    &&\mQ_1^{(\ell + 1)}(1:P_1+1,nD+5:nD+P_1+5) = \tau_1^{-1}\mId_{P_1+1},\\
    &&\mK_1^{(\ell+1)}(1:P_1+1, nD+P_1+P_2+7:nD+2P_1+P_2+7) = \mId_{P_1+1},\\
    &&\mV_1^{(\ell+1)}(nD+2P_1+2P_2+11,:) = 4\tau_1(\ve_{nD+1}^\top - \ve_{nD+4}^\top ),\\
    &&\mQ_2^{(\ell+1)} = {\bf 0}, \quad \mK_2^{(\ell+1)} = {\bf 0},  \quad \mV_2^{(\ell+1)} = -\mV_1^{(\ell+1)},\\
    &&\mQ_3^{(\ell + 1)}(1:P_2+1,nD + P_1+6:nD+P_1 + P_2+6) = \tau_2^{-1}\mId_{P_2+1},\\
    &&\mK_3^{(\ell+1)}(1:P_2+1, nD+2P_1+P_2+8:nD+2P_1+2P_2+8) = \mId_{P_2+1},\\
    &&\mV_3^{(\ell+1)}(nD+2P_1+2P_2+12,:) = 4\tau_2(\ve_{nD+1}^\top - \ve_{nD+4}^\top ),\\
    &&\mQ_4^{(\ell+1)} = {\bf 0}, \quad \mK_4^{(\ell+1)} = {\bf 0},  \quad \mV_4^{(\ell+1)} = -\mV_3^{(\ell+1)}.
\end{eqnarray*}

\item{{\bf FF2}:} In this feed-forward layer, no transformation is required, and we simply aim to preserve the hidden representation, i.e., $\mH^{(\ell+1)} = \mH^{(\ell + \frac{1}{2})}$. Consequently, the weight matrices can be trivially set to zero: $\mW^{(\ell+1)} = \mU^{(\ell +1)} = {\bf 0}$.
\end{itemize}
Combining the above analyses, we conclude that it suffices to consider a Transformer with $L$ layers, where the number of attention heads in the $\ell$-th attention layer satisfies
\begin{eqnarray}
    \label{requirement of attenstion layers appendix1}
    M_\ell=
\begin{cases}
1, & \ell=1,\\
4, & \ell>1,\ \ell\ \text{is odd},\\
8, & \ell\ \text{is even}.
\end{cases}
\end{eqnarray}

Now, we first prove that $\vb_k^{(2t+1)} = \va_k^{(t)}$ for all $t\geq 0$. Clearly, $\vb_k^{(1)} = \va_k^{(0)} = {\bf 0}$. Using induction, we assume $\vb_k^{(2t-1)} = \va_k^{(t-1)}$ and aim to show $\vb_k^{(2t+1)} = \va_k^{(t)}$. Specifically, we have
\begin{eqnarray}
    \label{weight at 2t+1 derivation}
    \vb_k^{(2t+1)} &\!\!\!\!=\!\!\!\!& \vb_k^{(2t)} =\calP_{\mu_k\lambda_k^{(2t-\frac{1}{2})}}(\vb_{k}^{(2t-\frac{1}{2})})\nonumber\\
     &\!\!\!\!=\!\!\!\!& \calP_{\mu_k\lambda_k^{(2t-1)}}\bigg(\vb_{k}^{(2t-1)} - \frac{2\mu_k}{N}\sum_{i=1}^{N}({\vphi_{k,i}^{(2t-1)}}^\top \vb_{k}^{(2t-1)} - g_{k,i}^{(2t-1)} )\vphi_{k,i}^{(2t-1)}  + \vf_k^{(2t-1)}\bigg)\nonumber\\
     &\!\!\!\!=\!\!\!\!& \calP_{\mu_k\ol\lambda_k}\bigg( \va_{k}^{(t-1)} - \frac{2\mu_k}{N}\sum_{i=1}^{N}({\vphi_{k}^{i}}^\top \va_{k}^{(t-1)} - g_k(\vz_i) )\vphi_{k}^{i}  \bigg) + \ve_k^{(t)} = \va_{k}^{(t)},
\end{eqnarray}
where $\ve_k^{(t)} = \calP_{\mu_k\lambda_k^{(2t-1)}}(\vb_{k}^{(2t-1)} - \frac{2\mu_k}{N}\sum_{i=1}^{N}({\vphi_{k,i}^{(2t-1)}}^\top \vb_{k}^{(2t-1)} - g_{k,i}^{(2t-1)} )\vphi_{k,i}^{(2t-1)}  + \vf_k^{(2t-1)}) - \calP_{\mu_k\ol\lambda_k}( \va_{k}^{(t-1)} - \frac{2\mu_k}{N}\sum_{i=1}^{N}({\vphi_{k}^{i}}^\top \va_{k}^{(t-1)} - g_k(\vz_i) )\vphi_{k}^{i}  )$, and the second and third equalities use \eqref{emulation of inexact PGM in the appendix} and \eqref{The update of PGM appendix}, respectively.

Using the fact that $\wt g_k^{(2t-1)} = \wt g_k^{(2t-\frac{3}{2})} = (\vphi_{k}^{N+1})^\top  \vb_k^{(2t-2)} + \wt f_k^{(2t-2)} = (\vphi_{k}^{N+1})^\top  \vb_k^{(2t-1)} + \wt f_k^{(2t-2)} = (\vphi_{k}^{N+1})^\top  \va_k^{(t-1)} + \wt f_k^{(2t-2)}$ from \eqref{updating equation of estimated output at N+1}, the bound $|\wt f_k^{(2t-2)}|\leq \frac{2 ((\vphi_{k}^{N+1})^\top \vb_k^{(2t-2)})^2}{\tau_k} = \frac{2 ((\vphi_{k}^{N+1})^\top \vb_k^{(2t-1)})^2}{\tau_k} =\frac{2 ((\vphi_{k}^{N+1})^\top \va_k^{(t-1)})^2}{\tau_k}$ from \eqref{upper bound of approximate error in the attention 2}, and \eqref{upper bound of e l in the PGM}, together with the contraction property of the operator $\calP_{\mu_k\lambda_k^{(2t-1)}}(\cdot)$,  we can then derive
\begin{eqnarray}
    \label{upper bound of residual term in the inexact PGM}
    \|\ve_k^{(t)}\|_{\infty}&\!\!\!\!\leq\!\!\!\!&\|\vf_k^{(2t-1)} \|_{\infty}\nonumber\\
    &\!\!\!\!\lesssim \!\!\!\!& \frac{\mu_k^2}{\tau_k}(C_{g_k}^2 + \eta_k^2 + \|\va_{k}^{(t-1)} \|_1^2 + (\wt g_k^{(2t-1)})^2) + \frac{2\mu_k}{N}|\wt g_k^{(2t-1)} - (\vphi_{k}^{N+1})^\top  \va_{k}^{(t-1)} |\nonumber\\
    &\!\!\!\!\lesssim \!\!\!\!& \frac{\mu_k^2}{\tau_k}(C_{g_k}^2 + \eta_k^2 + \|\va_{k}^{(t-1)} \|_1^2 + ((\vphi_{k}^{N+1})^\top  \va_k^{(t-1)})^2 +  (\wt f_k^{(2t-2)})^2  ) + \frac{\mu_k}{N}|\wt f_k^{(2t-2)}|\nonumber\\
    &\!\!\!\!\lesssim \!\!\!\!& \frac{\mu_k^2}{\tau_k}(C_{g_k}^2 + \eta_k^2 + \|\va_{k}^{(t-1)} \|_1^2 + ((\vphi_{k}^{N+1})^\top  \va_k^{(t-1)})^2 +  \frac{((\vphi_{k}^{N+1})^\top  \va_k^{(t-1)})^4}{\tau_k^2}  ) + \frac{\mu_k((\vphi_{k}^{N+1})^\top \va_k^{(t-1)})^2}{N\tau_k} \nonumber\\
    &\!\!\!\!\lesssim \!\!\!\!& \frac{\mu_k^2}{\tau_k}(C_{g_k}^2 + \eta_k^2 + \|\va_{k}^{(t-1)} \|_1^2 +   \frac{\|\va_{k}^{(t-1)} \|_1^4}{\tau_k^2}  ) + \frac{\mu_k\|\va_{k}^{(t-1)} \|_1^2}{N\tau_k},
\end{eqnarray}
where $(\vphi_{k}^{N+1})^\top  \va_k^{(t-1)}\leq \|\vphi_{k}^{N+1}\|_{\infty} \|\va_k^{(t-1)}\|_1\leq \|\va_k^{(t-1)}\|_1$ is used in the last line.

Next, we bound $\|\va_{k}^{(t)} \|_1$. By \Cref{Performance analysis of PGM}, it follows that
\begin{eqnarray}
    \label{The property of index t update in the PGM1 appendix}
    \|\va_{k}^{(t)}\|_1 \lesssim C_{g_k} + \frac{P_k C_{g_k}^2}{\lambda_k t} + \sqrt{N}(P_k + t)\max_{1\leq j\leq t} \|\ve_k^{(j)}\|_1 + \frac{P_k + t}{\lambda_k}\max_{1\leq j\leq t} (\|\ve_k^{(j)}\|_1 \|\va_{k}^{(j)} \|_1).
\end{eqnarray}
To establish an explicit bound on $\|\va_{k}^{(t)} \|_1$, we proceed by induction to ensure that both $\|\va_{k}^{(t)} \|_1$ and $\|\ve_k^{(t)}\|_1$ satisfy
\begin{eqnarray}
    \label{upper bound of a in the inexact PGM}
    \|\va_{k}^{(t)} \|_1 &\!\!\!\!\leq\!\!\!\!& c_1 P_k \sqrt{N} C_{g_k}, \\
    \label{upper bound of e in the inexact PGM}
    \|\ve_k^{(t)}\|_1 &\!\!\!\!\leq\!\!\!\!& \frac{c_2 C_{g_k}}{(L+P_k)P_kN}
\end{eqnarray}
for some universal constants $c_1$ and $c_2$.

When $t=0$, we have $\|\va_{k}^{(t)} \|_1=0$ and $\|\ve_k^{(t)}\|_1=0$, which clearly satisfy \eqref{upper bound of a in the inexact PGM} and \eqref{upper bound of e in the inexact PGM}. Next, suppose that \eqref{upper bound of a in the inexact PGM} and \eqref{upper bound of e in the inexact PGM} hold for $t = p-1$ and $t = p$, respectively. Then, using \eqref{The property of index t update in the PGM1 appendix} and the lower bound  $\lambda_k \geq c_3\sqrt{\frac{\log N}{N}} C_{g_k} $, it follows that, for $t=p$,
\begin{eqnarray}
    \label{The property of index t update in the PGM1 appendix t=j}
    \|\va_{k}^{(p)}\|_1 \leq c_4\bigg( C_{g_k} + \frac{P_k \sqrt{N} C_{g_k}}{c_3} + c_2 C_{g_k} +  \frac{c_1c_2C_{g_k}}{c_3} + \frac{c_2}{c_3\sqrt{N}P_k} \|\va_{k}^{(p)} \|_1\bigg).
\end{eqnarray}
where $c_3$ and $c_4$ are positive constants. Noting that, if $c_2$ is sufficiently small such that $ \frac{c_2c_4}{c_3 }\leq \frac{1}{4}$, we can absorb the last term into the left-hand side to obtain
\begin{eqnarray}
    \label{The property of index t update in the PGM1 appendix t=j11}
    \|\va_{k}^{(p)}\|_1 \leq   \frac{4c_4}{3}\bigg(C_{g_k} + \frac{ P_k \sqrt{N} C_{g_k}}{c_3} + c_2  C_{g_k} \bigg) +  \frac{c_1C_{g_k}}{3}  \leq c_1 P_k \sqrt{N} C_{g_k}.
\end{eqnarray}
where we have assumed $c_1\geq 2c_4(1+1/c_3+c_2)$.

We next bound $\|\ve_k^{(p+1)}\|_1$.  By \eqref{upper bound of residual term in the inexact PGM}, we have
\begin{eqnarray}
    \label{upper bound of residual term in the inexact PGM appendix 11}
    \|\ve_k^{(p+1)}\|_{\infty}&\!\!\!\!\leq \!\!\!\!& \frac{c_5\mu_k^2}{\tau_k}(C_{g_k}^2 + \eta_k^2 + \|\va_{k}^{(p)} \|_1^2 +   \frac{\|\va_{k}^{(p)} \|_1^4}{\tau_k^2}  ) + \frac{c_5\mu_k\|\va_{k}^{(p)} \|_1^2}{N\tau_k}\nonumber\\
    &\!\!\!\!\leq \!\!\!\!& \frac{c_5 }{4\tau_kP_k^2}(C_{g_k}^2 + \eta_k^2 + c_1^2 P_k^2  N  C_{g_k}^2 +   \frac{c_1^4 P_k^4  N^2  C_{g_k}^4}{\tau_k^2}  ) + \frac{c_5 c_1^2 P_k    C_{g_k}^2}{2\tau_k}\nonumber\\
    &\!\!\!\!\leq \!\!\!\!& \frac{c_2 C_{g_k}}{(L+P_k)P_kN},
\end{eqnarray}
where the second inequality follows from $\mu_k = \frac{1}{2P_k}$ in \Cref{Performance analysis of PGM} and \eqref{The property of index t update in the PGM1 appendix t=j11}. In the last step, we require $\tau_k \geq c_1P_k\sqrt{N}C_{g_k} + \frac{(L+P_k)N}{c_2 C_{g_k}}\big[ \frac{c_5 }{4P_k^2}(C_{g_k}^2 + \eta_k^2 + 2c_1^2 P_k^2  N  C_{g_k}^2) + \frac{c_5 c_1^2 P_k    C_{g_k}^2}{2}  \big] $.

Using \eqref{updating equation of estimated output at N+1} and assuming $\tau_k  \geq c_6 P_k^2  N^2C_{g_k}$ for a sufficiently small constant $c_6$, it follows that the discrepancy between the Transformer's output and the linear prediction satisfies
\begin{eqnarray}
    \label{upper bound of last element in the last layer from the trainsformer}
    |\wh \vy_{N+1}(k) - (\vphi_{k,N+1}^{(L)})^\top \vb_k^{(L)}| &\!\!\!\!=\!\!\!\!& |\wt g_k^{(L)} - (\vphi_{k,N+1}^{(L)})^\top \vb_k^{(L)} |\nonumber\\
    &\!\!\!\!=\!\!\!\!& |\wt g_k^{(L-\frac{1}{2})} - (\vphi_{k}^{N+1})^\top \vb_k^{(L-1)} | = |\wt f_k^{(L-1)}|\nonumber\\
    &\!\!\!\!\leq\!\!\!\!& \frac{2 \big((\vphi_{k}^{N+1})^\top \vb_k^{(L-1)}\big)^2}{\tau_k} = \frac{2 \big((\vphi_{k}^{N+1})^\top \va_k^{(\frac{L-1}{2})}\big)^2}{\tau_k}\nonumber\\
    &\!\!\!\!\leq\!\!\!\!& \frac{2\|\vphi_{k}^{N+1}\|_{\infty}^2\|\va_k^{(\frac{L-1}{2})}\|_1^2}{\tau_k}\lesssim \frac{C_{g_k}}{N},
\end{eqnarray}
where the first inequality follows from \eqref{upper bound of approximate error in the attention 2}.

Finally, by $h(\va_{k}^{(t)}) - h(\va_{k}^\star) \lesssim\frac{P_k\|\va_{k}^\star\|_1^2}{t} + (P_k + t) \max_{1\leq j\leq t} \|\ve_k^{(j)}\|_1 ( \|\va_{k}^{(j)} \|_1 + \eta_k  + C_{g_k} +  \lambda_k )$ in the proof of \cite[Lemma 3]{li2025transformers}, the difference between the PGM-based iterate and the optimal parameter is bounded as
\begin{eqnarray}
    \label{The update of PGM final step appndiex in the fourth lemma}
    h(\vb_k^{(L)}) - h(\va_k^\star) &\!\!\!\!=\!\!\!\!& h(\va_k^{(\frac{L-1}{2})}) - h(\va_k^\star)\nonumber\\
    &\!\!\!\!\lesssim\!\!\!\!&  \frac{P_kC_{g_k}^2}{L} + (L + P_k) \frac{C_{g_k}}{(L+P_k)P_kN} (C_{g_k} + \eta_k+  P_k \sqrt{N} C_{g_k}  + \lambda_k)\nonumber\\
    &\!\!\!\!\lesssim\!\!\!\!&\sqrt{\frac{\log N}{N}}C_{g_k}^2 + \eta_k^2 + \frac{P_kC_{g_k}^2}{L},
\end{eqnarray}
where in the last step we have used the choice $\lambda_k \lesssim \sqrt{\frac{\log N}{N}}C_{g_k}^2 + \eta_k^2 + \frac{P_kC_{g_k}^2}{L} $ in \eqref{choice of lambda in the transformer}.

\section{Proof of \Cref{thm:generalization normalized_transformer_nqs}}
\label{Appendix of generalization of physical Transformer}

We first define $\psi_{\mathrm{c}}$ and $\psi_{\mathrm{d}}$ as the continuous and discrete unnormalized Transformer-based neural quantum states, respectively. We then construct the corresponding normalized Transformer-based neural quantum states as
\begin{eqnarray}
\label{continuous normalized NQS}
\ol \psi_{\mathrm{c}} = \frac{\psi_{\mathrm{c}}}{\|\psi_{\mathrm{c}}\|_{L^2}}  =\frac{\psi_{\mathrm{c}}}{\sqrt{\int_{\sum_{i=1}^n \|\vx_i\|_2^2 \le B}
|\psi_{\textup{c}}(\vx_1,\dots,\vx_n)|^2
\, d\vx_1 \cdots d\vx_n}},
\end{eqnarray}
and
\begin{eqnarray}
\label{discrete normalized NQS}
\ol\psi_{\textup{d}} =  \frac{\psi_{\textup{d}}}{\|\psi_{\textup{d}}\|_2} = \frac{\psi_{\textup{d}}}{\sqrt{\sum_{x_1,\dots,x_n}
|\psi_{\textup{d}}(x_1,\dots,x_n)|^2}}.
\end{eqnarray}
Before analyzing the generalization error of the predicted quantum wavefunctions/states, we first establish the following inequality. Since $\|\psi^\star\|=1$, the normalized estimator $\ol\psi=\psi/\|\psi\|$ is obtained by the radial projection onto the unit sphere. Consequently,
\begin{eqnarray}
\label{upper bound of normalized quantum wavefunction}
\|\ol\psi-\psi^\star\|
\le
\|\psi-\psi^\star\|.
\end{eqnarray}
Indeed, we can expand $\|\psi-\psi^\star\|^2 - \|\ol\psi-\psi^\star\|^2$ as follows:
\begin{eqnarray}
\label{expansion of two squared terms}
\|\psi-\psi^\star\|^2 - \|\ol\psi-\psi^\star\|^2 &\!\!\!\!=\!\!\!\!& \|\psi\|^2+1 - 2\operatorname{Re}\langle\psi,\psi^\star\rangle - \left( 2-\frac{2}{\|\psi\|} \operatorname{Re}\langle\psi,\psi^\star\rangle \right)\nonumber\\
&\!\!\!\!=\!\!\!\!& (\|\psi\|-1) \left( \|\psi\|+1- \frac{2\operatorname{Re}\langle\psi,\psi^\star\rangle} {\|\psi\|} \right).
\end{eqnarray}
Since $\operatorname{Re}\langle\psi,\psi^\star\rangle \le \|\psi\|$ by the Cauchy--Schwarz inequality, we have $\|\psi\|+1- \frac{2\operatorname{Re}\langle\psi,\psi^\star\rangle}{\|\psi\|} \ge \|\psi\|-1$.
Substituting this inequality into the above identity gives
\begin{eqnarray}
\label{expansion of two squared terms1}
\|\psi-\psi^\star\|^2 - \|\ol\psi-\psi^\star\|^2 \ge (\|\psi\|-1)^2 \ge 0,
\end{eqnarray}
which proves \eqref{upper bound of normalized quantum wavefunction}.Consequently, applying \eqref{upper bound of normalized quantum wavefunction} to the continuous and discrete settings yields
\begin{eqnarray}
\label{ineqaulity of continuous on the sphere}
\|\ol \psi_{\mathrm{c}} - \psi_{\mathrm{c}}^\star\|_{L^2}^2 &\!\!\!\!\leq\!\!\!\!& \|\psi_{\mathrm{c}} - \psi_{\mathrm{c}}^\star \|_{L^2}^2,\\
\label{ineqaulity of discrete on the sphere}
\|\ol \psi_{\mathrm{d}} - \psi_{\mathrm{d}}^\star\|_{2}^2 &\!\!\!\!\leq\!\!\!\!& \|\psi_{\mathrm{d}} - \psi_{\mathrm{d}}^\star \|_{2}^2.
\end{eqnarray}

We next combine \Cref{thm:generalization_transformer_nqs} with \eqref{ineqaulity of continuous on the sphere} to derive an upper bound on the expected error of the normalized continuous model. In particular, we obtain
\begin{eqnarray}
\label{upper bound of normalized quantum wavefunction expectation1}
\E[\|\ol \psi_{\mathrm{c}} - \psi_{\mathrm{c}}^\star\|_{L^2}^2] &\!\!\!\!\leq\!\!\!\!& \E[\|\psi_{\mathrm{c}} - \psi_{\mathrm{c}}^\star \|_{L^2}^2]\nonumber\\
&\!\!\!\!\leq\!\!\!\!& \E\bigg[ \int_{\sum_{i=1}^n \|\vx_i\|_2^2 \le B}
|\psi_{\textup{c}}(\vx_1,\dots,\vx_n) - \psi_{\mathrm{c}}^\star(\vx_1,\dots,\vx_n)|^2
\, d\vx_1 \cdots d\vx_n  \bigg]\nonumber\\
&\!\!\!\!=\!\!\!\!&  \int_{\sum_{i=1}^n \|\vx_i\|_2^2 \le B}
\E\big[|\psi_{\textup{c}}(\vx_1,\dots,\vx_n) - \psi_{\mathrm{c}}^\star(\vx_1,\dots,\vx_n)|^2\big]
\, d\vx_1 \cdots d\vx_n\nonumber\\
&\!\!\!\!=\!\!\!\!& O\bigg(\frac{\pi^{\frac{nD}{2}}}{\Gamma\!\left(\frac{nD}{2}+1\right)}
B^{\frac{nD}{2}}\bigg(  \sqrt{\frac{\log N}{N}} + \frac{nDB\max\{\log(BL), \log(B\sqrt{N}/\sqrt{\log N}) \}}{L} \bigg) \bigg)\nonumber\\
&\!\!\!\!=\!\!\!\!& O\bigg( \bigg(\frac{2\pi e B}{nD} \bigg)^{\frac{nD}{2}} \cdot \bigg(  \sqrt{\frac{\log N}{N}} + \frac{nDB\max\{\log(BL), \log(B\sqrt{N}/\sqrt{\log N}) \}}{L} \bigg) \bigg),
\end{eqnarray}
where the last line uses $\Gamma(z+1) \approx \sqrt{2\pi z}\left(\frac{z}{e}\right)^z$. A similar argument applies to the discrete case, where
\begin{eqnarray}
\label{upper bound of normalized quantum state expectation1}
\E[\|\ol\psi_{\textup{d}} - \psi_{\mathrm{d}}^\star\|_2^2] &\!\!\!\!\leq\!\!\!\!& \E[\|\psi_{\textup{d}} - \psi_{\mathrm{d}}^\star\|_2^2]\nonumber\\
&\!\!\!\!\leq\!\!\!\!& \E\bigg[ \sum_{x_1,\dots,x_n} |\psi_{\textup{d}}(x_1,\dots,x_n) - \psi_{\textup{d}}^\star(x_1,\dots,x_n)|^2\bigg]\nonumber\\
&\!\!\!\!=\!\!\!\!& O\bigg( d^n \cdot \bigg(\sqrt{\frac{\log N}{N}} + \frac{nd\max\{\log(dL), \log(d\sqrt{N}/\sqrt{\log N}) \}}{L}\bigg) \bigg).
\end{eqnarray}

Moreover, we relate the Frobenius norm of density operators to the $L^2$ distance of the underlying wavefunctions. Specifically,
\begin{eqnarray}
\label{upper bound of continuous F norm}
&\!\!\!\! \!\!\!\!&\|\ol \psi_{\mathrm{c}}\ol \psi_{\mathrm{c}}^\dagger - \psi_{\mathrm{c}}^\star{\psi_{\mathrm{c}}^\star}^\dagger \|_F\nonumber\\
&\!\!\!\! = \!\!\!\!&\|(\ol \psi_{\mathrm{c}} - \psi_{\mathrm{c}}^\star  )\ol \psi_{\mathrm{c}}^\dagger + \psi_{\mathrm{c}}^\star(\ol \psi_{\mathrm{c}} - {\psi_{\mathrm{c}}^\star})^\dagger \|_F\nonumber\\
&\!\!\!\! = \!\!\!\!&\|\ol \psi_{\mathrm{c}} - \psi_{\mathrm{c}}^\star\|_{L^2} \|\ol \psi_{\mathrm{c}}\|_{L^2} + \|\psi_{\mathrm{c}}^\star\|_{L^2}\|\ol \psi_{\mathrm{c}} - \psi_{\mathrm{c}}^\star\|_{L^2}\nonumber\\
&\!\!\!\! = \!\!\!\!&2\|\ol \psi_{\mathrm{c}} - \psi_{\mathrm{c}}^\star\|_{L^2}.
\end{eqnarray}
Similarly, in the discrete setting,
\begin{eqnarray}
\label{upper bound of discrete F norm}
&\!\!\!\! \!\!\!\!&\|\ol \psi_{\mathrm{d}}\ol \psi_{\mathrm{d}}^\dagger - \psi_{\mathrm{d}}^\star{\psi_{\mathrm{d}}^\star}^\dagger \|_F\leq 2\|\ol \psi_{\mathrm{d}} - \psi_{\mathrm{d}}^\star\|_{2}.
\end{eqnarray}
Combining \eqref{upper bound of normalized quantum wavefunction expectation1} and
\eqref{upper bound of normalized quantum state expectation1}, we obtain
\begin{eqnarray}
\label{upper bound of normalized quantum wavefunction expectation2}
\E[\|\ol \psi_{\mathrm{c}}\ol \psi_{\mathrm{c}}^\dagger - \psi_{\mathrm{c}}^\star{\psi_{\mathrm{c}}^\star}^\dagger \|_F^2] =  O\bigg( \bigg(\frac{B}{nD} \bigg)^{\frac{nD}{2}} \cdot \bigg(  \sqrt{\frac{\log N}{N}} + \frac{nDB\max\{\log(BL), \log(B\sqrt{N}/\sqrt{\log N}) \}}{L} \bigg) \bigg),
\end{eqnarray}
and
\begin{eqnarray}
\label{upper bound of normalized quantum state expectation2}
\E[\|\ol \psi_{\mathrm{d}}\ol \psi_{\mathrm{d}}^\dagger - \psi_{\mathrm{d}}^\star{\psi_{\mathrm{d}}^\star}^\dagger \|_F^2] = O\bigg( d^n \cdot \bigg(\sqrt{\frac{\log N}{N}} + \frac{nd\max\{\log(dL), \log(d\sqrt{N}/\sqrt{\log N}) \}}{L}\bigg) \bigg).
\end{eqnarray}
Finally, combining the above argument with the analysis of \eqref{Sufficient condition for output continuous} completes the proof.

\section{Auxiliary Materials}

According to \cite[Lemma 5]{barron1993universal} and \cite{li2025transformers}, we can directly obtain
\begin{lemma}
\label{lem:approx-by-phi}
Let $g : [-a,a] \to \R$ be any function satisfying $|g'(z)| \leq a$ and $|g(-1)| \leq 1$. Then, for any $\tau_k > 0$, there exists a function $\gamma(\cdot)$ such that
\begin{eqnarray}
\label{eq:phi-approx}
\big|
g(z) - \int \gamma(t)\, \phi\!\big(\tau_k (z+t)\big)\,\mathrm{d}t
\big|
\;\leq\;
3a\,\delta_{\tau_k},
\end{eqnarray}
where $\delta_{\tau_k}$ is defined in~\eqref{The definition of delta function}, $\phi(\cdot) $ is defined in \eqref{smoothed_step_function} and $\int |\gamma(t)|\,\mathrm{d}t \;\leq\; 3$.
\end{lemma}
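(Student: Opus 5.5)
The approach is Barron's step-function representation: write $g$ as a superposition of unit steps, then replace each step by the smoothed step $\phi(\tau_k\,\cdot)$ from \eqref{smoothed_step_function}, paying $\delta_{\tau_k}$ from \eqref{The definition of delta function} per unit of mass. I have not reconciled the mass bound $\int|\gamma|\le 3$ with the hypotheses as literally stated (Step 4).

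\textbf{Step 1 (exact step representation).} Since $|g'|\le a$, $g$ is Lipschitz and hence absolutely continuous on $[-a,a]$. Fix $\epsilon\in(0,\tfrac12]$. For every $z\in[-a,a]$,
\[
g(z)=g(-a)\,\mathbf 1(z+a+\epsilon>0)+\int_{-a}^{a} g'(s)\,\mathbf 1(z-s>0)\,\mathrm ds .
\]
Substituting $t=-s$, this is $\int \gamma(t)\,\mathbf 1(z+t>0)\,\mathrm dt$ with
\[
\gamma(t)=g(-a)\,\delta_{a+\epsilon}(t)+g'(-t)\,\mathbf 1_{[-a,a]}(t),
\]
understood as a signed measure. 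The atom is deliberately placed at $t=a+\epsilon$ rather than $t=a$. Then for every $z\in[-a,a]$ its argument satisfies $z+a+\epsilon\ge\epsilon$, so the atom never sits in the transition window of width $\epsilon$.

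\textbf{Step 2 (smoothing error).} Replace every $\mathbf 1(z+t>0)$ by $\phi(\tau_k(z+t))$ and split the error.
\begin{itemize}
\item For the atom, $|z+a+\epsilon|\ge\epsilon$, so its error is at most $|g(-a)|\sup_{|x|\ge\epsilon}|\phi(\tau_k x)-\mathbf 1(x>0)|$.
\item For the absolutely continuous part, split $t$ according to whether $|z+t|\ge\epsilon$ or $|z+t|<\epsilon$.
  \begin{itemize}
  \item On $\{|z+t|\ge\epsilon\}$ the error is at most $\int|g'|\cdot\sup_{|x|\ge\epsilon}|\phi(\tau_k x)-\mathbf 1(x>0)|$.
  \item On $\{|z+t|<\epsilon\}$ the pointwise error is at most $1$, because $0\le\phi\le1$. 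This set has length $2\epsilon$ and the density satisfies $|g'|\le a$, so the contribution is at most $2\epsilon a$.
  \end{itemize}
\end{itemize}
Adding the pieces gives
\[
\bigl|g(z)-\textstyle\int\gamma(t)\,\phi(\tau_k(z+t))\,\mathrm dt\bigr|\le 2\epsilon a+\Bigl(|g(-a)|+\int|g'|\Bigr)\sup_{|x|\ge\epsilon}\bigl|\phi(\tau_k x)-\mathbf 1(x>0)\bigr| .
\]
If the total mass satisfies $|g(-a)|+\int|g'|\le 3a$, the right-hand side is at most $3a\bigl(2\epsilon+\sup_{|x|\ge\epsilon}|\cdots|\bigr)$. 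Taking the infimum over $\epsilon\in(0,\tfrac12]$ then yields $3a\,\delta_{\tau_k}$, which is \eqref{eq:phi-approx}.

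\textbf{Step 3 (making the measure a function).} The atom can be replaced by a narrow uniform density on $[a+\epsilon,a+2\epsilon]$ with the same mass. On this interval the argument of each step satisfies $z+t\ge\epsilon$, so the Step 2 estimate is unchanged. The result is a genuine function $\gamma$ with the same $\int|\gamma|$.

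\textbf{Step 4 (mass bound; the main obstacle).} The total mass is $|g(-a)|+\int_{-a}^a|g'|$. Two facts control it.
\begin{itemize}
\item By the hypothesis, $\int_{-a}^a|g'|\le 2a^2$.
\item The hypothesis $|g(-1)|\le1$ bounds $|g(-a)|\le 1+a|a-1|$, provided $-1\in[-a,a]$.
\end{itemize}
In the normalized regime the paper actually uses, $a\le1$ (so that $-1\in[-a,a]$ forces $a=1$), these facts give the total mass $\le 1+2=3$. This delivers both $\int|\gamma|\le3$ and $3a\,\delta_{\tau_k}$ with $a=1$.

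For general $a$ these two facts alone do not give mass $\le3$. What goes through is a rescaled version: apply the argument to $h(u)=g(au)$ on $[-1,1]$, which yields the error bound $3a\,\delta_{\tau_k}$ with $\int|\gamma|\lesssim a$ after the change of variables. I expect this to be the delicate point. I would first check which normalization of $g$ \cite[Lemma 5]{barron1993universal} assumes, and then state the mass constant in the form $3$ exactly when that normalization holds.
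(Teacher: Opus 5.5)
\textbf{Steps 1--3 (correct at $a=1$, one repair needed).} The paper gives no argument of its own for this lemma; it only invokes Barron's Lemma~5 and the construction of Li et al. Your Steps 1--3 are exactly that Barron step-function argument. At $a=1$ they prove the statement, since the mass is $|g(-1)|+\int_{-1}^{1}|g'|\le 3$.

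The repair concerns the infimum. Your $\gamma$ depends on $\epsilon$ through the position of the atom, so ``take the infimum over $\epsilon$'' is only valid if the infimum defining $\delta_{\tau_k}$ is attained. For this ramp $\phi$ it is not attained when $\tau_k<2$: there $\delta_{\tau_k}=\tfrac12$ is approached only as $\epsilon\to 0$. For $\tau_k<\tfrac23$ your bound $2\epsilon+3(\tfrac12-\tau_k\epsilon)$ exceeds $\tfrac32=3\delta_{\tau_k}$ for every admissible $\epsilon$.

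The fix is to place the atom (or the uniform block of Step 3) at $t\ge a+1/(2\tau_k)$. Because $\phi\equiv 1$ on $[\tfrac12,\infty)$, the atom is then reproduced exactly and $\gamma$ no longer depends on $\epsilon$. The error is at most $2\epsilon a+\int|g'|\,\sup_{|x|\ge\epsilon}|\phi(\tau_k x)-\mathbf{1}(x>0)|$ for all $\epsilon$ at once, so the infimum can legitimately be taken.

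\textbf{Step 4 (the statement itself fails for $a\neq 1$).} This is not a missing idea on your part: the statement as printed is false for every $a\neq 1$. For $a<1$, the hypothesis $|g(-1)|\le 1$ refers to a point outside $[-a,a]$.

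For $a>1$, note that $0\le\phi\le 1$ forces $\bigl|\int\gamma(t)\,\phi(\tau_k(z+t))\,dt\bigr|\le\int|\gamma|\le 3$ for every $z$. Take $g(z)=1+a(z+1)$, which satisfies $|g'|=a$ and $g(-1)=1$, but has $g(a)=a^2+a+1$. The uniform error is therefore at least $(a-1)(a+2)>0$. This exceeds $3a\delta_{\tau_k}=3a/\tau_k$ (valid for $\tau_k\ge 2$) once $\tau_k$ is large.

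Your rescaled fallback fails for the same reason: $g(a)\sim a^2$ cannot be matched with $\int|\gamma|\lesssim a$. The algebra behind the fallback is also wrong. The function $h(u)=g(au)$ only satisfies $|h'|\le a^2$, and the total mass $|g(-a)|+\int_{-a}^{a}|g'|$ of the representing measure is unchanged by the substitution $t\mapsto at$.

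\textbf{What you should state instead.} State the result at $a=1$, where your Steps 1--3 suffice. Alternatively, state a homogeneous form that your argument does prove: for $g:[-1,1]\to\mathbb{R}$ with $|g'|\le a$ and $|g(-1)|\le a$, you get error $3a\delta_{\tau_k}$ with $\int|\gamma|\le 3a$.

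Your premise that the paper works with $a\le 1$ is also not right. The natural use is on one-dimensional slices of the radius-$\sqrt{B}$ ball, where $|g'|\le 1$ and $|g(-\sqrt{B})|\le\sqrt{B}$. Such slices can require mass of order $\sqrt{B}$, not $3$.
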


Based on \Cref{lem:approx-by-phi} with \cite[Lemma 5]{li2025transformers}, we can further derive
\begin{lemma}
\label{approximation theorem in the continuous case}
For any $g_k \in \calG$, any $\vz$ obeying $\|\vz\|_2 \le \sqrt{B}$, any positive integer $m$, and any $\tau_k > 2$, we have
\begin{eqnarray}
    \label{The upper bound of approximation term g_k}
    \big| {g_k}(\vz) - {g_k}({\bf 0}) - g^{\mathrm{approx}}_k (\vz) \big|
    \le C_{g_k} \sqrt{B} \bigg( 3\delta_{\tau_k} + \frac{\pi}{m} \bigg),
\end{eqnarray}
where for any $g_k \in \calG$ we define
\begin{eqnarray}
    \label{The definition of approximate function g}
    g^{\mathrm{approx}}_k (\vz) := \sum_{i=1}^m \int_{\boldsymbol{\omega} \ne {\bf 0}} \int_{t} A_{g_k}(i,t,\boldsymbol{\omega})\,\phi\! \big( \tau_k(\|\boldsymbol{\omega}\|_2^{-1} \boldsymbol{\omega}^\top \vz - t ) \big) \Lambda(i, \mathrm{d}t, \mathrm{d}\boldsymbol{\omega}),
\end{eqnarray}
where $\phi(\cdot) $ is defined in \eqref{smoothed_step_function}. The measure $\Lambda(i, \mathrm{d}t, \mathrm{d}\boldsymbol{\omega})$ is a probability measure on $[m] \times \R \times \R^{nD}$ independent of $g_k$, while $A_{g_k}(i,t,\boldsymbol{\omega})$ denotes some weight function depending on $g_k,i,t,\boldsymbol{\omega}$ such that
\begin{eqnarray}
    \label{Upper bound of constant A}
    |A_{g_k}(i,t,\boldsymbol{\omega})|&\!\!\!\!\le\!\!\!\!& 3m\big( C_{g_k} - \sup_{\wt g_k \in \calG} |\wt g_k({\bf 0})| \big),\\
    \label{Upper bound of constant A expectation}
    E_{(i,t,\boldsymbol{\omega})\sim\Lambda}[A_{g_k}(i,t,\boldsymbol{\omega})]
    &\!\!\!\!\le\!\!\!\!&
    3\big( C_{g_k} - \sup_{\wt g_k \in \calG} |\wt g_k({\bf 0})| \big).
\end{eqnarray}
\end{lemma}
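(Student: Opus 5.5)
The plan is to follow Barron's construction as adapted in \cite{li2025transformers}, tracking the radius $\sqrt{B}$ explicitly. I would first reduce to one-dimensional ridge functions, then approximate each ridge function by a mixture of smoothed steps using \Cref{lem:approx-by-phi}, and finally remove the dependence of the sampling measure on $g_k$ by quantizing the Fourier phase into $m$ bins.

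\textbf{Step 1 (ridge decomposition).} Write $F_{g_k}(\boldsymbol{\omega})=|F_{g_k}(\boldsymbol{\omega})|e^{j\theta_{g_k}(\boldsymbol{\omega})}$ and use the identity
\[
g_k(\vz)-g_k({\bf 0})=\int_{\boldsymbol{\omega}\neq{\bf 0}}\big(\cos(\boldsymbol{\omega}^\top\vz+\theta_{g_k}(\boldsymbol{\omega}))-\cos\theta_{g_k}(\boldsymbol{\omega})\big)|F_{g_k}(\boldsymbol{\omega})|\,d\boldsymbol{\omega}.
\]
Set $u=\|\boldsymbol{\omega}\|_2^{-1}\boldsymbol{\omega}^\top\vz$, so that $|u|\le\sqrt{B}$ whenever $\|\vz\|_2\le\sqrt{B}$. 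Define
\[
h_{\boldsymbol{\omega},\theta}(u)=\big(\cos(\|\boldsymbol{\omega}\|_2u+\theta)-\cos\theta\big)/\|\boldsymbol{\omega}\|_2 .
\]
Then $h_{\boldsymbol{\omega},\theta}(0)=0$ and $|h_{\boldsymbol{\omega},\theta}'|\le1$. Since $\|\boldsymbol{\omega}\|_2|F_{g_k}(\boldsymbol{\omega})|=\|F_{\nabla g_k}(\boldsymbol{\omega})\|_2$, we obtain
\[
g_k(\vz)-g_k({\bf 0})=\int h_{\boldsymbol{\omega},\theta_{g_k}(\boldsymbol{\omega})}(u)\,\|F_{\nabla g_k}(\boldsymbol{\omega})\|_2\,d\boldsymbol{\omega}.
\]

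\textbf{Step 2 (phase quantization, the $\pi/m$ term).} Partition $[0,2\pi)$ into $m$ arcs with centers $\theta_1,\dots,\theta_m$. Replace $\theta_{g_k}(\boldsymbol{\omega})$ by the center $\theta_{i(\boldsymbol{\omega})}$ of its arc. Since $|\partial_\theta h_{\boldsymbol{\omega},\theta}(u)|\le|u|\le\sqrt{B}$ and the phase moves by at most $\pi/m$, the total error is at most
\[
\frac{\pi}{m}\sqrt{B}\int\|F_{\nabla g_k}\|_2\,d\boldsymbol{\omega}\le \frac{\pi}{m}\sqrt{B}\,\wt C_{g_k}.
\]

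\textbf{Step 3 (smoothed steps, the $3\delta_{\tau_k}$ term).} For each fixed pair $(i,\boldsymbol{\omega})$, apply \Cref{lem:approx-by-phi} on $[-\sqrt{B},\sqrt{B}]$ to $h_{\boldsymbol{\omega},\theta_i}$, rescaled by $\sqrt{B}$ so that the lemma's hypotheses hold with $a=\sqrt{B}$. This gives $\gamma_{i,\boldsymbol{\omega}}$ with $\int|\gamma_{i,\boldsymbol{\omega}}|\le3$ and
\[
\Big|h_{\boldsymbol{\omega},\theta_i}(u)-\int\gamma_{i,\boldsymbol{\omega}}(t)\,\phi(\tau_k(u-t))\,dt\Big|\le3\sqrt{B}\,\delta_{\tau_k}.
\]
The sign convention $u+t$ versus $u-t$ is handled by reflecting $t$. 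Integrating this bound against $\|F_{\nabla g_k}\|_2$ contributes $3\sqrt{B}\,\delta_{\tau_k}\,\wt C_{g_k}$, so Steps 2 and 3 together yield \eqref{The upper bound of approximation term g_k}.

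\textbf{Step 4 (a $g_k$-independent measure).} Let $p(\boldsymbol{\omega})=\sup_{\wt g\in\calG}\|F_{\nabla\wt g}(\boldsymbol{\omega})\|_2/\wt C_{g_k}$. This is a probability density by \eqref{C1 definition} and \eqref{The definition of C minus gk}. Define
\[
\Lambda(i,dt,d\boldsymbol{\omega})=\frac1m\,p(\boldsymbol{\omega})\,\frac{|\gamma_{i,\boldsymbol{\omega}}(t)|}{\int|\gamma_{i,\boldsymbol{\omega}}|}\,dt\,d\boldsymbol{\omega},
\]
which does not depend on $g_k$. The weight is then forced to be
\[
A_{g_k}(i,t,\boldsymbol{\omega})=m\,\mathbbm{1}\{i(\boldsymbol{\omega})=i\}\,\frac{\|F_{\nabla g_k}(\boldsymbol{\omega})\|_2}{p(\boldsymbol{\omega})}\,\sign(\gamma_{i,\boldsymbol{\omega}}(t))\int|\gamma_{i,\boldsymbol{\omega}}|.
\]
Because $\|F_{\nabla g_k}\|_2\le p\,\wt C_{g_k}$ and $\int|\gamma_{i,\boldsymbol{\omega}}|\le3$, this gives $|A_{g_k}|\le3m\,\wt C_{g_k}$, which is \eqref{Upper bound of constant A}. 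In the expectation over $\Lambda$, only the active bin $i=i(\boldsymbol{\omega})$ contributes, and the factors $m$ and $1/m$ cancel. Hence $\E_\Lambda|A_{g_k}|\le3\int\|F_{\nabla g_k}\|_2\le3\wt C_{g_k}$, which gives \eqref{Upper bound of constant A expectation}.

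\textbf{Main obstacle.} I expect the delicate point to be Step 4: making $\Lambda$ independent of $g_k$ while keeping $\sup|A_{g_k}|$ only $O(m)$. This is exactly what forces the phase quantization and hence the $\pi/m$ term. A secondary check is the $\sqrt{B}$ scaling in Step 3. The derivative bound $1$ on an interval of half-length $\sqrt{B}$ must translate into error $3\sqrt{B}\,\delta_{\tau_k}$ and not $3B\,\delta_{\tau_k}$, and this depends on how \Cref{lem:approx-by-phi} is normalized after rescaling.
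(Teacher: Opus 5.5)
\textbf{There is a genuine gap in Step 3.} Your Steps 1, 2 and 4 are the right construction: the ridge decomposition, quantizing the phase into $m$ bins, and sampling $\boldsymbol{\omega}$ from the $g_k$-independent density $p(\boldsymbol{\omega})\propto\sup_{\wt g\in\calG}\|F_{\nabla\wt g}(\boldsymbol{\omega})\|_2$. They are what the paper's one-line appeal to \Cref{lem:approx-by-phi} and \cite[Lemma 5]{li2025transformers} amounts to. The argument breaks in Step 3, at exactly the rescaling you left unchecked. The culprit is the mass bound $\int|\gamma_{i,\boldsymbol{\omega}}|\le 3$, not the error.

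Since $\phi$ is nondecreasing with range $[0,1]$, every mixture $u\mapsto\int\gamma(t)\,\phi(\tau_k(u-t))\,dt$ has total variation at most $\int|\gamma|$. Now take a bin center $\theta_i$ near $-\pi/2$ and $\|\boldsymbol{\omega}\|_2\sqrt{B}\ll 1$. Then $h_{\boldsymbol{\omega},\theta_i}(u)\approx-\sin(\theta_i)\,u$ on $[-\sqrt{B},\sqrt{B}]$, whose increment is about $2\sqrt{B}$. Your claim would force $2\sqrt{B}\lesssim 3+6\sqrt{B}\,\delta_{\tau_k}$, which is false for large $\tau_k$ once $B$ exceeds a small constant. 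Invoking \Cref{lem:approx-by-phi} with $a=\sqrt{B}$ does not rescue this. As stated (mass at most $3$ for every $a$), that lemma is itself refuted by $g(z)=z$ on $[-a,a]$ with $a>2$. So citing it, as both you and the paper do, leaves the step unproved.

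\textbf{What a correct rescaling gives.} Set $\tilde h(s)=h_{\boldsymbol{\omega},\theta_i}(\sqrt{B}s)/\sqrt{B}$ on $[-1,1]$, so that $|\tilde h'|\le 1$ and $\tilde h(0)=0$. Apply the unit-interval Barron lemma with sharpness $\tau_k\sqrt{B}$, and substitute back. This produces $\gamma_{i,\boldsymbol{\omega}}(t)=\tilde\gamma(t/\sqrt{B})$ with the same features $\phi(\tau_k(u-t))$. The error is $3\sqrt{B}\,\delta_{\tau_k\sqrt{B}}\le 3\sqrt{B}\,\delta_{\tau_k}$ for $B\ge 1$, so your worry about $3B\delta_{\tau_k}$ is unfounded. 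But the mass is only $\int|\gamma_{i,\boldsymbol{\omega}}|\le 3\sqrt{B}$, matching the step-function count $|h(-\sqrt{B})|+\int_{-\sqrt{B}}^{\sqrt{B}}|h'|\le 3\sqrt{B}$. Step 4 then gives $|A_{g_k}|\le 3m\sqrt{B}\,\wt C_{g_k}$ and $\E_\Lambda|A_{g_k}|\le 3\sqrt{B}\,\wt C_{g_k}$.

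\textbf{The $\sqrt{B}$ factor cannot be removed.} Take $\calG=\{g_k\}$ with $g_k(\vz)=\int\sin(\boldsymbol{\omega}^\top\vz)\rho(\boldsymbol{\omega})\,d\boldsymbol{\omega}$, where $\rho\ge 0$ is concentrated near $\omega_0\ve_1$ and $\omega_0\sqrt{B}\ll 1$. Then $g_k({\bf 0})=0$, $C_{g_k}=\wt C_{g_k}$, and $g_k(\vz)\approx\wt C_{g_k}\,\ve_1^\top\vz$ on the ball. Under the claimed weight bound, $|g_k^{\mathrm{approx}}|\le\E_\Lambda|A_{g_k}|\le 3\wt C_{g_k}$. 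Yet $g_k(\sqrt{B}\ve_1)-g_k({\bf 0})\approx\wt C_{g_k}\sqrt{B}$, while the right-hand side of \eqref{The upper bound of approximation term g_k} is a small fraction of $\wt C_{g_k}\sqrt{B}$ for large $\tau_k$ and $m$. The literal bound $|A_{g_k}|\le 3m\wt C_{g_k}$ fails the same way with $m\ge 4$ fixed and $B\to\infty$.

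So \eqref{Upper bound of constant A}--\eqref{Upper bound of constant A expectation} are not provable for large $B$ as written. They need an extra factor $\sqrt{B}$, which then propagates into the coefficient bound $|g_k({\bf 0})|+\frac{1}{P_k}\sum_j|a^\star_{k,j}|\lesssim 4C_{g_k}$ of \Cref{feature_construction_of_functions}.
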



%

\end{document}